\newcommand{\ket}[1]{| #1 \rangle}
\newcommand{\bra}[1]{\langle #1 |}
\newcommand{\proj}[2]{| #1 \rangle\!\langle #2 |}
\newcommand{\id}{\ensuremath{\mathds{1}}}
\def\beq{\begin{equation}}
\def\eeq{\end{equation}}
\def\bq{\begin{quote}}
\def\eq{\end{quote}}
\def\ben{\begin{enumerate}}
\def\een{\end{enumerate}}
\def\bit{\begin{itemize}}
\def\eit{\end{itemize}}
\def\ra{\rightarrow}
\def\lb{\left(}
\def\rb{\right)}
\def\lset{\lbrace}
\def\rset{\rbrace}
\def\lk{\left\langle}
\def\rk{\right\rangle}
\def\r|{\right|}
\def\lbr{\left[}
\def\rbr{\right]}
\def\ident{\textnormal{id}}
\def\one{\id}
\newcommand\C{\mathbbm{C}}
\newcommand\R{\mathbbm{R}}
\newcommand\N{\mathbbm{N}}
\newcommand\M{\mathcal{M}}
\newcommand{\flip}{\mathbbm{F}}
\theoremstyle{plain}
\newtheorem{thm}{Theorem}[section]
\newtheorem{lem}{Lemma}[section]
\newtheorem{cor}{Corollary}[section]
\newtheorem{defn}{Definition}[section]
\newtheorem{conj}{Conjecture}[section]
\theoremstyle{definition}
\begin{document}
\title{\vspace{-1.0cm}{\textbf{Decomposability of Linear Maps under Tensor Powers}}}

\author{Alexander M\"uller-Hermes\thanks{muellerh@posteo.net, muellerh@math.ku.dk}~}

\affil{\small{QMATH, Department of Mathematical Sciences, University of Copenhagen, Universitetsparken 5, 2100 Copenhagen, Denmark}}

\maketitle
\date{\today}

\begin{abstract}

Both completely positive and completely copositive maps stay decomposable under tensor powers, i.e~under tensoring the linear map with itself. But are there other examples of maps with this property? We show that this is not the case: Any decomposable map, that is neither completely positive nor completely copositive, will lose decomposability eventually after taking enough tensor powers. Moreover, we establish explicit bounds to quantify when this happens. To prove these results we use a symmetrization technique from the theory of entanglement distillation, and analyze when certain symmetric maps become non-decomposable after taking tensor powers. Finally, we apply our results to construct new examples of non-decomposable positive maps, and establish a connection to the PPT squared conjecture. 

\end{abstract}

\tableofcontents

\section{Introduction and main results}

Let $\M_d$ denote the set of complex $d\times d$ matrices, and $\M^+_d$ the cone of positive semidefinite matrices (in the following simply called ``positive matrices''). A linear map $P:\M_{d_A}\ra \M_{d_B}$ for dimensions $d_A,d_B\in \N$ is called \emph{positive} iff $P(\M^+_{d_A})\subseteq\M^+_{d_B}$. As in \cite{muller2016positivity} we will call a linear map $P$ \emph{n-tensor-stable positive} for $n\in\N$ iff $P^{\otimes n}$ is positive, and \emph{tensor-stable positive} iff $P^{\otimes n}$ is positive for any $n\in\N$. Examples of tensor-stable positive maps $P:\M_{d_A}\ra \M_{d_B}$ are the \emph{completely positive maps}, i.e. where $(\ident_k\otimes P)\geq 0$ for any $k\in\N$, and the \emph{completely copositive maps}, i.e. where $(\vartheta_k\otimes P)\geq 0$. Here, $\vartheta_k:\M_k\ra\M_k$ denotes the matrix transposition with respect to some fixed basis. The central question of \cite{muller2016positivity} is whether these are the only classes of tensor-stable positive maps. This problem is still unsolved. 

A linear map $P:\M_{d_A}\ra \M_{d_B}$ is called \emph{decomposable} iff it is of the form $P = T_1 + \vartheta_{d_2}\circ T_2$ for completely positive maps $T_1,T_2 :\M_{d_A}\ra\M_{d_B}$. Decomposable maps are clearly positive, and there are many examples of such maps that are neither completely positive nor completely copositive. It is therefore a natural question, how decomposability behaves under tensor products. We make the following definition:   

\begin{defn}[Tensor-stable decomposability]
A linear map $P:\M_{d_A}\ra\M_{d_B}$ is called 
\begin{itemize}
\item \emph{n-tensor-stable decomposable} for $n\in\N$ iff $P^{\otimes n}$ is decomposable. 
\item \emph{tensor-stable decomposable} iff $P^{\otimes n}$ is decomposable for any $n\in \N$.
\end{itemize}
\end{defn}  
The notion of $2$-tensor-stable decomposability has been first considered in \cite{filippov2017positive}, where some explicit examples for $d_A=d_B=2$ have been studied. Our main result is 

\begin{thm}
If $P:\M_{d_A}\ra\M_{d_B}$ is tensor-stable decomposable, then $P$ is completely positive or completely copositive. 
\label{thm:NoTensStabDec}
\end{thm}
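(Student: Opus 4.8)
I will prove the contrapositive: if $P:\M_{d_A}\to\M_{d_B}$ is decomposable but neither completely positive nor completely copositive, then $P^{\otimes n}$ fails to be decomposable once $n$ is large enough (tracking constants along the way to obtain the quantitative statement). Passing to Choi matrices, $P^{\otimes n}$ is decomposable iff $C_P^{\otimes n}$, suitably reordered, lies in the closed convex cone $\mathcal{DEC}$ of operators $X_1+X_2^{\Gamma}$ with $X_1,X_2\geq 0$ (here $\Gamma$ is the partial transpose over all $n$ input copies); by cone duality this fails precisely when there is a PPT witness, i.e. an operator $W_n\geq 0$ with $W_n^{\Gamma}\geq 0$ and $\langle C_P^{\otimes n},W_n\rangle<0$. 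The task is thus to construct such a $W_n$ for some $n$. I first record the stability facts I need: tensor-stable decomposability is preserved (i) under pre- and post-composition of $P$ with completely positive maps, since $\Lambda_2\circ(\,\cdot\,)\circ\Lambda_1$ sends decomposable maps to decomposable maps and commutes with taking tensor powers; (ii) under convex combinations of unitarily conjugated copies $\mathrm{Ad}_{V_i}\circ P\circ\mathrm{Ad}_{W_i}$, because the $n$-th tensor power of such a combination expands as a convex combination of unitary conjugates of $P^{\otimes n}$; and (iii) under padding, so that without loss of generality $d_A=d_B=d$. Moreover, since $P$ is not completely positive there is a unit vector $\psi$ with $\langle\psi|C_P|\psi\rangle<0$, and compressing $C_P$ to the supports of the Schmidt vectors of $\psi$ (a completely positive sandwich) we may assume $\psi$ has full Schmidt rank; symmetrically, since $P$ is not completely copositive there is a full-Schmidt-rank $\phi$ with $\langle\phi|C_P^{\Gamma}|\phi\rangle<0$.

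The core of the argument is a symmetrization borrowed from entanglement-distillation theory. Averaging the Choi matrix over the orthogonal group, $C_P\mapsto\int_{O(d)}(O\otimes O)\,C_P\,(O\otimes O)^{\dagger}\,\mathrm{d}O$ — equivalently $P\mapsto\int_{O(d)}\mathrm{Ad}_O\circ P\circ\mathrm{Ad}_{O^{-1}}\,\mathrm{d}O$ — preserves tensor-stable decomposability by (ii) and projects $C_P$ into the three-dimensional commutant spanned by $\id$, the swap $F$, and $|\Omega\rangle\langle\Omega|$. The resulting maps form the Werner-type family $P_{a,b,c}(X)=a\,\mathrm{Tr}(X)\,\id+b\,X^{T}+c\,X$, whose completely positive, completely copositive and decomposable regions are each cut out by explicit eigenvalue inequalities on the symmetric and antisymmetric subspaces, the prototypical ``decomposable but neither'' member being $\ident+\vartheta$. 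The subtle point is that a naive twirl can wash out non-positivity; to prevent this one must first use the filtering freedom from the preliminary reductions to align the negativity of $C_P$ and that of $C_P^{\Gamma}$ with the $O(d)$-fixed directions — e.g. arranging $\langle\Omega|C_P|\Omega\rangle<0$ together with a negative eigenvalue of $C_P^{\Gamma}$ in the swap sector — so that the symmetrized map still lies strictly between the completely positive and completely copositive cones. Establishing that such a simultaneous alignment is always attainable, possibly only after first tensoring $P$ with an auxiliary completely positive or copositive map and re-twirling, is what I expect to be the main obstacle.

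It then remains to prove the theorem for a map $\Phi=P_{a,b,c}$ that is decomposable but neither completely positive nor completely copositive. Here $C_\Phi^{\otimes n}$ is invariant under $O(d)^{\times n}$ (an independent orthogonal rotation on each copy) and under the permutation group $S_n$ of the copies, and twirling a candidate witness over $O(d)^{\times n}\rtimes S_n$ preserves the PPT property, since $O\otimes O$ commutes with partial transposition for real $O$. So the search for $W_n$ reduces to the invariant operators, a space of dimension only $\binom{n+2}{2}$, spanned by symmetrizations of tensor products of $\id$, $F$ and $|\Omega\rangle\langle\Omega|$; in this basis partial transposition acts by the involution $F\leftrightarrow|\Omega\rangle\langle\Omega|$ on multisets, and ``$\Phi^{\otimes n}$ decomposable'' becomes a linear program of size polynomial in $n$. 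One then checks that the copositivity-violating direction of $\Phi$ contributes a factor strictly larger in modulus than the completely-positive contribution, so that its $n$-th power eventually makes the program feasible — producing an explicit Werner-type witness $W_n$, assembled from symmetric and antisymmetric projectors as in distillation arguments, with $\langle C_\Phi^{\otimes n},W_n\rangle<0$. The value of $n$ at which the program flips yields the explicit bounds, and undoing the reductions transfers the conclusion to arbitrary $P$.

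To summarize where the difficulty lies: the duality setup and the final representation-theoretic computation are, respectively, routine cone bookkeeping and an explicit if intricate calculation; the crux is the lossless symmetrization — proving that every decomposable, non-completely-positive, non-completely-copositive map can be brought, using only completely positive sandwiches, convex mixtures and orthogonal twirls, to a symmetric map that still fails both complete positivity and complete copositivity.
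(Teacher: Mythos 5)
Your high-level strategy (symmetrize to a group-invariant family, then search for a PPT witness by a linear program over the invariant operators) is the same as the paper's, but your proof has a genuine gap at exactly the point you flag yourself: the ``lossless symmetrization''. Twirling $C_P$ over $O(d)$ projects it onto the span of $\{\one, \flip_d, \omega_d\}$, and there is no guarantee that this projection remains outside both the completely positive and the completely copositive cones; a single filtering $\mathrm{Ad}_A\otimes\mathrm{Ad}_B$ can be chosen to make $\bracket{\Omega}{C_P}{\Omega}<0$ (witnessing non-CP after the twirl) \emph{or} to align the negativity of $C_P^{\Gamma}$ with the swap sector, but arranging both simultaneously with one filter is not established, and you give no argument for it. As written, the reduction to the family $P_{a,b,c}$ therefore does not go through, and the subsequent LP analysis has no input.

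The paper avoids this obstacle entirely rather than solving it. It never asks for a single symmetrized map that is both non-CP and non-co-CP. Instead, using that $P$ is not CP it produces (via a CP filter $S_2$, a $UU$-twirl, and a transpose) a Werner map $\vartheta_{d}\circ W_{p_2}$ with $p_2<1/2$, and using that $\vartheta\circ P$ is not CP it produces, by a \emph{different} filter $S_1$, a Werner map $W_{p_1}$ with $p_1<1/2$; it then applies the first reduction to $m$ of the tensor factors of $P^{\otimes N}$ and the second to the other $n$ factors. Decomposability of $P^{\otimes N}$ thus forces decomposability of the mixed tensor power $W_{p_1}^{\otimes n}\otimes(\vartheta_d\circ W_{p_2})^{\otimes m}$ with both $p_i<1/2$, and the witness is then an explicit symmetric PPT operator $H_Q$ built from $P_{\mathrm{sym}}^{\otimes n}$, $P_{\mathrm{asym}}^{\otimes n}$ and their partial transposes. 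Two further remarks: (i) your reduction ``compress to the Schmidt supports of $\psi$ and of $\phi$'' suffers from the same conflict (the two compressions need not be compatible); (ii) even granting the symmetrization, you do not actually exhibit the witness $W_n$ or verify feasibility of the LP --- you assert that ``one then checks'' the copositivity-violating direction wins, whereas this is a concrete computation (the paper's choice of $Q$ and the verification that $QV^n_d$ is entrywise positive) that must be carried out. To repair your proof you would either need to prove the simultaneous-alignment claim, or adopt the paper's device of applying different local reductions to different tensor factors.
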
 

We will give two different proofs for this result: One qualitative proof based on ideas from entanglement distillation, and a quantitative proof giving explicit bounds on $n\in\N$ such that a given decomposable map is $n$-tensor-stable decomposable. These bounds will be formulated in terms of the following quantity. 

\begin{defn}
Given a non-zero positive map $P:\M_{d_A}\ra\M_{d_B}$ we define 
\[
\mu\lb P\rb := \inf\Big\lset \frac{\text{Tr}\lbr C_P C_T\rbr}{\text{Tr}\lbr (T^{*}\circ P)(\one_{d_A})\rbr} ~\Big{|}~ T:\M_{d_A}\ra\M_{d_B} \text{completely positive}, (T^*\circ P)(\one_{d_A})\neq 0\Big\rset
\]
where $C_P$ and $C_T$ denote the Choi matrices of $P$ and $T$ respectively, see \eqref{equ:ChoiMat}, and $T^*$ denotes the adjoint of $T$ with respect to the Hilbert-Schmidt inner product\footnote{$\lk A,B\rk := \text{Tr}\lbr A^\dagger B\rbr$ for matrices $A,B\in\M_d$}.
\label{defn:mu}
\end{defn}

We will see in Section \ref{sec:Symmetrizing} that $\mu\lb P\rb$ is defined for any non-zero positive map $P:\M_{d_A}\ra\M_{d_B}$, and that $\mu\lb P\rb\in \lbr -1, 1/{d_A}\rbr$ , with $\mu\lb P\rb\geq 0$ iff $P$ is completely positive. We will show

\begin{thm}[Quantitative bound]
Let $P:\M_{d_A}\ra\M_{d_B}$ be a non-zero positive map, and set $d:=\min(d_A,d_B)$. If $P$ is $n$-tensor-stable decomposable, then 
\[
\max\lb\mu\lb P\rb,\mu\lb\vartheta_{d_B}\circ P\rb\rb\geq \lb\frac{2}{1+\sqrt[\left\lfloor \frac{n}{2}\right\rfloor~]{1+\sqrt{\frac{2d}{d+1}}}} - 1\rb,
\]
where $\lfloor x\rfloor$ denotes the largest integer less than $x$. 
\label{thm:QuantBound}
\end{thm}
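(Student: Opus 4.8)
\medskip
\noindent\emph{Proof sketch.} The plan is to dualise decomposability into a statement about PPT witnesses, reduce by a distillation-style twirl to one explicit three-parameter family of maps, and then settle decomposability of the tensor powers of that family by a polyhedral computation. For the first step, recall that a map $Q$ is decomposable exactly when its Choi matrix lies in the cone $\cP+\Gamma(\cP)$, where $\cP$ is the cone of positive semidefinite matrices and $\Gamma$ is the partial transpose on the output system; this cone is dual, under the Hilbert--Schmidt pairing, to the cone of PPT operators. Hence $P$ is $n$-tensor-stable decomposable if and only if $\text{Tr}\bigl[C_P^{\otimes n}\,W\bigr]\ge 0$ for every PPT operator $W$ on the $n$-fold input--output space (with tensor legs paired so that $C_P^{\otimes n}$ equals $C_{P^{\otimes n}}$ up to a fixed reordering), so it suffices to exhibit, whenever the asserted inequality fails, one PPT operator $W$ with $\text{Tr}\bigl[C_P^{\otimes n}W\bigr]<0$.

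Second, I would symmetrise. The decomposable cone is closed, convex, and stable under pre- and post-composition with completely positive maps and with the transpose, so $n$-tensor-stable decomposability is preserved ``factorwise'' and, in particular, when $P$ is replaced by any average of maps $\mathrm{Ad}_{O}\circ(S_2\circ P\circ S_1)\circ\mathrm{Ad}_{O^{T}}$ over $O$ in the orthogonal group $O(d)$, where $S_1,S_2$ are completely positive maps realising an isometric $\C^d$ inside $\C^{d_A}$ and $\C^{d_B}$. Such an orthogonal twirl sends $P$ into the three-parameter family $X\mapsto aX+bX^{T}+c\,\text{Tr}[X]\,\one$, whose Choi matrices lie in $\mathrm{span}\{\one,\flip,\proj{\Omega}{\Omega}\}$. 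The role of Section~\ref{sec:Symmetrizing}, which I would invoke here, is that after optimising $S_1,S_2$ and the completely positive ``dressings'' the two projective coordinates of this symmetric map are pinned down in closed form by $\mu(P)$ and $\mu(\vartheta_{d_B}\circ P)$ --- the infimum over completely positive $T$ in Definition~\ref{defn:mu} being exactly that optimisation --- and both quantities enter, in the symmetric combination $\max(\mu(P),\mu(\vartheta_{d_B}\circ P))$, because the hypothesis is invariant under $P\leftrightarrow\vartheta_{d_B}\circ P$ while $\mu$ is not. (Equivalently one may first pass from $P$ to $P^{\otimes 2}$, which is $\lfloor n/2\rfloor$-tensor-stable decomposable whenever $P$ is $n$-tensor-stable decomposable --- since a free input tensor-factor can be absorbed by a completely positive map --- and this is where the $\lfloor n/2\rfloor$ in the bound originates.)

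Third, for any map in the family $X\mapsto aX+bX^{T}+c\,\text{Tr}[X]\,\one$ the Choi matrix of its $m$-th tensor power lies in the \emph{commutative} algebra generated, copy by copy, by the projectors onto the antisymmetric subspace, onto the symmetric subspace with the maximally entangled line deleted, and onto that line. Inside this algebra both $\cP$ and $\Gamma(\cP)$ are simplicial cones --- an orthant, and its image under one fixed $3\times3$ matrix --- so decomposability of the $m$-th power reduces to the membership of a rank-one tensor $\lambda^{\otimes m}$ ($\lambda\in\R^3$ the single-copy eigenvalue vector) in a sum of $m$-th tensor powers of two simplicial cones, a purely polyhedral question. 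I would solve it by exhibiting the extremal obstructing PPT operator explicitly --- a two-copy operator supported on the symmetric subspace --- and computing its overlap with $\lambda^{\otimes m}$; optimising the operator yields a bound of the shape $\bigl((1-\mu^{*})/(1+\mu^{*})\bigr)^{\lfloor n/2\rfloor}\le 1+\sqrt{2d/(d+1)}$ with $\mu^{*}=\max(\mu(P),\mu(\vartheta_{d_B}\circ P))$, which is exactly the claimed inequality rearranged. Here $2d/(d+1)=d^2/\dim\mathrm{Sym}^2(\C^d)$ is the ``strength'' of the optimal symmetric-subspace witness (the square root reflecting that it is a two-copy object), and $d=\min(d_A,d_B)$ because the witness lives on a $d$-dimensional local subspace. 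Letting $n\to\infty$ forces $\mu^{*}\ge 0$, recovering Theorem~\ref{thm:NoTensStabDec}.

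The step I expect to be hardest is the symmetrisation: showing that the orthogonal twirl together with the completely positive dressings is lossless enough that the symmetric map's two coordinates are governed \emph{exactly} by $\mu(P)$ and $\mu(\vartheta_{d_B}\circ P)$ (and not by some weaker surrogate), and that the reduction to two copies really costs no more than the factor $\lfloor n/2\rfloor$. The third step is, by contrast, a concrete finite computation once the right two-copy witness is written down; its one delicate point is verifying that this product-structured witness is genuinely extremal in the relevant polyhedral cone, so that the resulting bound is the stated one rather than merely one it implies.
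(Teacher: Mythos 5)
Your sketch follows essentially the same route as the paper's proof: dualize decomposability against PPT operators (Theorem~\ref{thm:DecWitness}), twirl the $\lfloor n/2\rfloor$ pairs $P\otimes(\vartheta_{d_B}\circ P)$ into tensor powers of Werner-type maps whose accessible parameter range is governed by $\mu(P)$ and $\mu(\vartheta_{d_B}\circ P)$ (Section~\ref{sec:Symmetrizing} and Corollary~\ref{cor:ReducToWerner}), and then exhibit one explicit product-structured PPT witness in the resulting commutative, polyhedral problem (Section~\ref{sec:AnalytBoundHW}), which yields exactly the inequality you rearranged. One small correction to your final worry: extremality of that witness is not needed and in fact fails for $\lfloor n/2\rfloor\ge 2$ (compare Table~\ref{table:Values}) --- mere feasibility, i.e.\ positivity and PPT of $H_Q$ as in Lemma~\ref{lem:SpecificHQ}, already certifies non-decomposability and gives the stated bound, which the theorem does not claim to be tight.
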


In the limit $n\ra \infty$ the previous theorem shows that any tensor-stable decomposable map $P$ satisfies $\max\lb\mu\lb P\rb,\mu\lb\vartheta_{d_B}\circ P\rb\rb\geq 0$ implying that $P$ is completely positive or completely copositive. This proves Theorem \ref{thm:NoTensStabDec}.

To prove Theorem \ref{thm:QuantBound} we will use a symmetrization technique from \cite{muller2016positivity} to reduce the problem to mixed tensor products of Werner maps (see Section \ref{sec:Symmetrizing}). Using the symmetries of these maps we characterize the parameter regions where they are decomposable via a linear program (see Section \ref{sec:MixedTensroHW} and in particular Section \ref{sec:DecompHW}). In simple cases (see Section \ref{sec:SpecialCase}) this linear program can be solved exactly. In general (see Section \ref{sec:AnalytBoundHW}) we can find a specific feasible point giving an analytical bound, and leading to proofs for Theorem \ref{thm:QuantBound} and Theorem \ref{thm:NoTensStabDec} (see Section \ref{sec:Proofs}). Finally, in Section \ref{sec:Appl} we apply our results to construct new examples of non-decomposable positive maps, and to establish an implication of tensor-stable positive maps to the so called PPT squared conjecture~\cite{christandl2012PPT}.

\section{Notation and preliminaries}

We will denote by $\one_d\in\M_d$ the $d\times d$ unit matrix, by $\omega_{d}\in(\M_{d}\otimes \M_d)^+$ the (unnormalized) maximally entangled state, i.e.~$\omega_d=\proj{\Omega_d}{\Omega_d}$ for $\ket{\Omega_d}=\sum^d_{i=1}\ket{i}\otimes \ket{i}\in \C^d\otimes \C^d$, and by $\flip_d\in\M_d\otimes \M_d$ the flip operator defined by $\flip_d(\ket{i}\otimes \ket{j})=\ket{j}\otimes \ket{i}$. Here $\lset \ket{i}\rset^d_{i=1}\subset \C^d$ denotes the computational basis, i.e.~the vector $\ket{i}$ has a single $1$ in the $i$th position and zeros in the remaining entries. For $n,m\in \N$ we will denote by $\M_{n,m}$ the set of complex rectangular $n\times m$-matrices, and by $\M_{n,m}\lb\R\rb$ the subset of $n\times m$-matrices with real entries.

We denote by $\ident_d:\M_d\ra\M_d$ the identity map $\ident_d(X)=X$ and by $\vartheta_d:\M_d\ra\M_d$ the matrix transposition $\vartheta_d(X) = X^T$ in the computational basis (our results will not depend on this choice of basis). Given an operator $Y:\C^{d_A}\ra\C^{d_B}$ we denote by $\text{Ad}_Y:\M_{d_A}\ra\M_{d_B}$ the completely positive map $\text{Ad}_Y(X) = YXY^\dagger$. The \emph{partial transposition} $(\ident_{d_A}\otimes \vartheta_{d_B}):\M_{d_A}\otimes \M_{d_B}\ra\M_{d_A}\otimes \M_{d_B}$ will play an important role in the following. To simplify formulas we will sometimes write $X^{\Gamma}:= (\ident_{d_A}\otimes \vartheta_{d_B})(X)$ to abbreviate the partial transposition (on the second tensor factor) of a bipartite matrix $X\in \M_{d_A}\otimes \M_{d_B}$. Here the dimensions will be clear from context.    

\subsection{Choi-Jamiolkowski isomorphism and mapping cones}

The Choi-Jamiolkowski isomorphism~\cite{choi1975completely,jamiolkowski1972linear} relates each linear map $L:\M_{d_A}\ra\M_{d_B}$ to its Choi matrix 
\begin{equation}
C_L = (\ident_{d_A}\otimes L)(\omega_{d_A})\in \M_{d_A}\otimes \M_{d_B}.
\label{equ:ChoiMat}
\end{equation}
Under the Choi-Jamiolkowski isomorphism, positive maps $P:\M_{d_A}\ra\M_{d_B}$ correspond to \emph{block positive} matrices $C_P\in\M_{d_A}\otimes\M_{d_B}$, i.e. such that $(\bra{x}\otimes \bra{y}) C_P (\ket{x}\otimes \ket{y})\geq 0$ holds for any $\ket{x}\in\C^{d_A}$ and any $\ket{y}\in\C^{d_B}$. Similarly, completely positive maps $T:\M_{d_A}\ra\M_{d_B}$ correspond to positive matrices $C_T\in(\M_{d_A}\otimes \M_{d_B})^+$. 

Let $\mathcal{P}\lb n,m\rb$ denote the cone of positive maps $P:\M_n\ra\M_{m}$. The notion of mapping cones was introduced by E.~St{\o}rmer in \cite{stormer1986extension} (see also \cite{skowronek2009cones} for more details). The following is a slight modification of the original definition: 

\begin{defn}[Mapping cones]
We call a system $\mathcal{C} = \lset \mathcal{C}_{n,m}\rset_{n,m\in \N}$ of subcones $\mathcal{C}_{n,m}\subset \mathcal{P}\lb n,m\rb$ a \emph{mapping cone} if the following conditions are satisfied:
\begin{enumerate}
\item For any $n,m\in\N$ the subcone $\mathcal{C}_{n,m}$ is closed. 
\item For any $n,m,n',m'\in \N$, $P\in \mathcal{C}_{n,m}$ and completely positive maps $T:\M_{n'}\ra\M_n$ and $S:\M_{m}\ra\M_{m'}$ we have that $S\circ P\circ T\in \mathcal{C}_{n',m'}$.
\end{enumerate}
For $P:\M_{n}\ra\M_{m}$ we will simply write $P\in \mathcal{C}$ instead of $P\in \mathcal{C}_{n,m}$.
\label{defn:MappingCone}
\end{defn}

In the following we will focus mostly on the cones of positive maps and of decomposable maps, and we refer to \cite{skowronek2009cones} for more examples of mapping cones. It is sometimes convenient to characterize mapping cones via their dual cones. The following two paragraphs implicitly contain examples of this, but we will not go into further details here. 

A positive matrix $X\in (\M_{d_A}\otimes \M_{d_B})^+$ is called \emph{separable} iff there exists a $k\in \N$ such that 
\[
X = \sum^k_{i=1} Y_i\otimes Z_i,
\]
with $Y_i\in (\M_{d_A})^+$ and $Z_i\in (\M_{d_B})^+$ for any $i\in\lset 1,\ldots ,k\rset$. Via the Choi-Jamiolkowski isomorphism separable matrices correspond to so called \emph{entanglement breaking} maps (see \cite{horodecki2003entanglement}), i.e. completely positive maps $T:\M_{d_A}\ra\M_{d_B}$ such that $(\ident_{d_A}\otimes T)(X)$ is separable for any positive matrix $X\in (\M_{d_A}\otimes \M_{d_A})^+$. The mapping cone of entanglement breaking maps is dual to the mapping cone of positive maps (see \cite{skowronek2009cones} for details), which equivalently means that a positive matrix $X\in (\M_{d_A}\otimes \M_{d_B})^+$ is separable iff $(\ident_{d_A}\otimes P)\lb X\rb\geq 0$ for any positive map $P:\M_{d_B}\ra\M_{d_A}$. Conversely, a linear map $L:\M_{d_A}\ra\M_{d_B}$ is positive iff $(\ident_{d_B}\otimes L)\lb X\rb\geq 0$ for any separable matrix $X\in(\M_{d_B}\otimes \M_{d_A})^+$.

In the cases $d_Ad_B\leq 6$ any positive linear map $P:\M_{d_A}\ra\M_{d_B}$ is decomposable (see \cite{woronowicz1976positive}). For higher dimensions this is no longer true, and the structure of positive maps seems to be very complicated. Contrary to this, the set of decomposable maps behaves more nicely. The mapping cone dual to the mapping cone of decomposable maps is given by the linear maps that are both completely positive and completely copositive. Via the Choi-Jamiolkowski isomorphism these correspond to positive matrices with positive partial transpose. This duality can be expressed as follows:

\begin{thm}[Testing decomposability~\cite{stormer1982decomposable}]
For any linear map $L:\M_{d_A}\ra\M_{d_B}$ the following are equivalent:
\begin{enumerate}
\item $L$ is decomposable.
\item $(\ident_{d_A}\otimes L)(X)\geq 0$ for any $X\in (\M_{d_A}\otimes \M_{d_B})^+$ with $X^{\Gamma}\geq 0$.
\item $\text{Tr}\lb X C_L\rb\geq 0$ for any $X\in (\M_{d_A}\otimes \M_{d_B})^+$ with $X^{\Gamma}\geq 0$.
\end{enumerate}
\label{thm:DecWitness}
\end{thm}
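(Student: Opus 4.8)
The plan is to prove the cycle of implications $(1)\Rightarrow(2)\Rightarrow(3)\Rightarrow(1)$, with nearly all of the content in the final step, which is a cone-duality argument. For $(1)\Rightarrow(2)$ I would argue directly: writing $L=T_1+\vartheta_{d_B}\circ T_2$ with $T_1,T_2$ completely positive and taking $X\in(\M_{d_A}\otimes\M_{d_A})^+$ with $X^\Gamma\geq 0$, the term $(\ident_{d_A}\otimes T_1)(X)$ is positive semidefinite because $T_1$ is completely positive. For the second term I would use the factorization $\vartheta_{d_B}\circ T_2=(\vartheta_{d_B}\circ T_2\circ\vartheta_{d_A})\circ\vartheta_{d_A}$, which gives $(\ident_{d_A}\otimes(\vartheta_{d_B}\circ T_2))(X)=\big(\ident_{d_A}\otimes(\vartheta_{d_B}\circ T_2\circ\vartheta_{d_A})\big)(X^\Gamma)\geq 0$, since $\vartheta_{d_B}\circ T_2\circ\vartheta_{d_A}$ is again completely positive (one obtains its Kraus operators by taking entrywise conjugates of those of $T_2$) and $X^\Gamma\geq 0$. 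Summing the two terms yields $(2)$.

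For $(2)\Rightarrow(3)$ I would transport the positivity condition through the Choi matrix using the standard ``link product'' identity: for all $X\in\M_{d_A}\otimes\M_{d_A}$ and $Z\in\M_{d_A}\otimes\M_{d_B}$ one has $\text{Tr}\big[(\ident_{d_A}\otimes L)(X)\,Z\big]=\text{Tr}\big[C_L\,\Psi(X,Z)\big]$, where $\Psi(X,Z)\in\M_{d_A}\otimes\M_{d_B}$ is produced from $X$ and $Z$ by a fixed bilinear recipe — swap one tensor factor of $X$, apply a partial transpose, and contract one $\C^{d_A}$-leg against $Z$. A short computation shows $\Psi(X,Z)$ is PPT whenever $X$ is PPT and $Z\geq 0$; the needed facts are that PPT-ness of $X$ is symmetric in its two tensor factors, that $Z^T\geq 0$, and that the link product of two positive semidefinite matrices is positive semidefinite. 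Then $(2)$ makes $\text{Tr}[(\ident_{d_A}\otimes L)(X)\,Z]\geq 0$ for all PPT $X$ and all $Z\geq 0$, hence $\text{Tr}[C_L\,\Psi(X,Z)]\geq 0$; since every PPT matrix in $\M_{d_A}\otimes\M_{d_B}$ lies in the closed convex cone generated by such $\Psi(X,Z)$ — equivalently, an ancilla of dimension $d_A$ already detects all PPT matrices, which I would justify by a support/compression argument — one obtains $\text{Tr}[C_L Y]\geq 0$ for every PPT $Y$, that is $(3)$.

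The heart of the theorem is $(3)\Rightarrow(1)$. First note that $(3)$ forces $C_L$ to be Hermitian, because the PPT cone real-spans the space of Hermitian matrices; so I would work in the real inner-product space of Hermitian elements of $\M_{d_A}\otimes\M_{d_B}$. Via the Choi matrix and the identity $C_{\vartheta_{d_B}\circ T}=C_T^\Gamma$, decomposability of $L$ is exactly the statement $C_L\in\mathcal{K}+\Gamma(\mathcal{K})$, where $\mathcal{K}$ is the cone of positive semidefinite Hermitian matrices and $\Gamma$ is the partial transpose. Since $\mathcal{K}$ is self-dual and $\Gamma$ is a self-adjoint involution, $(\mathcal{K}+\Gamma(\mathcal{K}))^{\ast}=\mathcal{K}^{\ast}\cap\Gamma(\mathcal{K})^{\ast}=\mathcal{K}\cap\Gamma(\mathcal{K})$, which is precisely the PPT cone. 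Condition $(3)$ states exactly that $C_L$ lies in the dual cone of the PPT cone; since the PPT cone is itself the dual of $\mathcal{K}+\Gamma(\mathcal{K})$, the bipolar theorem makes $(3)$ equivalent to $C_L\in\overline{\mathcal{K}+\Gamma(\mathcal{K})}$. It then remains to verify that $\mathcal{K}+\Gamma(\mathcal{K})$ is already closed, which I would do by a normalization argument: if sums $a_n+b_n$ with $a_n\geq 0$ and $b_n^\Gamma\geq 0$ converged while $\{a_n\}$ was unbounded, then after rescaling by $\|a_n\|^{-1}$ one would obtain a nonzero $a\geq 0$ with $a^\Gamma\leq 0$, whence $0\leq\text{Tr}[a]=\text{Tr}[a^\Gamma]\leq 0$ and so $a=0$, a contradiction. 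This closes the cycle.

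The two steps I expect to need genuine care are: (i) the closedness of $\mathcal{K}+\Gamma(\mathcal{K})$ in the duality argument — routine, but without it the bipolar theorem only gives a closure; and (ii) the generation claim in $(2)\Rightarrow(3)$, namely that the matrices $\Psi(X,Z)$ with PPT $X$ and positive semidefinite $Z$ exhaust the PPT cone — equivalently, that an ancilla of dimension $d_A$ suffices to test decomposability. Everything else is elementary bookkeeping with the Choi matrix and with standard identities for the partial transpose.
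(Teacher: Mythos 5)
The paper does not prove this theorem at all: it is quoted from St{\o}rmer's work with a citation, so there is no in-paper argument to compare yours against, and I can only assess your proposal on its own terms. Your architecture is the standard one and two of the three implications are complete. The step $(1)\Rightarrow(2)$ is correct as written. The duality argument for $(3)\Rightarrow(1)$ is also correct, and you supply precisely the two ingredients that are usually glossed over: the computation $(\mathcal{K}+\Gamma(\mathcal{K}))^{*}=\mathcal{K}\cap\Gamma(\mathcal{K})$ (self-duality of the positive semidefinite cone plus the fact that $\Gamma$ is a self-adjoint involution on Hermitian matrices) and the closedness of $\mathcal{K}+\Gamma(\mathcal{K})$, for which your normalization argument (a limit $a\geq 0$ with $a^{\Gamma}\leq 0$ forces $\text{Tr}\lbr a\rbr=\text{Tr}\lbr a^{\Gamma}\rbr=0$ and hence $a=0$) is exactly right. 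Note that this duality gives you $(1)\Leftrightarrow(3)$ in both directions essentially for free, since $(1)\Rightarrow(3)$ is the one-line computation $\text{Tr}\lbr (C_{T_1}+C_{T_2}^{\Gamma})X\rbr=\text{Tr}\lbr C_{T_1}X\rbr+\text{Tr}\lbr C_{T_2}X^{\Gamma}\rbr\geq 0$.

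The genuine gap is the generation claim inside $(2)\Rightarrow(3)$: that every PPT matrix in $\M_{d_A}\otimes\M_{d_B}$ lies in the closed convex cone generated by the link products $\Psi(X,Z)$ with $X$ PPT on a $d_A$-dimensional ancilla and $Z\geq 0$. You defer this to ``a support/compression argument'', but by duality this claim \emph{is} the implication $(2)\Rightarrow(3)$ restated (equivalently, given your $(1)\Leftrightarrow(3)$, it is the nontrivial half of $(1)\Leftrightarrow(2)$), so it carries the entire content of the step and cannot be waved through. It is also not reachable by soft support reasoning: separable $X$ only ever produce separable $\Psi(X,Z)$, so PPT-entangled targets require genuinely PPT-entangled inputs $X$. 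What does close the gap is a concrete teleportation-type construction. Writing $\Psi(X,Z)=\sum_{i,k}X_{ik}^{T}\otimes Z_{ki}$ in terms of the blocks along the contracted $\C^{d_A}$ legs, one checks that for a PPT target $Y$ the choice of $X$ obtained by exchanging the two tensor factors of $\overline{Y}$ (which is again positive with positive partial transpose) together with $Z=\omega$ supported on the appropriate subspace satisfies $\Psi(X,Z)=Y$ exactly, with no convex hull or closure needed. This works verbatim only when $d_B\leq d_A$, since the exchanged $\overline{Y}$ lives on $\C^{d_B}\otimes\C^{d_A}$ and must be embedded into $\C^{d_A}\otimes\C^{d_A}$; for $d_B>d_A$ you need a further reduction or a larger ancilla in condition $(2)$ (the paper's own phrasing of item 2 is in fact dimensionally ambiguous on exactly this point). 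So the proof is incomplete at precisely the step you flagged, and the missing piece is a specific construction rather than a routine compression.
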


Note that the last condition in Theorem \ref{thm:DecWitness} can be checked efficiently via semidefinite programming~\cite{boyd2004convex}.

\subsection{Twirling, Werner states, and Werner maps}
\label{sec:WernerStatesAndMaps}

The projectors onto the symmetric and antisymmetric subspaces of $\C^{d}\otimes \C^{d}$ are given by
\[
P_{\text{sym}} = \frac{1}{2} (\one_d\otimes \one_d + \mathbbm{F}_d), \hspace{1cm} P_{\text{asym}} = \frac{1}{2}(\one_d\otimes \one_d - \mathbbm{F}_d).
\]
Let $\mathcal{U}_d\subset \M_d$ denote the set of $d\times d$ unitary matrices. The \emph{$UU$-twirl} denoted by
\[
T_{UU}:\M_d\otimes \M_d\ra\M_d\otimes \M_d
\] 
is defined as 
\begin{equation}
T_{UU}(X) = \int_{U\in\mathcal{U}_d} (U\otimes U)X(U\otimes U)^\dagger \text{d}U = \text{Tr}\lbr X P_{\text{sym}}\rbr\frac{P_{\text{sym}}}{d_{\text{sym}}} + \text{Tr}\lbr X P_{\text{asym}}\rbr\frac{P_{\text{asym}}}{d_{\text{asym}}}
\label{equ:Twirl}
\end{equation}
for any $X\in\M_d\otimes \M_d$, where $d_{\text{sym}}:=\text{Tr}\lbr P_\text{sym}\rbr = d(d+1)/2$ and $d_{\text{asym}}:=\text{Tr}\lbr P_\text{asym}\rbr = d(d-1)/2$. Here the integration is with respect to the Haar measure on the unitary group $\mathcal{U}_d$, and the second equality follows from the Schur-Weyl duality (see \cite{werner1989quantum} for details).

Under the $UU$-twirl every quantum state, i.e.~a positive matrix with unit trace, gets mapped to the family of \emph{Werner states} (see \cite{werner1989quantum}) given by
\begin{equation}
\rho_{W}(p) = p\frac{P_{\text{sym}}}{d_{\text{sym}}} + (1-p)\frac{P_{\text{asym}}}{d_{\text{asym}}} \in (\M_d\otimes \M_d)^+
\label{equ:WernerStates}
\end{equation} 
with the parameter $p\in\lbr 0, 1\rbr$. It is well-known that the Werner state $\rho_{W}(p)$ is \emph{separable} iff it has positive partial transpose, which holds iff $p\geq 1/2$. 

Via the Choi-Jamiolkowski isomorphism we can relate the Werner states $\rho_{W}(p)\in (\M_d\otimes \M_d)^+$ to the family of \emph{Werner maps} denoted by $W_p:\M_d\ra \M_d$ via $C_{W_p} = \rho_{W}(p)$. The maps in this family are sometimes called \emph{depolarized Werner-Holevo maps} after their most prominent example, the Werner-Holevo channel $dW_0$, studied in~\cite{werner2002counterexample}. However, since they are equivalent to the Werner states, and we are not using the Werner-Holevo channel in particular, we decided for the aforementioned name. 

In the following we will often consider \emph{mixed tensor powers of Werner maps} given by 
\[
W^{\otimes n}_{p_1}\otimes (\vartheta_{d}\circ W_{p_2})^{\otimes m}:\M^{\otimes (n+m)}_{d}\ra \M^{\otimes (n+m)}_{d}
\] 
for $p_1,p_2\in\lbr 0,1\rbr$.

\section{Symmetrizing positive maps}
\label{sec:Symmetrizing}

The symmetrization techniques introduced in this section originate from the theory of entanglement distillation (see for example \cite{horodecki1999reduction}) and have been adapted to the study of positive maps in \cite{muller2016positivity}. We present these techniques here in a slightly more general form mostly to make this article self-contained, but also to make them more applicable for further studies.  

 Via the Choi-Jamiolkowski isomorphism we can identify positive maps $P:\M_{d_A}\ra\M_{d_B}$ with block positive matrices $C_P\in \M_{d_A}\otimes \M_{d_B}$ and vice-versa. The following lemma is probably well-known.

\begin{lem}[Operator inequality for block positive matrices]
Any positive map $P:\M_{d_A}\ra\M_{d_B}$ satisfies
\[
-\one_{d_A}\otimes P(\one_{d_A})\leq C_P\leq d_A\one_{d_A}\otimes P(\one_{d_A}).
\]

\label{lem:OpInequ}
\end{lem}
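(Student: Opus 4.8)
The plan is to prove the two inequalities $C_P \le d_A\,\one_{d_A}\otimes P(\one_{d_A})$ and $-\one_{d_A}\otimes P(\one_{d_A}) \le C_P$ by testing against an arbitrary product-free vector and reducing to the block-positivity of $C_P$. Concretely, fix a bipartite vector $\ket{\psi}\in\C^{d_A}\otimes\C^{d_B}$ and write its Schmidt decomposition $\ket{\psi}=\sum_i \sqrt{\lambda_i}\,\ket{a_i}\otimes\ket{b_i}$ with orthonormal $\{\ket{a_i}\}$, $\{\ket{b_i}\}$ and $\sum_i\lambda_i=\|\psi\|^2$. I would like to bound $\bracket{\psi}{C_P}{\psi}$ both above and below in terms of $\bracket{\psi}{\one\otimes P(\one)}{\psi}=\sum_i\lambda_i\bracket{b_i}{P(\one_{d_A})}{b_i}$.

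For the upper bound, expand $C_P=(\ident\otimes P)(\omega_{d_A})=\sum_{k,l}\proj{k}{l}\otimes P(\proj{k}{l})$ and compute
\[
\bracket{\psi}{C_P}{\psi}=\sum_{i,j}\sqrt{\lambda_i\lambda_j}\,\bracket{b_i}{P(\proj{\bar a_i}{\bar a_j})}{b_j},
\]
where the bars denote complex conjugation of coordinates (coming from how $\omega_{d_A}$ pairs with $\ket{a_i}$). The key observation is that block positivity of $C_P$ says exactly that $\bracket{x}{P(\proj{y}{y})}{x}\ge 0$ for all $\ket x,\ket y$; applying this and a Cauchy–Schwarz / rank-one-bound argument (or, more cleanly, noting $\sum_i \sqrt{\lambda_i}\ket{\bar a_i}$ has norm squared at most $d_A\sum_i\lambda_i$ by Cauchy–Schwarz on $d_A$ terms) lets one dominate the double sum by $d_A$ times $\sum_i\lambda_i\,\bracket{b_i}{P(\one_{d_A})}{b_i}$ after also using positivity on the diagonal blocks $P(\proj{\bar a_i}{\bar a_i})\le P(\one_{d_A})$. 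The cleanest route is probably: apply the inequality $C_P\le \one\otimes P(\one)$ to the special case and then correct by the Schmidt rank, or invoke that for any positive map $P(\proj{v}{v})\le \|v\|^2 P(\one)$ and sum carefully with the factor $d_A$ emerging from $(\sum_i\sqrt{\lambda_i})^2\le d_A\sum_i\lambda_i$.

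For the lower bound $C_P\ge -\one\otimes P(\one_{d_A})$, note that $\one_{d_A}\otimes P(\one_{d_A}) + C_P = (\ident\otimes P)(\one_{d_A}\otimes\one_{d_A}+\omega_{d_A})$ up to rearrangement; more directly, one checks $(\bra x\otimes\bra y)\big(\one\otimes P(\one)+C_P\big)(\ket x\otimes\ket y)\ge 0$ using that $\one_{d_A}\otimes\one_{d_A} + \omega_{d_A}$ (suitably conjugated) is a positive matrix whose image under $\ident\otimes P$ one can control, or simply test against a general $\ket\psi$ as above and bound $\bracket{\psi}{C_P}{\psi}\ge -\sum_i\lambda_i\bracket{b_i}{P(\one)}{b_i}$ by discarding the positive diagonal contributions and bounding the off-diagonal terms by the diagonal ones via block positivity applied to vectors of the form $\ket{\bar a_i}\pm\ket{\bar a_j}$. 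This polarization trick — evaluating block positivity of $C_P$ on $\ket{\bar a_i \pm \bar a_j}\otimes\ket{b_i}$ etc. — is the standard device turning pointwise positivity into a bound on cross terms.

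The main obstacle I expect is the bookkeeping in the upper bound: getting precisely the constant $d_A$ (rather than something larger) requires using the Schmidt structure efficiently, in particular that the local vectors $\ket{a_i}$ live in a space of dimension only $d_A$, so $(\sum_i\sqrt{\lambda_i})^2\le d_A\sum_i\lambda_i$. Everything else is a routine combination of block positivity of $C_P$ and the positivity estimate $P(\proj{v}{v})\le\|v\|^2P(\one_{d_A})$, with the lower bound being the easier of the two and the polarization identities handling the cross terms.
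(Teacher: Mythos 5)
There is a genuine gap, and it sits exactly where the difficulty of the lemma lies: the Schmidt cross terms. Writing $A_{ij}=\bracket{a_i\otimes b_i}{C_P}{a_j\otimes b_j}=\bra{b_i}P(\proj{\bar a_i}{\bar a_j})\ket{b_j}$, your plan controls the diagonal terms correctly (via $P(\proj{v}{v})\le\|v\|^2P(\one_{d_A})$ and $(\sum_i\sqrt{\lambda_i})^2\le d_A\sum_i\lambda_i$), but the off-diagonal terms cannot be bounded by the diagonal ones the way you propose. The Cauchy--Schwarz estimate $|A_{ij}|\le\sqrt{A_{ii}A_{jj}}$ requires $C_P$ to be positive semidefinite on the span of the vectors involved, and block positivity does not give this: for $P=\vartheta_2$ one has $C_P=\flip_2$, and with $\ket{e}=\ket{0}\otimes\ket{1}$, $\ket{f}=\ket{1}\otimes\ket{0}$ one gets $\bracket{e}{C_P}{e}=\bracket{f}{C_P}{f}=0$ while $\bracket{e}{C_P}{f}=1$. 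Likewise, polarization on the first tensor factor, i.e.\ block positivity applied to $(\ket{a_i}\pm\ket{a_j})\otimes\ket{y}$, only controls $\bra{y}P(\proj{\bar a_i}{\bar a_j}+\proj{\bar a_j}{\bar a_i})\ket{y}$ with the \emph{same} $\ket{y}$ on both sides, whereas the terms you need have $\bra{b_i}\cdots\ket{b_j}$ with $b_i\neq b_j$; polarizing in both factors produces index patterns like $\bracket{a_i\otimes b_k}{C_P}{a_j\otimes b_l}$ that never isolate the Schmidt-correlated terms $A_{ij}$. (That this must fail is clear from the extremal cases: the upper bound is saturated by $P=\ident_d$ and the lower bound by $P=\vartheta_d$, both maximally ``cross-term dominated''.)

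The missing idea is separability, not positivity. For the lower bound you write $\one_{d_A}\otimes P(\one_{d_A})+C_P=(\ident_{d_A}\otimes P)(\one_{d_A}\otimes\one_{d_A}+\omega_{d_A})$ and then appeal to $\one_{d_A}\otimes\one_{d_A}+\omega_{d_A}$ being ``a positive matrix whose image under $\ident\otimes P$ one can control''---but $\ident_{d_A}\otimes P$ of a positive matrix need not be positive when $P$ is merely positive (that would be complete positivity), and checking the inequality only on product vectors $\ket{x}\otimes\ket{y}$ would establish block positivity of $\one\otimes P(\one)+C_P$, not the claimed operator inequality. What makes the argument work is that $\one_{d_A}\otimes\one_{d_A}+\omega_{d_A}=2P_{\text{sym}}^{\Gamma}$ is \emph{separable}, and that a positive map tensored with the identity sends separable matrices to positive matrices. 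The same mechanism gives the upper bound: $\one_{d_A}\otimes\one_{d_A}-\omega_{d_A}/d_A$ is separable (it is proportional to $\rho_W\lb 1/2\rb^{\Gamma}$, the partial transpose of the separable Werner state at $p=1/2$), so $(\ident_{d_A}\otimes P)\lb\one\otimes\one-\omega_{d_A}/d_A\rb=\one_{d_A}\otimes P(\one_{d_A})-C_P/d_A\ge 0$. Identifying these two separable matrices is the entire content of the proof, and it is precisely the step your proposal does not supply.
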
 
\begin{proof}
For the first inequality note that $2P^{\Gamma}_{\text{sym}} = \one_{d_A}\otimes \one_{d_A}+\omega_{d_A}$ is separable. Then, by positivity of $P$ we have
\[
\one_{d_A}\otimes P(\one_{d_A}) + C_P=(\ident_{d_A}\otimes P)(2P^{\Gamma}_{\text{sym}})\geq 0.
\]
Similarly, since $X = \one_{d_A}\otimes \one_{d_A}-\omega_{d_A}/d_A$ is separable (note that $(d_A^2-1)X=\rho_W\lb 1/2\rb^\Gamma$) we have that
\[
\one_{d_A}\otimes P(\one_{d_A}) - C_P/d_A=(\ident_{d_A}\otimes P)(X)\geq 0.
\] 
\end{proof}

Recall the quantity $\mu(P)$ from Definition \ref{defn:mu} that was associated with a non-zero positive map $P:\M_{d_A}\ra\M_{d_B}$. Note that if $P$ is unital, then $\mu(P)$ coincides with the minimal eigenvalue of the Choi matrix $C_P$. For general $P$ a similar interpretation is possible in terms of the positive map $P'=\text{Ad}_{X}\circ P$ for $X = P(\one_d)^{-1/2}$ with a pseudoinverse (see e.g.~\cite{penrose1955generalized}) in case $P(\one_d)$ is not full rank, and possibly restricting the image to make $P'$ unital. However, since this alternative form of $\mu(P)$ does not seem to give more insights, we will not go into details of this, and just stick to Definition \ref{defn:mu}.    

The previous lemma leads to upper and lower bounds on this quantity:

\begin{lem}
For any non-zero positive map $P:\M_{d_A}\ra\M_{d_B}$ we have $\mu\lb P\rb\in \lbr -1, \frac{1}{d_A}\rbr$. Moreover, $P$ is completely positive iff $\mu\lb C_P\rb\geq 0$.
\label{lem:BasicMu}
\end{lem}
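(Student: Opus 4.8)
The plan is to extract everything from Lemma~\ref{lem:OpInequ}, the Choi--Jamio\l{}kowski correspondence, and a single bookkeeping identity. First I would record that, for every completely positive $T:\M_{d_A}\ra\M_{d_B}$,
\[
\tr\!\lbr\brac{\one_{d_A}\otimes P(\one_{d_A})}\,C_T\rbr \;=\; \tr\!\lbr(T^{*}\circ P)(\one_{d_A})\rbr ,
\]
which holds because $\ptr{A}(C_T)=T(\one_{d_A})$ and, by definition of the Hilbert--Schmidt adjoint together with Hermiticity preservation of $T$, $\tr\lbr(T^{*}\circ P)(\one_{d_A})\rbr=\langle T(\one_{d_A}),P(\one_{d_A})\rangle=\tr\lbr T(\one_{d_A})P(\one_{d_A})\rbr$. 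Since $P$ is a nonzero positive map, $P(\one_{d_A})\neq 0$, and since $T^{*}$ is completely positive the matrix $(T^{*}\circ P)(\one_{d_A})$ is positive semidefinite; hence every denominator occurring in Definition~\ref{defn:mu} is strictly positive, and the infimum runs over a nonempty set (the map $T_0$ below is admissible). In particular $\mu(P)$ is a well-defined real number.

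For the lower bound, Lemma~\ref{lem:OpInequ} gives $C_P+\one_{d_A}\otimes P(\one_{d_A})\geq 0$, so for completely positive $T$ (where $C_T\geq 0$) one has $\tr\lbr(C_P+\one_{d_A}\otimes P(\one_{d_A}))C_T\rbr\geq 0$; rearranging and invoking the identity above yields $\tr\lbr C_P C_T\rbr\geq-\tr\lbr(T^{*}\circ P)(\one_{d_A})\rbr$, hence every quotient in Definition~\ref{defn:mu} is $\geq -1$ and $\mu(P)\geq -1$. For the upper bound I would test with the completely positive map $T_0(X)=d_A^{-1}\tr\lbr X\rbr\one_{d_B}$, whose Choi matrix is $C_{T_0}=d_A^{-1}\,\one_{d_A}\otimes\one_{d_B}$: then $\tr\lbr C_P C_{T_0}\rbr=d_A^{-1}\tr\lbr C_P\rbr=d_A^{-1}\tr\lbr P(\one_{d_A})\rbr$, while the identity above gives $\tr\lbr(T_0^{*}\circ P)(\one_{d_A})\rbr=\tr\lbr P(\one_{d_A})\rbr$, so this single quotient equals exactly $1/d_A$ and $\mu(P)\leq 1/d_A$. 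Together these give $\mu(P)\in\lbr -1,1/d_A\rbr$.

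For the complete-positivity criterion: if $P$ is completely positive then $C_P\geq 0$, so $\tr\lbr C_P C_T\rbr\geq 0$ for every completely positive $T$, whence $\mu(P)\geq 0$. For the converse I argue by contraposition: if $P$ is not completely positive, pick a unit vector $\ket{\psi}\in\C^{d_A}\otimes\C^{d_B}$ with $\bracket{\psi}{C_P}{\psi}<0$ and, for $\eps>0$, let $T_\eps$ be the completely positive map with Choi matrix $\proj{\psi}{\psi}+\eps\,\one_{d_A}\otimes\one_{d_B}$. The perturbation guarantees (via the identity above) $\tr\lbr(T_\eps^{*}\circ P)(\one_{d_A})\rbr=\bracket{\psi}{\one_{d_A}\otimes P(\one_{d_A})}{\psi}+\eps\,d_A\tr\lbr P(\one_{d_A})\rbr>0$, so $T_\eps$ is admissible, while $\tr\lbr C_P C_{T_\eps}\rbr=\bracket{\psi}{C_P}{\psi}+\eps\tr\lbr C_P\rbr<0$ for $\eps$ small enough; hence the corresponding quotient is negative and $\mu(P)<0$. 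I expect the only step requiring real foresight to be the choice of the test map $T_0$ pinning down the sharp constant $1/d_A$ --- the operator inequality $C_P\leq d_A\,\one_{d_A}\otimes P(\one_{d_A})$ alone only yields the far weaker $\mu(P)\leq d_A$ --- whereas the $\eps\,\one_{d_A}\otimes\one_{d_B}$ perturbation in the converse is a minor technicality needed purely to respect the constraint $(T^{*}\circ P)(\one_{d_A})\neq 0$.
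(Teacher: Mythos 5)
Your proof is correct and follows essentially the same route as the paper: Lemma~\ref{lem:OpInequ} for the lower bound $-1$, an explicit completely positive test map achieving the quotient $1/d_A$ for the upper bound, and a negativity witness for the converse of the complete-positivity criterion. The only cosmetic differences are your choice of test map ($d_A^{-1}\tr(\cdot)\one_{d_B}$ instead of the paper's $\tr(\cdot)P(\one_{d_A})$, both of which evaluate to exactly $1/d_A$) and the $\eps$-perturbation in the converse, which the paper avoids by using Lemma~\ref{lem:OpInequ} to show the rank-one witness itself already satisfies $(T^{*}\circ P)(\one_{d_A})\neq 0$.
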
 
\begin{proof}
For any positive map $P:\M_{d_A}\ra\M_{d_B}$ we can consider the completely positive map $T:\M_{d_A}\ra\M_{d_B}$ given by $T(X)=\text{Tr}\lbr X\rbr P(\one_{d_A})$. This map satisfies $(T^*\circ P)(\one_{d_A})\neq 0$ and is therefore contained in the set used to define $\mu$ in Definition \ref{defn:mu}. Evaluating this example shows that $\mu\lb P\rb\leq 1/d_A$. The lower bound $-1\leq \mu\lb P\rb$ follows immediately from the first inequality in Lemma \ref{lem:OpInequ} using that the Choi matrix of a completely positive map is positive. 

Clearly, for completely positive $P$ we have $C_P\geq 0$ and thus $\mu\lb P\rb\geq 0$. If $C_P\ngeq 0$, then by the Choi-Jamiolkowski isomorphism there exists a completely positive map $T:\M_{d_A}\ra\M_{d_B}$ such that $\text{Tr}\lbr C_P C_T\rbr<0$. By Lemma \ref{lem:OpInequ} we find that 
\[
-\text{Tr}\lbr (T^{*}\circ P)(\one_{d_A})\rbr=-\text{Tr}\lbr P(\one_{d_A})T(\one_{d_A})\rbr = -\text{Tr}\lbr (\one_{d_A}\otimes P(\one_{d_A}))C_T\rbr\leq \text{Tr}\lbr C_PC_T\rbr<0,
\]
and thereby $(T^*\circ P)(\one_{d_A})\neq 0$ showing that $T$ is contained in the set used to define $\mu$ in Definition \ref{defn:mu}. 
\end{proof}

We can now state our symmetrization theorem:

\begin{thm}[Symmetrization of positive maps]
Let $P:\M_{d_A}\ra \M_{d_B}$ be positive, but not completely positive. Then, for any $p$ satisfying
\begin{equation}
0\leq \frac{1}{2}\lb 1 + \mu\lb P\rb\rb < p < \frac{1}{2},
\label{equ:pInterval}
\end{equation}
there exists a completely positive map $S:\M_{d_B}\ra\M_{d_A}$ such that 
\[
\int_{\mathcal{U}_d} \text{Ad}_U\circ \vartheta_{d_A}\circ S\circ P\circ \text{Ad}_{U^T} \text{d}U = W_p,
\]
for the Werner map $W_p:\M_{d_A}\ra\M_{d_A}$. 

\label{thm:SymmP}
\end{thm}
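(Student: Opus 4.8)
The plan is to convert the statement into a single assertion about Choi matrices and then settle that by a one-dimensional convexity argument built around the quantity $\mu(P)$; throughout, $d=d_A$, so the twirl is over $\mathcal{U}_{d_A}$.

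First I would record the elementary Choi-matrix identity that for any linear map $\cT:\M_{d_A}\ra\M_{d_A}$ and any $U\in\mathcal{U}_{d_A}$,
\[
C_{\text{Ad}_U\circ\cT\circ\text{Ad}_{U^T}}=(U\otimes U)\,C_\cT\,(U\otimes U)^\dagger,
\]
which one checks directly from \eqref{equ:ChoiMat} (precomposing with $\text{Ad}_{U^T}$ conjugates the input leg of $C_\cT$ by $U$, postcomposing with $\text{Ad}_U$ conjugates the output leg by $U$). Since the Choi map is linear, integrating over $U$ and comparing with \eqref{equ:Twirl} yields $C_{\int_{\mathcal{U}_{d_A}}\text{Ad}_U\circ\cT\circ\text{Ad}_{U^T}\text{d}U}=T_{UU}(C_\cT)$. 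Applying this with $\cT=\vartheta_{d_A}\circ S\circ P$ and using $C_{W_p}=\rho_W(p)$, the theorem reduces to producing a completely positive $S:\M_{d_B}\ra\M_{d_A}$ with $T_{UU}\big(C_{\vartheta_{d_A}\circ S\circ P}\big)=\rho_W(p)$.

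Next I would peel this down to two linear conditions on $S$. From \eqref{equ:Twirl} and $P_{\text{sym}},P_{\text{asym}}=\tfrac12(\one_{d_A}\otimes\one_{d_A}\pm\flip_{d_A})$, the equality $T_{UU}(X)=\rho_W(p)$ holds iff $\text{Tr}[X]=1$ and $\text{Tr}[X\flip_{d_A}]=2p-1$. Writing $C_{\vartheta_{d_A}\circ S\circ P}=(\ident_{d_A}\otimes\vartheta_{d_A})(C_{S\circ P})$ and $C_{S\circ P}=(\ident_{d_A}\otimes S)(C_P)$, and using $\text{Tr}[C_\cT]=\text{Tr}[\cT(\one_{d_A})]$, self-adjointness of the partial transpose under the trace, and $\flip_{d_A}^{\Gamma}=\omega_{d_A}$, the two conditions become
\[
\alpha(S):=\text{Tr}\big[(S\circ P)(\one_{d_A})\big]=1,\qquad \beta(S):=\text{Tr}\big[C_{S\circ P}\,\omega_{d_A}\big]=\text{Tr}\big[C_P\,C_{S^*}\big]=2p-1,
\]
the last equality moving $\ident_{d_A}\otimes S$ across the trace. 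Here $\alpha,\beta$ are linear in $S$, $\alpha(S)\ge 0$ for completely positive $S$, and $C_{S^*}$ ranges over all of $(\M_{d_A}\otimes\M_{d_B})^+$ as $S$ ranges over completely positive maps, so $\beta$ is just $\text{Tr}[C_P\cdot]$ evaluated on that cone.

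It then remains to hit the point $(\alpha,\beta)=(1,2p-1)$ inside the convex cone $\{(\alpha(S),\beta(S)):S\text{ completely positive}\}\subset\R^2$. Since $P\ne 0$ is positive we have $P(\one_{d_A})\ne 0$, so $\alpha$ is not identically zero on completely positive maps; by Definition \ref{defn:mu}, $\mu(P)=\inf\{\beta(S)/\alpha(S):S\text{ completely positive},\ \alpha(S)>0\}$, and \eqref{equ:pInterval} forces $\mu(P)<2p-1<0$, so some completely positive $S_-$ has $\alpha(S_-)=1$ and $\beta(S_-)<2p-1$. For the opposite side I would exhibit a completely positive $S_+$ with $\alpha(S_+)=1$ and $\beta(S_+)>2p-1$: take $S_+=\text{Ad}_V$ with $V=\ket{\bar x}\bra{y}$, where $\ket{x}\otimes\ket{y}$ is a product vector with $\bracket{x,y}{C_P}{x,y}>0$; such a vector exists because a non-zero block positive matrix cannot vanish on every product vector. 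Then $\beta(\text{Ad}_V)=\bracket{x,y}{C_P}{x,y}>0>2p-1$, while $\alpha(\text{Ad}_V)=\|x\|^2\bra{y}P(\one_{d_A})\ket{y}>0$ since $0<\bracket{x,y}{C_P}{x,y}=\bra{y}P(\ketbra{\bar x})\ket{y}\le\|x\|^2\bra{y}P(\one_{d_A})\ket{y}$ (then rescale so $\alpha(S_+)=1$). Finally $S:=t\,S_++(1-t)\,S_-$ is completely positive with $\alpha(S)=1$ for all $t$, and $\beta(S)$ runs between $\beta(S_-)<2p-1$ and $\beta(S_+)>2p-1$, so a suitable $t\in(0,1)$ gives $\beta(S)=2p-1$; this $S$ is the required map by the first two steps. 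The main obstacle is the construction of $S_+$, i.e. showing the ratio $\beta(S)/\alpha(S)$ can be pushed up to a non-negative value: that is exactly where block positivity of the non-zero matrix $C_P$ is used, whereas the hypothesis that $P$ is not completely positive enters only through $\mu(P)<0$, which is what makes the target value $2p-1$ reachable from below; everything else is routine Choi-Jamiolkowski bookkeeping together with the explicit form of the $UU$-twirl.
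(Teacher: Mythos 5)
Your proof is correct and follows essentially the same route as the paper: reduce via the Choi--Jamiolkowski covariance identity to the statement $T_{UU}\lb C_{\vartheta_{d_A}\circ S\circ P}\rb=\rho_W(p)$, and then use Definition \ref{defn:mu} to produce a completely positive map realizing the required overlap $2p-1$ with $\flip_{d_A}$. The only difference is that you spell out, via the explicit witness $\text{Ad}_V$ and an intermediate-value argument, the step the paper dismisses as ``easy to see'' (that the ratio $\text{Tr}[C_PC_{T}]/\text{Tr}[(T^*\circ P)(\one_{d_A})]$ attains every value in $(\mu(P),0)$), which is a welcome elaboration rather than a different method.
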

\begin{proof}
Consider $p$ such that \eqref{equ:pInterval} holds (this interval is not empty by Lemma \ref{lem:BasicMu}). From Definition \ref{defn:mu} it is easy to see that there exists a completely positive map $T:\M_{d_B}\ra\M_{d_A}$ such that 
\[
p = \frac{1}{2}\lb 1 + \frac{\text{Tr}\lbr C_P C_{T^*}\rbr}{\text{Tr}\lbr (T\circ P)(\one_{d_A}) \rbr}\rb <\frac{1}{2},
\]
with $T^*$ denoting the adjoint of $T$ with respect to the Hilbert-Schmidt inner product. Now, an elementary computation using $\omega^{\Gamma}_{d_A} = \flip_{d_A}$ and $\vartheta_{d_A}\circ \vartheta_{d_A}=\ident_{d_A}$ shows that 
\[
\text{Tr}\lbr C_P C_{T^*}\rbr = \text{Tr}\lbr C_{T\circ P}\omega_{d_A}\rbr = \text{Tr}\lbr C_{\vartheta_{d_A}\circ T\circ P}\flip_{d_A}\rbr.
\]
Furthermore, note that by positivity of $T\circ P\neq 0$ we have
\[
0<\text{Tr}\lbr (T\circ P)(\one_{d_A}) \rbr = \text{Tr}\lbr (\vartheta_{d_A}\circ T\circ P)(\one_{d_A})\rbr = \text{Tr}\lbr C_{\vartheta_{d_A}\circ T\circ P}\rbr.
\]
Combining the previous equalities shows that 
\[
p = \text{Tr}\lbr \frac{C_{\vartheta_{d_A}\circ T\circ P}}{\text{Tr}\lbr C_{\vartheta_{d_A}\circ T\circ P}\rbr} P_{\text{sym}}\rbr.
\]
Since $0\leq p<1/2$ and $P_{\text{sym}}+P_{\text{asym}} = \one_{d_A}\otimes \one_{d_A}$ we also have that  
\[
0\leq \text{Tr}\lbr \frac{C_{\vartheta_{d_A}\circ T\circ P}}{\text{Tr}\lbr C_{\vartheta_{d_A}\circ T\circ P}\rbr} P_{\text{asym}}\rbr\leq 1.
\]
Now, we can apply the $UU$-twirl from \eqref{equ:Twirl} and obtain
\[
\int_{U\in\mathcal{U}_{d_A}} (U\otimes U)\frac{C_{\vartheta_{d_A}\circ T\circ P}}{\text{Tr}\lbr C_{\vartheta_{d_A}\circ T\circ P}\rbr}(U\otimes U)^\dagger \text{d}U = \rho_W(p).
\]
Finally, note that for each $U\in \mathcal{U}_{d_A}$ 
\[
(U\otimes U)C_{\vartheta_{d_A}\circ T\circ P}(U\otimes U)^{\dagger} = C_{\text{Ad}_{U}\circ \vartheta_{d_A}\circ T\circ P\circ \text{Ad}_{U^T}}.
\]
Using the Choi-Jamiolkowski isomorphism to express the previous equations in form of linear maps and setting $S = T/\text{Tr}\lbr C_{\vartheta_{d_A}\circ T\circ P} \rbr$ finishes the proof.

\end{proof}

Note that by applying the previous theorem for the dual map $P^*:\M_{d_B}\ra\M_{d_A}$ and taking the dual afterwards leads to a similar statement with a Werner map $W_p:\M_{d_B}\ra\M_{d_B}$ for $p$ with $\frac{1}{2}\lb 1 + \mu\lb P^*\rb\rb < p < \frac{1}{2}$. 

Now we can proof the following symmetrization theorem.

\begin{thm}
Let $\mathcal{C}$ denote a mapping cone according to Definition \ref{defn:MappingCone}, and $P:\M_{d_A}\ra \M_{d_B}$ be a positive map that is neither completely positive nor completely copositive. If for some $N\in\N$ we have $P^{\otimes N}\in \mathcal{C}$, then for any $n,m\in\N$ with $n+m\leq N$, and any $p_1,p_2$ with
\begin{equation}
\frac{1}{2}\lb 1 + \mu\lb \vartheta_{d_B}\circ P\rb\rb < p_1 < \frac{1}{2}\hspace*{0.5cm}\text{ and }\hspace*{0.5cm} \frac{1}{2}\lb 1 + \mu\lb P\rb\rb < p_2 < \frac{1}{2},
\label{equ:p1p2RangesFirst}
\end{equation}
the Werner maps $W_{p_1}, W_{p_2}:\M_{d_A}\ra \M_{d_A}$ satisfy
\[
W^{\otimes n}_{p_1}\otimes (\vartheta_{d_A}\circ W_{p_2})^{\otimes m}\in \mathcal{C}.
\]
\label{thm:ReducMappCon}
\end{thm}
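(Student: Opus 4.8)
The plan is to combine Theorem~\ref{thm:SymmP} (and its dual variant noted immediately after) with the closure properties of mapping cones from Definition~\ref{defn:MappingCone}. The key observation is that a mixed tensor power $W_{p_1}^{\otimes n}\otimes(\vartheta_{d_A}\circ W_{p_2})^{\otimes m}$ should be obtained from $P^{\otimes N}$ by applying completely positive ``sandwich'' maps on each tensor factor separately, followed by averaging over unitaries. Since $\mathcal{C}$ is closed under pre- and post-composition with completely positive maps (condition~2) and closed as a set (condition~1), it is stable under such operations and under the integral, which is a limit of convex combinations of such operations.

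Concretely, first I would fix $p_1,p_2$ in the ranges \eqref{equ:p1p2RangesFirst}. Applying Theorem~\ref{thm:SymmP} to the map $\vartheta_{d_B}\circ P:\M_{d_A}\ra\M_{d_B}$ — which is positive but not completely positive precisely because $P$ is not completely copositive — yields a completely positive map $S_1:\M_{d_B}\ra\M_{d_A}$ with $\int_{\mathcal{U}_{d_A}}\text{Ad}_U\circ\vartheta_{d_A}\circ S_1\circ(\vartheta_{d_B}\circ P)\circ\text{Ad}_{U^T}\,\text{d}U = W_{p_1}$. Similarly, applying Theorem~\ref{thm:SymmP} to $P$ itself (positive, not completely positive) gives a completely positive $S_2:\M_{d_B}\ra\M_{d_A}$ with $\int_{\mathcal{U}_{d_A}}\text{Ad}_U\circ\vartheta_{d_A}\circ S_2\circ P\circ\text{Ad}_{U^T}\,\text{d}U = W_{p_2}$. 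Note that $\vartheta_{d_A}\circ S_1\circ\vartheta_{d_B}$ is again completely positive (conjugation by transpositions preserves complete positivity), so in the first integrand we may absorb the inner $\vartheta_{d_B}$ and read the integrand as $\text{Ad}_U\circ R_1\circ P\circ\text{Ad}_{U^T}$ with $R_1:=\vartheta_{d_A}\circ S_1\circ\vartheta_{d_B}$ completely positive; in the second, the integrand is $\text{Ad}_U\circ(\vartheta_{d_A}\circ S_2)\circ P\circ\text{Ad}_{U^T}$, where $\vartheta_{d_A}\circ S_2$ is not completely positive but $\vartheta_{d_A}\circ(\vartheta_{d_A}\circ S_2)=S_2$ is, i.e. the integrand equals $\vartheta_{d_A}\circ(\text{Ad}_{\bar U}\circ S_2\circ P\circ\text{Ad}_{U^T})$ up to relabeling the unitary (since $\vartheta_{d_A}\circ\text{Ad}_U=\text{Ad}_{\bar U}\circ\vartheta_{d_A}$). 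In either case, the construction of the Werner map (resp.\ its composition with $\vartheta_{d_A}$) takes the single-system map $P$, sandwiches it between completely positive maps, and averages over $\mathcal{U}_{d_A}$.

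Next, I would tensor these constructions together. Take $N' := n+m \le N$ copies of $P$, apply the ``$W_{p_1}$-recipe'' (the $R_1$-sandwich-and-twirl, with independent Haar integration) on the first $n$ copies and the ``$\vartheta_{d_A}\circ W_{p_2}$-recipe'' on the remaining $m$ copies. Because all of these are completely positive sandwich operations on disjoint tensor factors and $P^{\otimes N'}$ lies in $\mathcal{C}$ (it equals $S\circ P^{\otimes N}\circ T$ for the completely positive partial-trace/embedding maps $T(X)=X\otimes\one^{\otimes(N-N')}/\dim$ and $S=\text{id}\otimes\tr$, hence is in $\mathcal{C}$ by condition~2), condition~2 of Definition~\ref{defn:MappingCone} guarantees that each integrand — a finite composition of completely positive maps with $P^{\otimes N'}$ — stays in $\mathcal{C}$. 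The multivariate Haar integral is a limit of convex combinations (Riemann sums) of such integrands; since $\mathcal{C}_{n,m}$ is a closed convex cone (condition~1, plus it is a cone closed under addition as a subcone of $\mathcal{P}(n,m)$), the integral $W_{p_1}^{\otimes n}\otimes(\vartheta_{d_A}\circ W_{p_2})^{\otimes m}$ lies in $\mathcal{C}$ as well.

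The main technical care-point — not really an obstacle, but what the write-up must handle cleanly — is the bookkeeping of transpositions: making sure that after pulling the $\vartheta$'s around, every sandwiching map that is claimed to be completely positive genuinely is, and that the extra $\vartheta_{d_A}$ surviving on the $W_{p_2}$-factors produces exactly $(\vartheta_{d_A}\circ W_{p_2})^{\otimes m}$ and not, say, $\vartheta_{d_A}^{\otimes m}$ applied to the whole block (these agree, since partial transpositions on distinct factors commute and compose to the global one, but it is worth a sentence). A second minor point is justifying the interchange of the (finitely many, independent) Haar integrals with the tensor product and with membership in the closed cone — this is standard: Fubini for the iterated Haar integrals, and closedness of $\mathcal{C}_{n,m}$ for passing the Riemann-sum limit inside. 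Everything else is a direct application of Theorem~\ref{thm:SymmP} and Definition~\ref{defn:MappingCone}.
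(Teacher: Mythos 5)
Your proposal is correct and follows essentially the same route as the paper's proof: reduce to $N=n+m$ via embedding and partial trace, apply Theorem \ref{thm:SymmP} once to $\vartheta_{d_B}\circ P$ and once to $P$, rearrange the transpositions (using $\vartheta_{d_A}\circ \text{Ad}_U = \text{Ad}_{\overline U}\circ\vartheta_{d_A}$ and the complete positivity of $\vartheta_{d_A}\circ S_1\circ\vartheta_{d_B}$) so that every sandwiching map is completely positive, tensor the recipes on disjoint factors, and pass the Haar integral inside the closed convex cone. The only cosmetic discrepancy is your normalization of the embedding map (the correct scalar is $(\text{Tr}\, P(\one_{d_A}))^{N-N'}$ rather than a dimension factor), which is immaterial since $\mathcal{C}$ is a cone.
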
 

\begin{proof}
Note that for any $N,M\in \N$ with $M\leq N$ the maps $T:\M^{\otimes M}_{d_A}\ra\M^{\otimes N}_{d_A}$ given by $T(X)=X\otimes \one^{\otimes N-M}_{d_A}$ and the partial trace $S:\M^{\otimes N}_{d_B}\ra\M^{\otimes M}_{d_B}$ given by $S = \ident^{\otimes M}_{d_B}\otimes \text{Tr}^{\otimes N-M}$ are completely positive. Whenever $P^{\otimes N}\in \mathcal{C}$ is non-zero, and using the properties of mapping cones from Definition \ref{defn:MappingCone} we have 
\[
P^{\otimes M} = \frac{1}{x} S\circ P^{\otimes N}\circ T\in \mathcal{C}
\]
where $x = (\text{Tr}\lb P(\one_{d_A})\rb)^{N-M}>0$. Therefore, we can assume without loss of generality that $N = n+m$, and will do so in the following. 

By assumption $P$ and $\vartheta_{d_B}\circ P$ are not completely positive. Therefore, we can apply Theorem \ref{thm:SymmP} for both $p_1,p_2$ satisfying \eqref{equ:p1p2RangesFirst} to find completely positive maps $S_1,S_2:\M_{d_B}\ra\M_{d_A}$ such that 
\begin{equation}
\int_{\mathcal{U}_{d_A}} \text{Ad}_U\circ \vartheta_{d_A}\circ S_1\circ\vartheta_{d_B}\circ  P\circ \text{Ad}_{U^T} \text{d}U = W_{p_1},
\label{equ:Twirl1}
\end{equation}
and
\begin{equation}
\int_{\mathcal{U}_{d_A}} \text{Ad}_{\overline{U}}\circ S_2\circ P\circ \text{Ad}_{U^T} \text{d}U = \vartheta_{d_A}\circ W_{p_2},
\label{equ:Twirl2}
\end{equation}
where we used the identity $\text{Ad}_U\circ \vartheta_{d_A} = \vartheta_{d_A}\circ \text{Ad}_{\overline{U}}$ for any $U\in\mathcal{U}_{d_A}$. In the following we set $\tilde{S}_1 = \vartheta_{d_A}\circ S_1\circ\vartheta_{d_B}$ and note that this map is completely positive. 

Assume now that $P^{\otimes N}:\M^{\otimes N}_{d_A}\ra\M^{\otimes N}_{d_B}$ satisfies $P^{\otimes N}\in \mathcal{C}$. Then, using \eqref{equ:Twirl1} and \eqref{equ:Twirl2} we obtain
\begin{align*}
&W^{\otimes n}_{p_1}\otimes (\vartheta_{d_2}\circ W_{p_2})^{\otimes m}  \\
&= \int_{\substack{U_1,\ldots ,U_n\in\mathcal{U}_{d_A}\\V_1,\ldots ,V_m\in\mathcal{U}_{d_A} }}\bigotimes^{n}_{i=1}\lb\text{Ad}_{U_i}\circ \tilde{S}_1\circ P\circ \text{Ad}_{U^T_i}\rb \otimes \bigotimes^{m}_{j=1}\lb\text{Ad}_{\overline{V_j}}\circ S_2\circ P\circ \text{Ad}_{V^T_j}\rb \text{d}U_1\cdots \text{d}U_n\text{d}V_1\cdots \text{d}V_m\\
&= \int_{\substack{U_1,\ldots ,U_n\in\mathcal{U}_{d_A}\\V_1,\ldots ,V_m\in\mathcal{U}_{d_A} }} K^{U_1,\ldots , U_n}_{V_1,\ldots ,V_m}\circ  P^{\otimes N}\circ L^{U_1,\ldots , U_n}_{V_1,\ldots ,V_m}\text{d}U_1\cdots \text{d}U_n\text{d}V_1\cdots \text{d}V_m \in \mathcal{C}.
\end{align*} 
Here we introduced completely positive maps 
\[
K^{U_1,\ldots , U_n}_{V_1,\ldots ,V_m} = \bigotimes^{n}_{i=1}\lb\text{Ad}_{U_i}\circ \tilde{S}_1\rb \otimes \bigotimes^{m}_{j=1}\lb\text{Ad}_{\overline{V_j}}\circ S_2\rb,
\] 
and 
\[
L^{U_1,\ldots , U_n}_{V_1,\ldots ,V_m} = \bigotimes^{n}_{i=1}\lb\text{Ad}_{U^T_i}\rb \otimes \bigotimes^{m}_{j=1}\lb\text{Ad}_{V^T_j}\rb,
\] 
and used convexity of $\mathcal{C}_{d^N_A,d^N_A}$ and the second property of Definition \ref{defn:MappingCone} of $\mathcal{C}$. This finishes the proof. 
\end{proof}

Note that the family $\mathcal{D}=\lset \mathcal{D}_{d_A,d_B}\rset_{d_A,d_B\in\N}$ where $\mathcal{D}_{d_A,d_B}$ denotes the set of decomposable maps $P:\M_{d_A}\ra\M_{d_B}$ is a mapping cone according to Definition \ref{defn:MappingCone}. Therefore, we have the following corollary: 

\begin{cor}
Consider a positive map $P:\M_{d_A}\ra\M_{d_B}$ that is neither completely positive nor completely copositive. If for some $N\in\N$ the tensor power $P^{\otimes N}$ is decomposable, then for any $n,m\in\N$ with $n+m\leq N$, and any $p_1,p_2$ with 
\begin{equation}
\frac{1}{2}\lb 1 + \mu\lb \vartheta_{d_B}\circ P\rb\rb < p_1 < \frac{1}{2}\hspace*{0.5cm}\text{ and }\hspace*{0.5cm} \frac{1}{2}\lb 1 + \mu\lb P\rb\rb < p_2 < \frac{1}{2}
\label{equ:p1p2Ranges}
\end{equation}
the map 
\[
W^{\otimes n}_{p_1}\otimes (\vartheta_{d_A}\circ W_{p_2})^{\otimes m}:\M^{\otimes N}_{d_A}\ra\M^{\otimes N}_{d_A}
\]
is decomposable.
\label{cor:ReducToWerner}
\end{cor}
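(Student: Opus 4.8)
The plan is to derive Corollary \ref{cor:ReducToWerner} as an essentially immediate consequence of Theorem \ref{thm:ReducMappCon} together with the observation that decomposable maps form a mapping cone. First I would verify that the family $\mathcal{D}=\{\mathcal{D}_{d_A,d_B}\}_{d_A,d_B\in\N}$ of decomposable maps satisfies the two conditions of Definition \ref{defn:MappingCone}. Closedness of $\mathcal{D}_{d_A,d_B}$ can be seen from the characterization in Theorem \ref{thm:DecWitness}: being decomposable is equivalent to $\text{Tr}(XC_L)\geq 0$ for all $X$ in the (closed) cone of positive matrices with positive partial transpose, hence $\mathcal{D}_{d_A,d_B}$ is an intersection of closed half-spaces and thus closed. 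For the second condition, if $P=T_1+\vartheta_{d_B}\circ T_2$ with $T_1,T_2$ completely positive, and $T:\M_{n'}\ra\M_{d_A}$, $S:\M_{d_B}\ra\M_{m'}$ are completely positive, then $S\circ P\circ T = (S\circ T_1\circ T) + S\circ\vartheta_{d_B}\circ T_2\circ T$; using the identity $S\circ\vartheta_{d_B} = \vartheta_{m'}\circ S'$ for the completely positive map $S' := \vartheta_{m'}\circ S\circ\vartheta_{d_B}$ (which is completely positive because transposition conjugates complete positivity to complete copositivity and back), we get $S\circ P\circ T = (S\circ T_1\circ T) + \vartheta_{m'}\circ(S'\circ T_2\circ T)$, a sum of a completely positive map and a transposition composed with a completely positive map, hence decomposable.

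Once $\mathcal{D}$ is recognized as a mapping cone, I would simply invoke Theorem \ref{thm:ReducMappCon} with $\mathcal{C}=\mathcal{D}$ and $N$ the given tensor power. The hypothesis that $P$ is neither completely positive nor completely copositive is shared between the corollary and the theorem, and the hypothesis $P^{\otimes N}$ decomposable is precisely $P^{\otimes N}\in\mathcal{D}$. The conclusion of Theorem \ref{thm:ReducMappCon}, namely $W^{\otimes n}_{p_1}\otimes(\vartheta_{d_A}\circ W_{p_2})^{\otimes m}\in\mathcal{D}$ for all $n,m$ with $n+m\leq N$ and all $p_1,p_2$ in the ranges \eqref{equ:p1p2Ranges} (which are identical to \eqref{equ:p1p2RangesFirst}), is exactly the claimed decomposability. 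So the corollary follows with essentially no further work.

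There is no serious obstacle here; the only mild subtlety worth spelling out is the closedness claim and the verification that transposition conjugated by transposition preserves complete positivity, both of which are routine. One should also note, as the theorem's proof already handles, that if $P^{\otimes N}=0$ the statement is vacuous (the zero map is trivially decomposable), so one may assume $P\neq 0$ throughout, which is consistent with the standing assumption that $P$ is not completely positive. Thus the write-up is short: state that $\mathcal{D}$ is a mapping cone (with the two-line verification above), then apply Theorem \ref{thm:ReducMappCon}.
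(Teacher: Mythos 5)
Your proposal is correct and follows exactly the paper's route: the paper likewise derives Corollary \ref{cor:ReducToWerner} by noting that the decomposable maps form a mapping cone in the sense of Definition \ref{defn:MappingCone} and then invoking Theorem \ref{thm:ReducMappCon} with $\mathcal{C}=\mathcal{D}$. Your explicit verification of the two mapping-cone axioms (closedness via Theorem \ref{thm:DecWitness}, and stability under pre-/post-composition with completely positive maps using $S\circ\vartheta_{d_B}=\vartheta_{m'}\circ S'$ for the completely positive $S'=\vartheta_{m'}\circ S\circ\vartheta_{d_B}$) is sound and merely spells out what the paper leaves implicit.
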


In the next section we will completely characterize the decomposability of the linear maps $W^{\otimes n}_{p_1}\otimes (\vartheta_{d}\circ W_{p_2})^{\otimes m}$ for $n,m\in \N$ via a linear program. With Corollary \ref{cor:ReducToWerner} this yields criteria for when general positive maps have decomposable tensor powers.   

\section{Mixed tensor powers of Werner-maps}
\label{sec:MixedTensroHW}

To study when the mixed tensor powers $W^{\otimes n}_{p_1}\otimes (\vartheta_{d}\circ W_{p_2})^{\otimes m}$ are decomposable we will first introduce a family of symmetric quantum states with positive partial transpose. In the case $n=m=1$ this family has been studied and characterized already in~\cite{vollbrecht2002activating}. Using this family we can find witnesses for non-decomposability (in the sense of Theorem \ref{thm:DecWitness}), that can be found using a linear program.  For general $n,m\in\N$ we will consider a specific example from our family of symmetric quantum states that will yield analytic bounds on the decomposability of the mixed tensor powers of Werner maps. To our knowledge the results obtained here for $(n,m)\neq (1,1)$ are new, but it should be noted that similar families of symmetric quantum states have been studied before in different context (see for example \cite{chruscinski2006multipartite,chruscinski2007quantum,audenaert2001asymptotic}).

\subsection{Symmetric states with positive partial transpose}
\label{subsec:SymmStates}

Throughout this section the dimensions $d_A,d_B\in\N$ will denote the same number $d=d_A=d_B$, and the labels $'A'$ and $'B'$ indicate a particular bipartition of a multipartite state. 

For $n,m\in\N_0$ and a real rectangular matrix $Q\in \M_{n+1,m+1}(\R)$ consider the matrix 
\begin{equation}
H_Q\in (\M_{d_A}\otimes \M_{d_B})^{\otimes n}\otimes (\M_{d_A}\otimes \M_{d_B})^{\otimes m}
\label{equ:HQParts}
\end{equation} 
given by 
\begin{equation}
H_Q = \sum^{n+1}_{k=1}\sum^{m+1}_{l=1} Q_{kl} F(k,l)
\label{equ:HQ}
\end{equation}
with 
\begin{equation}
F(k,l) = \frac{1}{n!m!}\sum_{\substack{i_1,\ldots ,i_n\in \lset 0,1\rset\\ i_1+\cdots + i_n=k-1}}~\sum_{\substack{j_1,\ldots ,j_m\in \lset 0,1\rset\\ j_1+\cdots + j_m=l-1}}P_{i_1}\otimes \cdots \otimes P_{i_n}\otimes P^\Gamma_{j_1}\otimes \cdots \otimes P^{\Gamma}_{j_m},
\label{equ:Skl}
\end{equation}
where $P_0 = \frac{P_\text{sym}}{d_{\text{sym}}}$ and $P_1 = \frac{P_\text{asym}}{d_{\text{asym}}}$. In the following we will always consider the matrix $H_Q$ as a bipartite matrix with respect to the bipartition into $A$ and $B$ systems (see the labeling of dimensions in \eqref{equ:HQParts}). 

We will now characterize the set of parameters $Q\in \M_{n+1,m+1}(\R)$ for which the matrix $H_Q$ is positive:

\begin{thm}[Positivity of $H_Q$]
For $m,d\in \N$ let $V^{m}_d\in\M_{m+1}$ be the matrix with the entries
\begin{equation}
(V^{m}_d)_{ab} = (-1)^{a-1}\sum^{\min(a-1,b-1)}_{t=\max(0,a+b-m-2)} \binom{a-1}{t}\binom{m-a+1}{b-t-1}\lb -\frac{d+1}{d-1}\rb^{t},
\label{equ:V}
\end{equation}
for $a,b\in\lset 1,\ldots, m+1\rset$. The matrix $H_Q\in (\M_{d_A}\otimes \M_{d_B})^{\otimes n}\otimes (\M_{d_A}\otimes \M_{d_B})^{\otimes m}$ with $d=d_A=d_B$ as defined in \eqref{equ:HQ} with $Q\in \M_{n+1,m+1}(\R)$ is positive iff the product $QV^{m}_d$ is entrywise positive.
\label{thm:HQPos}
\end{thm}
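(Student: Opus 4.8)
The plan is to show that $H_Q$ acts as a scalar on each block of a finite orthogonal decomposition and then to match those scalars with the entries of $QV^m_d$. \emph{Step 1 (reduction to a commutative algebra).} On each of the first $n$ factors $\M_{d_A}\otimes\M_{d_B}$ the operators $P_0=P_{\text{sym}}/d_{\text{sym}}$ and $P_1=P_{\text{asym}}/d_{\text{asym}}$ are scaled orthogonal projectors onto the symmetric and antisymmetric subspaces of $\C^d\otimes\C^d$. On each of the last $m$ factors, the identity $\flip_d^\Gamma=\omega_d$ gives $P_0^\Gamma=\tfrac1{2d_{\text{sym}}}(\one_d\otimes\one_d+\omega_d)$ and $P_1^\Gamma=\tfrac1{2d_{\text{asym}}}(\one_d\otimes\one_d-\omega_d)$; since $\omega_d=\proj{\Omega_d}{\Omega_d}$ has the single nonzero eigenvalue $d$ on the line $\C\ket{\Omega_d}$, both $P_0^\Gamma$ and $P_1^\Gamma$ are diagonal in the decomposition $\C^d\otimes\C^d=\C\ket{\Omega_d}\oplus\ket{\Omega_d}^\perp$, with eigenvalues $P_0^\Gamma\mapsto(\tfrac1d,\tfrac1{d(d+1)})$ and $P_1^\Gamma\mapsto(-\tfrac1d,\tfrac1{d(d-1)})$. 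Hence every summand of every $F(k,l)$, and so $H_Q$ itself, is simultaneously diagonalised by the tensor product of the decompositions $(\text{sym})\oplus(\text{asym})$ on the first $n$ factors and $\C\ket{\Omega_d}\oplus\ket{\Omega_d}^\perp$ on the last $m$. As $H_Q$ is also invariant under permuting the first $n$ factors among themselves and the last $m$ among themselves, its eigenvalue on a joint eigenspace depends only on the number $r\in\{0,\dots,n\}$ of antisymmetric factors among the first $n$ and the number $s\in\{0,\dots,m\}$ of $\ket{\Omega_d}$-factors among the last $m$; call it $\lambda_{r,s}$. For $d\ge2$ all these eigenspaces are non-zero (the case $d=1$ is trivial), so $H_Q\ge0$ if and only if $\lambda_{r,s}\ge0$ for all $r,s$.

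\emph{Step 2 (computing $\lambda_{r,s}$).} A clean way to organise the computation is through the generating function $\sum_{k,l}F(k,l)\,x^{k-1}y^{l-1}$, which up to a fixed positive combinatorial factor equals $(P_0+xP_1)^{\otimes n}\otimes(P_0^\Gamma+yP_1^\Gamma)^{\otimes m}$; its eigenvalue on the $(r,s)$-block is $x^{r}\,d_{\text{asym}}^{-r}d_{\text{sym}}^{-(n-r)}$ times $\big(\tfrac{1-y}{d}\big)^{s}\big(\tfrac1{d(d+1)}+\tfrac{y}{d(d-1)}\big)^{m-s}$. Reading off the coefficient of $x^{k-1}y^{l-1}$ shows that $F(k,l)$ vanishes on the $(r,s)$-block unless $k=r+1$ (because $P_0,P_1$ annihilate each other's ranges), and that for $k=r+1$ it equals a strictly positive constant, depending only on $r$ and $s$, times the coefficient of $y^{l-1}$ in $(1-y)^{s}(1+\tfrac{d+1}{d-1}y)^{m-s}$. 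Consequently $\lambda_{r,s}$ is a strictly positive multiple of $\sum_{l=1}^{m+1}Q_{r+1,l}\,[y^{l-1}]\big(1-y\big)^{s}\big(1+\tfrac{d+1}{d-1}y\big)^{m-s}$, again with the multiple depending only on $r,s$.

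\emph{Step 3 (identification with $V^m_d$ and conclusion).} Expanding the binomials in \eqref{equ:V} shows that $(V^m_d)_{ab}$ is precisely $(-1)^{a-1}$ times the coefficient of $y^{b-1}$ in $(1-\tfrac{d+1}{d-1}y)^{a-1}(1+y)^{m-a+1}$. The key remaining fact is the Vandermonde-type binomial identity asserting that the coefficient vector of $(1-y)^{s}(1+\tfrac{d+1}{d-1}y)^{m-s}$ is, after the substitution $s\mapsto m-s$ and a positive rescaling, the $s$-th column of $V^m_d$; heuristically this reflects the substitution $y\mapsto\tfrac{d-1}{d+1}y$, which interchanges the factors $1-y$ and $1+\tfrac{d+1}{d-1}y$ up to scalars. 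Granting this, Step 2 shows that $\lambda_{r,s}$ is a strictly positive multiple of $(QV^m_d)_{r+1,\,m+1-s}$; since reversing the columns of a matrix and rescaling them by positive numbers does not change the system of conditions consisting of the nonnegativity of all entries, we conclude that $H_Q\ge0$ if and only if $QV^m_d$ is entrywise positive. I expect Step 3 to be the real obstacle: fixing the exact combinatorial normalisations hidden in $F(k,l)$ and verifying the identity that reproduces the precise index structure $\binom{a-1}{t}\binom{m-a+1}{b-t-1}(-\tfrac{d+1}{d-1})^{t}$ of $V^m_d$, with all signs and powers matching; Steps 1 and 2 are routine once the spectral data of $P_0^\Gamma$ and $P_1^\Gamma$ are in hand.
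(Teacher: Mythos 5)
Your Steps 1 and 2 are correct and follow the same route as the paper's proof: diagonalise $H_Q$ in the product basis built from symmetric/antisymmetric vectors on the first $n$ factors and from $\ket{\Omega_d}/\sqrt{d}$ together with its orthogonal complement on the last $m$ factors, and note that the eigenvalue depends only on the pair $(r,s)$. Your generating-function packaging of the combinatorics is clean and yields the correct answer: $\lambda_{r,s}$ is a strictly positive multiple of $\sum_{l}Q_{r+1,l}\,c_{l,s}$ with $c_{l,s}=[y^{l-1}](1-y)^{s}\lb 1+\tfrac{d+1}{d-1}y\rb^{m-s}$, and your reading of \eqref{equ:V} as $(V^m_d)_{ab}=(-1)^{a-1}[y^{b-1}]\lb 1-\tfrac{d+1}{d-1}y\rb^{a-1}(1+y)^{m-a+1}$ is also right.

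The obstacle you flagged in Step 3 is, however, a genuine gap: the identity you need is false as stated. Writing $\alpha=\tfrac{d+1}{d-1}$ and $w=m-s$, the correct relation (a consequence of $\binom{m}{w}\binom{w}{t}\binom{m-w}{l-1-t}=\binom{m}{l-1}\binom{l-1}{t}\binom{m-l+1}{w-t}$, both sides being $\tfrac{m!}{t!\,(w-t)!\,(l-1-t)!\,(m-w-l+1+t)!}$) is
\[
c_{l,s}\;=\;\frac{\binom{m}{l-1}}{\binom{m}{w}}\,(V^m_d)_{l,\,w+1},
\]
so the rescaling is by the \emph{row}-dependent factor $\binom{m}{l-1}$, not by a single scalar per column: for $m=2$, $s=0$ one gets $(c_{1,0},c_{2,0},c_{3,0})=(1,2\alpha,\alpha^2)$ versus the column $(1,\alpha,\alpha^2)$ of $V^2_d$. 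Hence your (correct) Steps 1--2 actually prove that $H_Q\geq 0$ iff $QBV^m_d$ is entrywise nonnegative with $B=\mathrm{diag}\lb\binom{m}{0},\ldots,\binom{m}{m}\rb$, which for $m\geq 2$ is a different condition on $Q$ from the one in the statement. This cannot be repaired by a cleverer binomial identity; it traces back to the normalisation $\tfrac{1}{n!\,m!}$ in \eqref{equ:Skl}. Indeed, for $n=1$, $m=2$ and $Q$ with first row $(1,1,0)$ and second row zero, $QV^2_d$ is entrywise nonnegative with $(QV^2_d)_{11}=0$, yet the eigenvalue of $H_Q$ (as literally defined) on $\ket{\psi}\otimes\ket{\Omega_d}\otimes\ket{\Omega_d}$, $\ket{\psi}$ symmetric, equals $-\tfrac{1}{2\,d_{\text{sym}}d^{2}}<0$, so it is not a positive multiple of $(QV^m_d)_{k(w+1)}$. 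If instead $F(k,l)$ is normalised by $\binom{n}{k-1}^{-1}\binom{m}{l-1}^{-1}$ --- which is what the symmetrisation $\mathcal{S}_n\otimes\mathcal{S}_m$ in the proof of Theorem \ref{thm:LinProgrForDec} actually produces, and what the eigenvalue formula at the end of the paper's own proof tacitly uses --- the factor $\binom{m}{l-1}$ cancels and the stated criterion with $QV^m_d$ is exactly what your argument delivers. So to close your Step 3 you must carry the factors $\binom{m}{l-1}$ explicitly and adjust the normalisation of $F(k,l)$ accordingly; with that correction the rest of your proposal stands.
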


\begin{proof}
To prove the theorem we will first diagonalize the Hermitian matrix $H_Q\in (\M_{d_A}\otimes \M_{d_B})^{\otimes n}\otimes (\M_{d_A}\otimes \M_{d_B})^{\otimes m}$ for $Q\in \M_{n+1,m+1}(\R)$. Therefore, consider the vectors
\begin{equation}
\ket{\Psi_{x_1,\ldots,x_n,y_1,\ldots,y_m}} = \bigotimes^n_{i=1}\ket{\psi_{x_i}} \otimes \bigotimes^m_{j=1}\ket{\phi_{y_j}} 
\label{eq:EvecH}
\end{equation}
where each $x_i,y_j\in\lset 1,\ldots ,d^2\rset$ for $i\in\lset 1,\ldots ,n\rset$ and $j\in\lset 1,\ldots ,m\rset$. Here $(\ket{\psi_{x}})^{d^2}_{x=1}$ forms an orthonormal basis of $\C^d\otimes \C^d$ made from symmetric and antisymmetric vectors (say $x\in S = \lset 1,\ldots ,\frac{1}{2}d(d-1)\rset$ corresponding to symmetric and $x\in A = \lset \frac{1}{2}d(d-1)+1, \ldots ,d^2\rset$ corresponding to antisymmetric vectors), and $(\ket{\phi_{y}})^{d^2}_{y=1}$ forms an orthonormal basis of $\C^d\otimes \C^d$ made from the maximally entangled state $\frac{1}{\sqrt{d}}\ket{\Omega}$ and vectors orthogonal to this state (say $y=1$ corresponding to the maximally entangled state). Note that these basis vectors satisfy 
\[
P_{0}\ket{\psi_x} = \begin{cases} \frac{1}{d_{\text{sym}}}\ket{\psi_x}&,\text{ if }x\in S \\ 0&,\text{ otherwise,}\end{cases}\hspace{0.8cm} P_{1}\ket{\psi_x} = \begin{cases} \frac{1}{d_{\text{asym}}}\ket{\psi_x}&,\text{ if }x\in A \\ 0&,\text{ otherwise,}\end{cases}
\]
and 
\[
P^\Gamma_{0}\ket{\phi_y} = \begin{cases} \frac{1}{d}\ket{\phi_y}&,\text{ if }y=1 \\ \frac{1}{2d_{\text{sym}}}\ket{\phi_y}&,\text{ otherwise,}\end{cases}\hspace{1cm} P^\Gamma_{1}\ket{\phi_y} = \begin{cases} -\frac{1}{d}\ket{\phi_y}&,\text{ if }y=1 \\ \frac{1}{2d_{\text{asym}}}\ket{\phi_y}&,\text{ otherwise,}\end{cases}
\]
for $P_0 = P_\text{sym}/d_{\text{sym}}$ and $P_1 = P_\text{asym}/d_{\text{asym}}$. Therefore, it is easy to see that for any $x_1,\ldots , x_n\in \lset 1,\ldots ,d^2\rset$ and $y_1,\ldots , y_n\in \lset 1,\ldots ,d^2\rset$ the vector $\ket{\Psi_{x_1,\ldots,x_n,y_1,\ldots,y_m}}$ defined in \eqref{eq:EvecH} is an eigenvector of $H_Q$ as defined in \eqref{equ:HQ} for $Q\in \M_{n+1,m+1}(\R)$.

To obtain a criterion for the positivity of $H_Q$ we first compute the eigenvalues of $F(k,l)$ as defined in \eqref{equ:Skl}. Fix $k\in \lset 1,\ldots, n+1\rset$ and $l\in \lset 1,\ldots , m+1\rset$, and $x_1,\ldots , x_n,y_1,\ldots ,y_m\in \lset 1,\ldots ,d^2\rset$. Then, we have 
\begin{align*}
&\bra{\Psi_{x_1,\ldots,x_n,y_1,\ldots,y_m}}F(k,l)\ket{\Psi_{x_1,\ldots,x_n,y_1,\ldots,y_m}} \\
&\hspace{0.5cm} = \begin{cases} \displaystyle\frac{1}{\binom{n}{k-1} m! d^{n-k+1}_\text{sym}d^{k-1}_\text{asym}}\sum_{\substack{j_1,\ldots ,j_m\in \lset 0,1\rset\\ j_1+\cdots + j_m=l-1}} \,\prod^m_{i=1}\bra{\phi_{y_i}}P^\Gamma_{j_i}\ket{\phi_{y_i}}&,\text{ if }k= |\lset i~:~x_i\in A\rset|+1 ,\\ 0&,\text{otherwise.}\end{cases}
\end{align*}
Let $w = |\lset i~:~y_i\neq 1\rset|$ and for fixed $j_1,\ldots ,j_m\in \lset 0,1\rset$ satisfying $l=j_1+\cdots j_m + 1$ denote $t = |\lset i~:~j_i=1\text{ and }y_i\neq 1\rset|$. Now note that by the above
\begin{align*}
\prod^m_{i=1}\bra{\phi_{y_i}}P^\Gamma_{j_i}\ket{\phi_{y_i}} &= \lb -\frac{1}{d}\rb^{l-t-1}\lb \frac{1}{2d_{\text{asym}}}\rb^{t}\lb \frac{1}{d}\rb^{m-l-w+t+1}\lb \frac{1}{2d_{\text{sym}}}\rb^{w-t} \\
&=\frac{(-1)^{l-t-1}}{d^m (d+1)^w}\lb\frac{d+1}{d-1}\rb^t.
\end{align*}
Since for fixed parameters $w,l,d$ and $m$ the value of the product in the previous equation only depends $t\in\lset\max(0,l+w-m-1),\ldots , \min(l-1,w)\rset$ we find that
\begin{align*}
\sum_{\substack{j_1,\ldots ,j_m\in \lset 0,1\rset\\ j_1+\cdots + j_m=l-1}} \,&\prod^m_{i=1}\bra{\phi_{y_i}}P^\Gamma_{j_i}\ket{\phi_{y_i}} \\
&= \frac{w!(m-w)!}{d^m(d+1)^w}(-1)^{l-1}\sum^{\min(l-1,w)}_{t=\max(0,l+w-m-1)} \binom{l-1}{t}\binom{m-l+1}{w-t}\lb -\frac{d+1}{d-1}\rb^{t}.
\end{align*}
Here we have used that there are $\binom{l-1}{t}\binom{m-l+1}{w-t}w!(m-w)!$ many possibilities to select $t$ of the $P^\Gamma_{1}$ and $w-t$ of the $P^\Gamma_{0}$ to be paired with $\ket{\phi_{y_i}}$ for $y_i\neq 1$ (thereby fixing which of the $P^\Gamma_{1}$ and $P^\Gamma_{0}$ are paired with $\ket{\phi_{y_i}}$ for $y_i = 1$) and counting all possible permutations corresponding to the same $t$ separately. 

Now it is easy to see that the eigenvalues of $H_Q$ for $Q\in \M_{n+1,m+1}(\R)$ are given by 
\[
\bra{\Psi_{x_1,\ldots,x_n,y_1,\ldots,y_m}}H_Q\ket{\Psi_{x_1,\ldots,x_n,y_1,\ldots,y_m}} = \frac{1}{\binom{n}{k-1} \binom{m}{w} d^{n-k+1}_\text{sym}d^{k-1}_\text{asym}d^m(d+1)^w} (QV^{m}_d)_{k(w+1)}
\]
for $k-1= |\lset i~:~x_i\in A\rset|\in \lset 0,\ldots ,n\rset$ and $w = |\lset i~:~y_i\neq 1\rset|\in \lset 0,\ldots ,m\rset$. These numbers are all positive iff the product $QV^{m}_d$ is entrywise positive.

\end{proof} 

Note that for $Q\in \M_{n+1,m+1}(\R)$ the matrices $H^\Gamma_Q$ and $H_{Q^T}$ are unitarily equivalent by exchanging the first $n$ pairs of $\C^d$ with the final $m$ pairs of $\C^d$. Therefore, we immediately get the following corollary.

\begin{cor}[Positive partial transpose of $H_Q$]
For $n,d\in \N$ let $V^{n}_d\in\M_{n+1}$ be defined as in $\eqref{equ:V}$. The matrix $H_Q\in (\M_{d_A}\otimes \M_{d_B})^{\otimes n}\otimes (\M_{d_A}\otimes \M_{d_B})^{\otimes m}$ with $d=d_A=d_B$ as defined in \eqref{equ:HQ} with $Q\in \M_{n+1,m+1}(\R)$ has positive partial transpose (with respect to the bipartition into $'A'$ and $'B'$ systems) iff the product $Q^TV^{n}_d$ is entrywise positive.
\label{cor:HQPPT}

\end{cor}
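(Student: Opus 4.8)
The plan is to reduce the claim to Theorem \ref{thm:HQPos} by a change of perspective. The key observation, already noted in the remark preceding the corollary, is that the partial transpose $H_Q^\Gamma$ (with respect to the bipartition into $A$ and $B$ systems) is unitarily equivalent to $H_{Q^T}$, where the roles of the first $n$ pairs of $\C^d$ and the final $m$ pairs of $\C^d$ are interchanged. First I would make this equivalence precise: applying the partial transpose $\ident_A \otimes \vartheta_B$ to $H_Q = \sum_{k,l} Q_{kl} F(k,l)$ turns each $P_i$ on an $A$–$B$ pair into $P_i^\Gamma$ and each $P_j^\Gamma$ into $(P_j^\Gamma)^\Gamma = P_j$, so that $H_Q^\Gamma = \sum_{k,l} Q_{kl}\,\widehat{F}(k,l)$ where $\widehat F(k,l)$ now has the non-transposed $P$'s on the last $m$ pairs and the transposed $P^\Gamma$'s on the first $n$ pairs. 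Relabeling the tensor factors (exchanging the block of the first $n$ pairs with the block of the last $m$ pairs, via the obvious unitary permutation of tensor factors) identifies $H_Q^\Gamma$ with $H_{Q^T}$ built from $m$ "symmetric-type" pairs and $n$ "transposed-type" pairs, where $Q^T \in \M_{m+1,n+1}(\R)$ is the transpose of $Q$.

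Once this identification is in place, the corollary follows immediately: $H_Q^\Gamma \geq 0$ if and only if $H_{Q^T} \geq 0$ (unitary equivalence preserves positivity), and by Theorem \ref{thm:HQPos} — applied with the roles of $n$ and $m$ swapped, so that the relevant matrix is $V_d^{n} \in \M_{n+1}$ rather than $V_d^{m}$ — this holds if and only if the product $Q^T V_d^{n}$ is entrywise positive. This chain of equivalences is exactly the statement of the corollary.

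The only genuinely delicate point is bookkeeping with the tensor-factor permutation: one must check that exchanging the first $n$ pairs with the last $m$ pairs is implemented by a bona fide unitary (a permutation of tensor factors of $(\C^d \otimes \C^d)^{\otimes(n+m)}$), and that under this permutation the specific operators $F(k,l)$ transform as claimed — i.e. the summation structure in \eqref{equ:Skl}, which is symmetric under permuting the $n$ indices $i_1,\dots,i_n$ among themselves and the $m$ indices $j_1,\dots,j_m$ among themselves, is compatible with the swap of the two blocks. Since each $F(k,l)$ is already invariant under permutations within each block, the only effect of the block swap is to exchange the row index $k$ with the column index $l$, which is precisely the passage from $Q$ to $Q^T$. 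I expect this to be routine rather than hard, so the corollary really is an immediate consequence of Theorem \ref{thm:HQPos}, which is why the authors state it without a separate proof.
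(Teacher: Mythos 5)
Your proposal is correct and is exactly the paper's argument: the paper proves the corollary by the one-line remark preceding it, namely that $H_Q^\Gamma$ and $H_{Q^T}$ are unitarily equivalent via the block swap of the first $n$ pairs with the last $m$ pairs, after which Theorem \ref{thm:HQPos} applies with the roles of $n$ and $m$ exchanged. Your careful bookkeeping of how the partial transpose and the permutation act on the operators $F(k,l)$ simply makes explicit what the paper leaves implicit.
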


\subsection{Linear program for decomposability}
\label{sec:DecompHW}

Before we can state our main result, we will have to introduce some notation. For $n\in\N$ and $p\in\lbr 0,1\rbr$ let $\ket{v^n_p}\in \R^{n+1}$ denote the vector with entries
\begin{equation}
(\ket{v^n_p})_k = \lb\frac{d+1}{d-1}\rb^{k-1} (1-p)^{k-1} p^{n-k+1}
\label{equ:vecV}
\end{equation}
for $k\in\lset 1,\ldots ,n+1\rset$. The following theorem characterizes the decomposability of mixed tensor powers of Werner maps in terms of a linear program. 

\begin{thm}[Linear program for decomposability]
For $d,n,m\in\N$ let $W_{p_1},W_{p_2}:\M_d\ra\M_d$ denote Werner maps with parameters $p_1,p_2\in\lbr 0,1\rbr$ (see Section \ref{sec:WernerStatesAndMaps}). Then, the mixed tensor power $W^{\otimes n}_{p_1}\otimes (\vartheta_{d}\circ W_{p_2})^{\otimes m}$ is decomposable iff the value of the linear program
\begin{align*}
\text{minimize}\hspace{0.3cm}&\bra{v^n_{p_1}}Q\ket{v^m_{p_2}} \\
\text{subject to}\hspace{0.3cm}&Q\in \M_{n+1,m+1}(\R) \\
&\bra{i}QV^{m}_d\ket{j}\geq 0, \\
&\bra{j}Q^TV^{n}_d\ket{i}\geq 0, \text{ for all }i\in\lset 1,\ldots , n+1\rset\text{ and }j\in\lset 1,\ldots , m+1\rset\\
& \sum^{n+1}_{k=1}\sum^{m+1}_{l=1} Q_{kl} = 1
\end{align*}  
is positive. Here $V^{n}_d$ denotes the matrix introduced in \eqref{equ:V} and $\ket{v^n_{p}}$ the vector from \eqref{equ:vecV}.

\label{thm:LinProgrForDec}
\end{thm}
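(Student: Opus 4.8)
The plan is to use the duality characterization of decomposable maps from Theorem~\ref{thm:DecWitness}, applied via the Choi--Jamiolkowski isomorphism to the map $W^{\otimes n}_{p_1}\otimes (\vartheta_{d}\circ W_{p_2})^{\otimes m}$. By Theorem~\ref{thm:DecWitness}, this map is decomposable if and only if $\text{Tr}\lb X\, C\rb\geq 0$ for every $X\geq 0$ with $X^{\Gamma}\geq 0$, where $C$ is the Choi matrix of the map and the partial transpose is taken with respect to the $A$--$B$ bipartition. So the first step is to compute $C$ explicitly. Since the Choi matrix of $W_p$ is the Werner state $\rho_W(p)$, and tensor products of maps have tensor-product Choi matrices (up to reordering of systems into the $A$ and $B$ groups), $C$ will be a tensor product of $n$ Werner states and $m$ partially-transposed Werner states. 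I would rewrite each Werner state in terms of $P_0=P_{\text{sym}}/d_{\text{sym}}$ and $P_1=P_{\text{asym}}/d_{\text{asym}}$ as $\rho_W(p)=p\,d_{\text{sym}}P_0+(1-p)\,d_{\text{asym}}P_1$ (absorbing normalization constants), so that $C$ becomes a linear combination of the operators $P_{i_1}\otimes\cdots\otimes P_{i_n}\otimes P^{\Gamma}_{j_1}\otimes\cdots\otimes P^{\Gamma}_{j_m}$ appearing inside the definition \eqref{equ:Skl} of $F(k,l)$.

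The second step is to reduce the infinite family of test operators $X$ to the finite-dimensional family $H_Q$ from \eqref{equ:HQ}. The key observation is a symmetry/twirling argument: the map $W^{\otimes n}_{p_1}\otimes(\vartheta_d\circ W_{p_2})^{\otimes m}$ is invariant under the group generated by the local $UU$-twirls on each of the $n+m$ pairs together with the permutations of the $n$ "positive" pairs among themselves and the $m$ "transposed" pairs among themselves. Averaging any test operator $X$ over this group does not change $\text{Tr}\lb XC\rb$ (by invariance of $C$ and cyclicity of trace), preserves positivity of $X$, and preserves positivity of $X^{\Gamma}$ (since the group commutes with the $A$--$B$ partial transpose in the appropriate sense). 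The fixed-point space of this group action is exactly $\{H_Q: Q\in\M_{n+1,m+1}(\R)\}$ — this is why the $F(k,l)$ are defined with the symmetrization over $i_1+\cdots+i_n=k-1$ and $j_1+\cdots+j_m=l-1$. So it suffices to test decomposability against $X=H_Q$ with $H_Q\geq 0$ and $H_Q^{\Gamma}\geq 0$; by Theorem~\ref{thm:HQPos} and Corollary~\ref{cor:HQPPT} these constraints translate exactly into the entrywise-positivity conditions $QV^m_d\geq 0$ and $Q^T V^n_d\geq 0$. The normalization $\sum_{k,l}Q_{kl}=1$ can be imposed because the constraints are homogeneous and we only care about the sign of the value; dividing by $\text{Tr}(H_Q)$ (which is a positive multiple of $\sum_{k,l}Q_{kl}$ when $H_Q\geq0$) gives this scaling, modulo handling the boundary case $\sum_{k,l}Q_{kl}=0$ separately by a limiting argument.

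The third step is to compute the objective function $\text{Tr}\lb H_Q\, C\rb$ and show it equals (a positive multiple of) $\bra{v^n_{p_1}}Q\ket{v^m_{p_2}}$. Using the eigenvalue computations already carried out in the proof of Theorem~\ref{thm:HQPos} — where the eigenvalues of $F(k,l)$ on the basis vectors $\ket{\Psi_{\ldots}}$ were found — together with the analogous diagonalization of $C$ (which is diagonal in the same tensor-product eigenbasis built from symmetric/antisymmetric vectors on the first $n$ pairs and maximally-entangled/orthogonal vectors on the last $m$ pairs), the trace becomes a sum over $(k,w)\in\{0,\ldots,n\}\times\{0,\ldots,m\}$ of $Q$-entries against products of the Werner-state eigenvalues. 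Tracking the powers of $p_1$, $1-p_1$, $p_2$, $1-p_2$ and the factors of $\tfrac{d+1}{d-1}$ that arise from the $P^\Gamma$ eigenvalues on the non-maximally-entangled sector, one recognizes exactly the entries of $\ket{v^n_{p_1}}$ from \eqref{equ:vecV} on the $A$-side and $\ket{v^m_{p_2}}$ on the $B$-side, so $\text{Tr}\lb H_Q C\rb = c\,\bra{v^n_{p_1}}Q\ket{v^m_{p_2}}$ for some explicit $c>0$. Combining: the map is decomposable iff $\bra{v^n_{p_1}}Q\ket{v^m_{p_2}}\geq 0$ for all feasible $Q$, i.e.\ iff the linear program value is (non-)negative; a small argument (the feasible region contains $Q$ with strictly positive value, e.g.\ a suitable rank-one $Q$, or the all-positive cone is full-dimensional) upgrades "$\geq 0$" to "$>0$" as stated, and conversely a feasible $Q$ with negative value yields a witness of non-decomposability.

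I expect the main obstacle to be the bookkeeping in the second step: rigorously identifying the fixed-point space of the twirl-plus-permutation group with the span of the $H_Q$, and checking that the averaging is compatible with the partial-transpose positivity constraint on $X$ (so that the reduction to $H_Q$ is lossless in both directions). The trace computation in step three is "routine" given the groundwork in Theorem~\ref{thm:HQPos}, but matching constants carefully enough to be sure the final quantity is a \emph{positive} multiple of $\bra{v^n_{p_1}}Q\ket{v^m_{p_2}}$ — and handling the degenerate normalization case — also requires some care.
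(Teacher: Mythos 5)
Your proposal follows essentially the same route as the paper's proof: it applies the duality test of Theorem~\ref{thm:DecWitness} to the Choi matrix $\rho_W(p_1)^{\otimes n}\otimes(\rho_W(p_2)^{\Gamma})^{\otimes m}$, reduces the test operators to the family $H_Q$ by averaging over the self-adjoint $UU$-twirls and the permutations within the two groups of pairs, translates positivity and PPT of $H_Q$ via Theorem~\ref{thm:HQPos} and Corollary~\ref{cor:HQPPT}, and evaluates the objective to a positive multiple of $\bra{v^n_{p_1}}Q\ket{v^m_{p_2}}$. The only place you are slightly more cautious than necessary is at the end: the paper's convention is that ``positive'' means nonnegative, so no upgrade from $\geq 0$ to $>0$ is required, and the normalization is simply imposed as $\sum_{k,l}Q_{kl}=1$ exactly as you suggest.
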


\begin{proof}
Let $d,n,m\in\N$ be fixed, and note that the Choi matrix of the map $W^{\otimes n}_{p_1}\otimes (\vartheta_{d}\circ W_{p_2})^{\otimes m}$ is unitarily equivalent to $\rho_W(p_1)^{\otimes n}\otimes (\rho_W(p_2)^\Gamma)^{\otimes m}$ with $\rho_W(p)\in(\M_d\otimes \M_d)^+$ denoting the Werner states (see \eqref{equ:WernerStates}). By Theorem \ref{thm:DecWitness} the map $W^{\otimes n}_{p_1}\otimes (\vartheta_{d}\circ W_{p_2})^{\otimes m}$ is decomposable iff
\begin{equation}
\min_{H}\text{Tr}\lbr \lb\rho_W(p_1)^{\otimes n}\otimes (\rho_W(p_2)^\Gamma)^{\otimes m}\rb H\rbr\geq 0,
\label{equ:decMin}
\end{equation}  
where the minimization is over $H\in (\M_{d}\otimes \M_{d})^{\otimes n+m}$ with positive partial transpose and trace equal to $1$. 

We will now impose symmetries on the minimization problem \eqref{equ:decMin} to obtain the final linear program in the theorem. Note first, that the twirl map $T_{UU}:\M_{d}\otimes \M_d\ra \M_{d}\otimes \M_d$ defined in \eqref{equ:Twirl} is selfadjoint (with respect to the Hilbert-Schmidt inner product) and leaves $\rho_W(p)$ invariant. Therefore, we have for any $H\in (\M_{d}\otimes \M_{d})^{\otimes n+m}$ that 
\begin{align}
\nonumber\text{Tr}&\lbr \lb\rho_W(p_1)^{\otimes n}\otimes (\rho_W(p_2)^\Gamma)^{\otimes m}\rb H\rbr\\
&= \nonumber\text{Tr}\lbr \lb T_{UU}\lb\rho_W(p_1)\rb^{\otimes n}\otimes (T_{UU}\lb\rho_W(p_2)\rb^\Gamma)^{\otimes m}\rb H\rbr \\
&= \text{Tr}\lbr \lb\rho_W(p_1)^{\otimes n}\otimes \lb\rho_W(p_2)^\Gamma\rb^{\otimes m}\rb \lb T^{\otimes n}_{UU}\otimes \lb(\ident_d\otimes \vartheta_d)\circ T_{UU}\circ (\ident_d\otimes \vartheta_d)\rb^{\otimes m} \rb (H)\rbr.
\label{equ:IntermediateStepMin}
\end{align}
By \eqref{equ:Twirl} and using that the transposition is selfadjoint (with respect to the Hilbert-Schmidt inner product) we have 
\[
\lb(\ident_d\otimes \vartheta_d)\circ T_{UU}\circ (\ident_d\otimes \vartheta_d)\rb(X) = \text{Tr}\lbr X P^\Gamma_{\text{sym}}\rbr\frac{P^\Gamma_{\text{sym}}}{d_{\text{sym}}} + \text{Tr}\lbr X P^\Gamma_{\text{asym}}\rbr\frac{P^\Gamma_{\text{asym}}}{d_{\text{asym}}}.
\]
Using \eqref{equ:Twirl} again we obtain 
\begin{align}
\nonumber( T^{\otimes n}_{UU}&\otimes \lb(\ident_d\otimes \vartheta_d)\circ T_{UU}\circ (\ident_d\otimes \vartheta_d)\rb^{\otimes m} ) (H) \\
&= \sum_{i_1,\ldots ,i_n\in \lset 0,1\rset}\sum_{j_1,\ldots ,j_m\in \lset 0,1\rset} q^{i_1\ldots i_n}_{j_1\ldots j_m} P_{i_1}\otimes \cdots \otimes P_{i_n}\otimes P^\Gamma_{j_1}\otimes \cdots \otimes P^{\Gamma}_{j_m}
\label{equ:NewH}
\end{align}
with $P_0 = \frac{P_\text{sym}}{d_{\text{sym}}}$ and $P_1 = \frac{P_\text{asym}}{d_{\text{asym}}}$ and 
\[
q^{i_1\ldots i_n}_{j_1\ldots j_m} = d^{n+m-i-j}_{\text{sym}}d^{i+j}_{\text{asym}}\text{Tr}\lbr \lb P_{i_1}\otimes \cdots \otimes P_{i_n}\otimes P^\Gamma_{j_1}\otimes \cdots \otimes P^\Gamma_{j_m}\rb H \rbr
\]
with $i= \sum^n_{k=1} i_k$ and $j= \sum^m_{l=1} j_l$. Note that since $P^\Gamma_{1}$ is \emph{not} positive some of the coefficients $q^{i_1\ldots i_n}_{j_1\ldots j_m}$ can take negative values. By \eqref{equ:IntermediateStepMin} we can simplify the minimization in \eqref{equ:decMin} by restricting to positive $H$ with positive partial transpose of the form \eqref{equ:NewH} and with trace equal to $1$. 

For any permutation $\sigma\in S_n$ we denote by $U^{(n)}_\sigma:(\C^d\otimes \C^d)^{\otimes n}$ the unitary acting as
\[
U^{(n)}_\sigma (\ket{v_1}\otimes \cdots \otimes\ket{v_n}) = \ket{v_{\sigma(1)}}\otimes \cdots \otimes\ket{v_{\sigma(n)}}
\]
for any $\ket{v_1},\ldots , \ket{v_n}\in (\C^d\otimes \C^d)$. Then, we can define the symmetrization map $\mathcal{S}_n:(\M_d\otimes \M_d)^{\otimes n}\ra(\M_d\otimes \M_d)^{\otimes n}$ by 
\[
\mathcal{S}_n(X) = \frac{1}{n!}\sum_{\sigma\in S_n} U^{(n)}_\sigma X (U^{(n)}_\sigma)^\dagger,
\] 
for any $X\in(\M_d\otimes \M_d)^{\otimes n}$. With this map we have
\begin{align}
\nonumber\text{Tr}\lbr \lb\rho_W(p_1)^{\otimes n}\otimes (\rho_W(p_2)^\Gamma)^{\otimes m}\rb H\rbr &= \text{Tr}\lbr \lb\mathcal{S}_n\lb\rho_W(p_1)^{\otimes n}\rb\otimes \mathcal{S}_m\lb(\rho_W(p_2)^\Gamma)^{\otimes m}\rb\rb H\rbr \\
&= \text{Tr}\lbr \lb\rho_W(p_1)^{\otimes n}\otimes (\rho_W(p_2)^\Gamma)^{\otimes m}\rb \lb\mathcal{S}_n\otimes \mathcal{S}_m\rb\lb H\rb\rbr,
\label{equ:PermSymm}
\end{align}
for any $H\in (\M_{d}\otimes \M_{d})^{\otimes n+m}$. Now note that for $H$ of the form \eqref{equ:NewH} with coefficients $q^{i_1\ldots i_n}_{j_1\ldots j_m}\in \R$ we obtain 
\[
\lb\mathcal{S}_n\otimes \mathcal{S}_m\rb\lb H\rb = \sum^{n+1}_{k=1}\sum^{m+1}_{l=1} Q_{kl} F(k,l)
\]
with 
\[
F(k,l) = \frac{1}{n!m!}\sum_{\substack{i_1,\ldots ,i_n\in \lset 0,1\rset\\ i_1+\cdots + i_n=k-1}}\sum_{\substack{j_1,\ldots ,j_m\in \lset 0,1\rset\\ j_1+\cdots + j_m=l-1}}P_{i_1}\otimes \cdots \otimes P_{i_n}\otimes P^\Gamma_{j_1}\otimes \cdots \otimes P^{\Gamma}_{j_m},
\]
and
\[
Q_{kl} = \sum_{\substack{i_1,\ldots ,i_n\in \lset 0,1\rset\\ i_1+\cdots + i_n=k-1}}\sum_{\substack{j_1,\ldots ,j_m\in \lset 0,1\rset\\ j_1+\cdots + j_m=l-1}} q^{i_1\ldots i_n}_{j_1\ldots j_m}.
\]
We can therefore simplify the minimization problem \eqref{equ:decMin} further, by restricting $H$ to the family of states $H_Q$ with positive partial transpose as in \eqref{equ:HQ} parametrized by a real matrix $Q\in \M_{n+1,m+1}(\R)$.

Given $Q\in \M_{n+1,m+1}(\R)$ we can use the definition \eqref{equ:WernerStates} of the Werner states to compute  
\begin{align*}
\text{Tr}&\lbr \lb\rho_W(p_1)^{\otimes n}\otimes (\rho_W(p_2)^\Gamma)^{\otimes m}\rb H_Q\rbr \\
&= \sum^{n-1}_{k=1}\sum^{m-1}_{l=1} Q_{kl} \text{Tr}\lbr \lb\rho_W(p_1)^{\otimes n}\otimes (\rho_W(p_2)^\Gamma)^{\otimes m}\rb F(k,l)\rbr \\
& = \sum^{n-1}_{k=1}\sum^{m-1}_{l=1} Q_{kl} \frac{1}{d_{sym}^{n+m-k-l+2}d_{asym}^{k+l-2}} (1-p_1)^{k-1}(1-p_2)^{l-1}p_1^{n-k+1}p_2^{m-l+1}\\
& = \frac{2^{n+m}}{d^{n+m}(d+1)^{n+m}}\bra{v^n_{p_1}}Q\ket{v^m_{p_2}},
\end{align*}
with $\ket{v^n_{p}}$ as defined in \eqref{equ:vecV}.
Now the minimization \eqref{equ:decMin} to
\begin{equation}
\min_{H}\text{Tr}\lbr \lb\rho_W(p_1)^{\otimes n}\otimes (\rho_W(p_2)^\Gamma)^{\otimes m}\rb H\rbr
 = \frac{2^{n+m}}{d^{n+m}(d+1)^{n+m}}\min_{Q} \bra{v^n_{p_1}}Q\ket{v^m_{p_2}},
 \label{equ:FinalOpt}
\end{equation}
where the final minimization is over all matrices $Q\in \M_{n+1,m+1}(\R)$ corresponding to positive matrices $H_Q$ via \eqref{equ:HQ} with positive partial transpose and satisfying $\sum^{n}_{k=1}\sum^m_{l=1} Q_{kl}=1$. Using Theorem \ref{thm:HQPos} and Corollary \ref{cor:HQPPT} these are the matrices $Q\in \M_{n+1,m+1}(\R)$ for which $QV^{m}_d$ and $Q^TV^{n}_d$ are entrywise positive, with $V^{n}_d$ (and $V^{m}_d$) as defined in \eqref{equ:V}. With these conditions and omitting the positive factor in \eqref{equ:FinalOpt} (since we are only interested in positivity of the minimization problem) we obtain the linear program stated in the theorem.

\end{proof}

It should be noted, that the linear program from the previous theorem should be seen mainly as a theoretical result. When $n,m\in\N$ are small enough it can be used to numerically determine the parameter regions where the maps $W^{\otimes n}_{p_1}\otimes (\vartheta_{d}\circ W_{p_2})^{\otimes m}$ are decomposable. However, for large $n,m\in\N$ one has to be careful since some entries of $V^{n}_d$ (see \eqref{equ:V}) get very large in magnitude, and the alternating signs in $V^{n}_d$ can cause numerical instabilities.

\subsection{Special case: $n=m=1$}
\label{sec:SpecialCase}

For $n=m=1$ the linear program from Theorem \ref{thm:LinProgrForDec} has a simple form, and the set of feasible points has been characterized already in \cite{vollbrecht2002activating}. Using this we can derive an analytic condition for decomposability of the map $W_{p_1}\otimes (\vartheta_{d}\circ W_{p_2})$. Specifically, we will show

\begin{thm}
For $d\in\N$ and $p_1,p_2\in \lbr 0,1\rbr$ consider the Werner maps $W_{p_1},W_{p_2}:\M_d\ra\M_d$ as defined in Section \ref{sec:WernerStatesAndMaps}. Then, the map $W_{p_1} \otimes (\vartheta_d\circ W_{p_2})$ is decomposable iff  
\begin{equation}
 \frac{d-1}{d+1}p_1p_2 + (1-p_1)p_2 + p_1(1-p_2) - (1-p_1)(1-p_2) \geq  0.
\label{equ:n1m1}
\end{equation} 
\label{thm:n1m1}
\end{thm}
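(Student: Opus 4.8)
The plan is to specialize the linear program of Theorem~\ref{thm:LinProgrForDec} to $n=m=1$ and evaluate it explicitly. From the definition \eqref{equ:V} one computes $V^1_d = \begin{pmatrix} 1 & 1 \\ -1 & \frac{d+1}{d-1}\end{pmatrix}$, and from \eqref{equ:vecV} the vector $\ket{v^1_p}=\big(\,p\,,\ \frac{d+1}{d-1}(1-p)\,\big)^T\in\R^2$. Writing $Q=\begin{pmatrix} a & b \\ c & e\end{pmatrix}\in\M_{2,2}(\R)$ and $r:=\frac{d+1}{d-1}$, the requirements that $QV^1_d$ and $Q^TV^1_d$ be entrywise positive unfold into the eight inequalities $a\geq b$, $a\geq c$, $b\geq e$, $c\geq e$, $a+rb\geq 0$, $a+rc\geq 0$, $b+re\geq 0$, $c+re\geq 0$, together with $a+b+c+e=1$, while the objective becomes $\bra{v^1_{p_1}}Q\ket{v^1_{p_2}} = p_1p_2\,a + rp_1(1-p_2)\,b + r(1-p_1)p_2\,c + r^2(1-p_1)(1-p_2)\,e$. (The feasible set is precisely the family of PPT states characterized in \cite{vollbrecht2002activating}.) So the claim to prove is that the value of this linear program is nonnegative if and only if \eqref{equ:n1m1} holds.

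For the ``if'' direction I would assume the left-hand side of \eqref{equ:n1m1} is nonnegative and take an arbitrary feasible $Q$. All four coefficients of the objective are $\geq 0$. If $e\geq 0$, then $e\leq b\leq a$ and $e\leq c$ force $a,b,c,e\geq 0$, so the objective is nonnegative. If $e<0$, then $b+re\geq 0$ and $c+re\geq 0$ give $b,c\geq r|e|$, hence also $a\geq b\geq r|e|$; bounding the first three terms below by $r|e|\big(p_1p_2+rp_1(1-p_2)+r(1-p_1)p_2\big)$ and writing the last term as $-r^2(1-p_1)(1-p_2)|e|$, the objective is at least $|e|$ times $r\big(p_1p_2+rp_1(1-p_2)+r(1-p_1)p_2\big)-r^2(1-p_1)(1-p_2)$, which equals $r^2$ times the left-hand side of \eqref{equ:n1m1} (using $\tfrac1r=\tfrac{d-1}{d+1}$) and is thus $\geq 0$. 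In both cases the linear program has nonnegative value, so $W_{p_1}\otimes(\vartheta_d\circ W_{p_2})$ is decomposable by Theorem~\ref{thm:LinProgrForDec}.

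For the ``only if'' direction it is enough to exhibit one feasible point at which the objective equals a positive multiple of the left-hand side of \eqref{equ:n1m1}. I would take $Q^\ast=\frac{1}{3r-1}\begin{pmatrix} r & r \\ r & -1\end{pmatrix}$: its entries sum to $1$, and since $Q^\ast$ is symmetric with $Q^\ast V^1_d = \frac{1}{3r-1}\begin{pmatrix} 0 & r+r^2 \\ r+1 & 0\end{pmatrix}$ entrywise positive (note $3r-1>0$ as $r>1$ for $d\geq 2$), it is feasible. The same bookkeeping as above gives $\bra{v^1_{p_1}}Q^\ast\ket{v^1_{p_2}} = \frac{r^2}{3r-1}$ times the left-hand side of \eqref{equ:n1m1}. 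Hence if \eqref{equ:n1m1} fails the linear program has negative value and the map is not decomposable; equivalently, decomposability of $W_{p_1}\otimes(\vartheta_d\circ W_{p_2})$ forces \eqref{equ:n1m1}. Together with the previous paragraph this proves the theorem.

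The only genuine work is the sign-sensitive algebraic identity that $r\big(p_1p_2+rp_1(1-p_2)+r(1-p_1)p_2\big)-r^2(1-p_1)(1-p_2)$ equals $r^2$ times the expression in \eqref{equ:n1m1}; everything else is routine. The reason no vertex enumeration of the three-dimensional feasible polytope is needed is exactly the dichotomy in the ``if'' direction: $e\geq 0$ forces all entries of $Q$ to be nonnegative, while $e<0$ forces $a,b,c\geq r|e|$, so the minimum of the objective is controlled entirely by that single identity and is realized by $Q^\ast$.
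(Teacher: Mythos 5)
Your proof is correct. It follows the paper's overall strategy up to the specialization of the linear program of Theorem~\ref{thm:LinProgrForDec} to $n=m=1$ (your eight inequalities and your objective both match what one gets from \eqref{equ:V} and \eqref{equ:vecV}), but from there it diverges: the paper invokes Lemma~\ref{lem:ExtremePoints} (the extreme-point characterization of the feasible set from \cite{vollbrecht2002activating}, proved at some length in Appendix~\ref{sec:Appendix}), evaluates the objective at all five vertices, and observes that only $Q_1$ can be negative. You bypass the vertex enumeration entirely: your dichotomy on the sign of $e$ (if $e\geq 0$ the monotonicity constraints force all entries nonnegative; if $e<0$ the constraints $b+re\geq 0$, $c+re\geq 0$, $a\geq b$ force $a,b,c\geq r|e|$, after which the nonnegativity of the objective's coefficients reduces everything to the single algebraic identity you isolate) gives the lower bound over the whole polytope directly, and your witness $Q^\ast$ is exactly the paper's $Q_1$ in the parametrization $r=\tfrac{d+1}{d-1}$, since $3r-1=\tfrac{2(d+2)}{d-1}$. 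What your route buys is self-containedness — no appeal to \cite{vollbrecht2002activating} and no appendix — at the cost of not producing the full vertex description of the feasible set, which the paper reuses conceptually (the generalization of $Q_1$ in \eqref{equ:QForAnBoun} drives Section~\ref{sec:AnalytBoundHW}). Both arguments implicitly require $d\geq 2$ so that $r$ is defined, but that restriction is already built into the Werner-state formalism of Section~\ref{sec:WernerStatesAndMaps}, so it is not a gap specific to your write-up.
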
 

For $p_1=p_2$ the bound \eqref{equ:n1m1} can be further simplified:

\begin{cor}
For $d\in\N$ and $p\in \lbr 0,1\rbr$ consider the Werner map $W_{p}:\M_d\ra\M_d$ as defined in Section \ref{sec:WernerStatesAndMaps}. Then, the map $W_{p} \otimes (\vartheta_d\circ W_{p})$ is decomposable iff  
\begin{equation}
p \geq \frac{1}{2+\sqrt{\frac{2d}{d+1}}}.
\label{equ:n1m1Equalp}
\end{equation}
\label{cor:n1m1Equalp}
\end{cor}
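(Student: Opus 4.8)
The plan is to specialize Theorem~\ref{thm:n1m1} to the diagonal $p_1 = p_2 = p$ and solve the resulting one-variable inequality in closed form. Setting $p_1 = p_2 = p$ in \eqref{equ:n1m1} gives
\[
\frac{d-1}{d+1}\,p^2 + 2p(1-p) - (1-p)^2 \geq 0 .
\]
Expanding and collecting powers of $p$, the left-hand side becomes $a_d\,p^2 + 4p - 1$ with $a_d = \frac{d-1}{d+1} - 3 = -\frac{2(d+2)}{d+1}$. Since $a_d < 0$ for every $d\in\N$, this is a downward-opening parabola in $p$, so the inequality holds exactly on the closed interval bounded by its two real roots.

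Next I would compute these roots. Because $a_d<0$, dividing through (and reversing the inequality) turns the condition into $p^2 - \frac{2(d+1)}{d+2}\,p + \frac{d+1}{2(d+2)} \leq 0$, whose discriminant simplifies to $\frac{d(d+1)}{2(d+2)^2}$; hence the roots are $p_\pm = \frac{1}{d+2}\bigl(d+1 \pm \sqrt{d(d+1)/2}\,\bigr)$. The key algebraic step is to identify the smaller root $p_-$ with the claimed threshold: I would verify $p_-\bigl(2 + \sqrt{2d/(d+1)}\,\bigr) = 1$ directly, the point being that in the expansion the terms $\pm\sqrt{2d(d+1)}$ cancel and the product of the two square roots equals $d$, leaving $(d+2)/(d+2) = 1$.

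It then remains to pin down which part of the root interval lies in the physical range $p\in[0,1]$. Here I would note that $0 < p_- < 1$ (since $\sqrt{2d/(d+1)}\in[1,\sqrt2)$) and that $p_+ \geq 1$, the latter being equivalent to $d(d+1)\geq 2$, which holds for all $d\in\N$. Thus on $[0,1]$ the inequality is satisfied precisely for $p\geq p_- = \frac{1}{2+\sqrt{2d/(d+1)}}$, which together with Theorem~\ref{thm:n1m1} gives the corollary. The only non-routine part is the middle step, namely checking that the nested-radical form of $p_-$ collapses to the advertised closed form; the rest is elementary bookkeeping with a quadratic.
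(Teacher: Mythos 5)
Your proposal is correct and is exactly the intended derivation: the paper gives no separate proof of Corollary~\ref{cor:n1m1Equalp} beyond remarking that \eqref{equ:n1m1} "can be further simplified" at $p_1=p_2$, and your specialization to the quadratic $-\tfrac{2(d+2)}{d+1}p^2+4p-1\geq 0$, the identification of the smaller root with $\bigl(2+\sqrt{2d/(d+1)}\bigr)^{-1}$, and the check that the larger root lies at or beyond $1$ are precisely the required bookkeeping. All the algebra checks out, including the radical cancellation in the step $p_-\bigl(2+\sqrt{2d/(d+1)}\bigr)=1$.
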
 

To prove Theorem \ref{thm:n1m1} we will analyze the convex set 
\begin{equation}
\mathcal{C}_d :=\left\lbrace Q\in \M_{2,2}(\R)~:~QV^{1}_d \text{ and }Q^TV^{1}_d\text{ entrywise positive and }\sum^2_{i,j=1}Q_{ij}=1\right\rbrace 
\label{equ:Cset}
\end{equation}
of feasible points for the linear program from Theorem \ref{thm:n1m1} for some $d\in\N$ in the case $n=m=1$. Here, by \eqref{equ:V} we have 
\begin{equation}
V^1_d = \begin{pmatrix} 1 & 1 \\ -1 & \alpha_d \end{pmatrix}
\label{equ:Vmat2}
\end{equation}
for $\alpha_d = (d+1)/(d-1)$. We will need the following lemma originally shown in \cite{vollbrecht2002activating} using a different parametrization:

\begin{lem}[Extreme points\cite{vollbrecht2002activating}]
For $d\in \N$ the extreme points of $\mathcal{C}_d$ are given by 
\begin{align*}
Q_1 &= \frac{1}{2(d+2)}\begin{pmatrix} d+1 &  d+1\\  d+1 & -(d-1)\end{pmatrix}, \\
Q_2 &= \begin{pmatrix} 1 & 0 \\ 0 & 0 \end{pmatrix}, \\
Q_3 &= \frac{1}{4}\begin{pmatrix} 1 & 1 \\ 1 & 1 \end{pmatrix}, \\
Q_4 &= \frac{1}{2}\begin{pmatrix} 1 & 0 \\ 1 & 0 \end{pmatrix}, \\
Q_5 &= \frac{1}{2}\begin{pmatrix} 1 & 1 \\ 0 & 0 \end{pmatrix}.
\end{align*} 
\label{lem:ExtremePoints}
\end{lem}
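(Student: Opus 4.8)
The plan is to realize the polytope $\mathcal{C}_d$ of \eqref{equ:Cset} explicitly in $\M_{2,2}(\R)\cong\R^4$ and perform a vertex enumeration, using the transpose symmetry $Q\mapsto Q^T$ to cut the bookkeeping roughly in half. Writing a generic $Q=\begin{pmatrix}a&b\\c&e\end{pmatrix}$ and using \eqref{equ:Vmat2} with $\alpha_d=(d+1)/(d-1)>1$, one gets $QV^1_d=\begin{pmatrix}a-b&a+\alpha_d b\\c-e&c+\alpha_d e\end{pmatrix}$ and $Q^TV^1_d=\begin{pmatrix}a-c&a+\alpha_d c\\b-e&b+\alpha_d e\end{pmatrix}$, so that $Q\in\mathcal{C}_d$ is equivalent to the affine equality $a+b+c+e=1$ together with the eight inequalities $a\ge b$, $a+\alpha_d b\ge0$, $c\ge e$, $c+\alpha_d e\ge0$ (from $QV^1_d\ge0$ entrywise) and $a\ge c$, $a+\alpha_d c\ge0$, $b\ge e$, $b+\alpha_d e\ge0$ (from $Q^TV^1_d\ge0$ entrywise). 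The set is nonempty ($Q_2\in\mathcal{C}_d$) and bounded: in its recession cone ($a+b+c+e=0$ with the same inequalities) the entry $a$ is the largest and $e$ the smallest, so $a\ge0\ge e$; from $b+\alpha_d e\ge0$ and $c+\alpha_d e\ge0$ one gets $b,c\ge\alpha_d(-e)$, hence $-e=a+b+c\ge2\alpha_d(-e)$, which forces $e=0$ (since $\alpha_d>1$) and then $a=b=c=0$. Thus $\mathcal{C}_d$ is a compact polytope, and its extreme points are exactly the points at which the active constraints span $\R^4$; since the equality is always active, this requires at least three suitably independent active inequalities.

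Next I would check that each of $Q_1,\dots,Q_5$ lies in $\mathcal{C}_d$ and has a full-rank active set, so that all five are extreme points of $\mathcal{C}_d$. For $Q_1$ the tight inequalities are $a\ge b$, $a\ge c$, $c+\alpha_d e\ge0$, $b+\alpha_d e\ge0$; for $Q_2$ they are $c\ge e$, $b\ge e$, $c+\alpha_d e\ge0$, $b+\alpha_d e\ge0$; for $Q_3$ the four ``order'' inequalities $a\ge b$, $a\ge c$, $c\ge e$, $b\ge e$; for $Q_4$ the three inequalities $a\ge c$, $b\ge e$, $b+\alpha_d e\ge0$; and $Q_5=Q_4^T$ by symmetry. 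In each case the gradients of the active inequalities together with $(1,1,1,1)$ are easily seen to span $\R^4$, so all five points are extreme.

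For the converse I would rule out any further extreme points by enumerating the candidate active sets. The involution $Q\mapsto Q^T$ preserves $\mathcal{C}_d$ and permutes the eight inequalities in pairs (matching each entry of $QV^1_d$ with the corresponding entry of $Q^TV^1_d$), hence permutes the extreme points (fixing $Q_1,Q_2,Q_3$ and swapping $Q_4\leftrightarrow Q_5$), so it suffices to analyse active sets up to this symmetry. The decisive simplification is that whenever the two inequalities attached to one row of $QV^1_d$ or of $Q^TV^1_d$ are simultaneously tight, the corresponding row of $Q$ must vanish (e.g.\ $a-b=0$ and $a+\alpha_d b=0$ force $a=b=0$, as $1+\alpha_d\neq0$); each such degenerate choice over-constrains the system and is quickly seen to be either infeasible or to reproduce $Q_2$. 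For the remaining choices of three linearly independent active inequalities, the linear system formed by these equalities and $a+b+c+e=1$ has a unique solution, which one then tests against the other five inequalities; carrying this through produces exactly $Q_1,\dots,Q_5$.

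The main obstacle is purely the organization of this case analysis: there are $\binom{8}{3}$ a priori choices of active triples, and although the transpose symmetry and the ``whole-row'' observation discard the vast majority at once, one still has to set up and solve on the order of a dozen genuine candidate vertices and discard those whose entries violate one of the remaining inequalities. Alternatively — the route hinted at by the statement — one can simply invoke the computation of \cite{vollbrecht2002activating}, where this polytope appears under a different parametrization, and translate between the two descriptions.
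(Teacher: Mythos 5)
Your proposal is correct in its overall strategy but takes a genuinely different route from the paper. You keep all four entries of $Q$ and treat $\mathcal{C}_d$ from \eqref{equ:Cset} as a polytope in $\R^4$ cut out by the trace hyperplane and the eight inequalities from $QV^1_d\geq 0$ and $Q^TV^1_d\geq 0$, and you then argue by vertex enumeration, organized via the transpose symmetry and the observation that two simultaneously tight constraints from a single row force that row of $Q$ to vanish. The paper instead eliminates one variable through the trace condition, obtaining the three-dimensional polytope $\mathcal{C}'_d$ described by \eqref{equ:inequ1}--\eqref{equ:inequ8}; it proves extremality of the five candidates by direct convex-combination arguments, and it avoids enumeration entirely for the converse: it passes the plane $\mathcal{S}$ of \eqref{equ:Splane} through $Q'_1,Q'_2,Q'_3$, determines the extreme points of the two-dimensional slice $\mathcal{S}\cap\mathcal{C}'_d$ by hand, and writes every point of $\mathcal{C}'_d$ off the plane as a proper convex combination of $Q'_5$ (resp.\ $Q'_4$) and a point of the slice. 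Your route is more mechanical and systematic --- the recession-cone boundedness argument is a worthwhile addition the paper leaves implicit, and your active sets at $Q_1,\dots,Q_5$ check out (note that at $Q_1$ the four active inequality normals satisfy the relation $\nabla(a-b)+\nabla(b+\alpha_d e)=\nabla(a-c)+\nabla(c+\alpha_d e)$, so the equality normal $(1,1,1,1)$ is genuinely needed to reach rank four). The paper's route buys freedom from the $\binom{8}{3}$ case analysis at the price of having to guess the slicing plane.

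Two caveats. First, the decisive enumeration in your converse direction is only asserted (``carrying this through produces exactly $Q_1,\dots,Q_5$''); that finite computation is the bulk of the remaining work, so as written the converse is a plan rather than a completed proof. Second, the degenerate ``whole-row'' branches do not collapse only to $Q_2$: for instance $c=e=0$ together with the trace condition and the surviving inequalities describes the edge of $\mathcal{C}_d$ joining $Q_2$ to $Q_5$, and $b=e=0$ the edge joining $Q_2$ to $Q_4$, so these branches also produce the vertices $Q_4$ and $Q_5$ once a third independent active constraint is imposed. This does not threaten the conclusion --- all vertices so obtained are in your claimed list --- but the bookkeeping must account for them.
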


\begin{proof}
For completeness we present a proof of this result in Appendix \ref{sec:Appendix}. 
\end{proof}

Now it is easy to prove Theorem \ref{thm:n1m1}.

\begin{proof}[Proof of Theorem \ref{thm:n1m1}]
For $d\in \N$ and $p_1,p_2\in \lbr 0,1\rbr$ consider the Werner maps $W_{p_1},W_{p_2}:\M_d\ra\M_d$ as defined in Section \ref{sec:WernerStatesAndMaps}. By Theorem \ref{thm:LinProgrForDec} the map $W_{p_1}\otimes (\vartheta_d\circ W_{p_2})$ is decomposable iff the value of 
\begin{equation}
\min_{Q\in \mathcal{C}_d}\bra{v^1_{p_1}}Q\ket{v^1_{p_2}} \geq 0,
\label{equ:minn1m1}
\end{equation}   
with 
\[
\ket{v^1_{p}} = \begin{pmatrix} p \\ \alpha_d (1-p)\end{pmatrix}
\]
as in \eqref{equ:vecV} and $\mathcal{C}_d$ as in \eqref{equ:Cset}. By Lemma \ref{lem:ExtremePoints} the minimum in \eqref{equ:minn1m1} is attained in one of the points $Q_1,\ldots , Q_5$. Now, observe that for any $p_1,p_2\in \lbr 0,1\rbr$ and $d\in\N$ we have $\bra{v^1_{p_1}}Q_i\ket{v^1_{p_2}}\geq 0$ for $i\in\lset 2,3,4,5\rset$. Thus, the map $W_{p_1}\otimes (\vartheta_d\circ W_{p_2})$ is decomposable iff 
\[
\bra{v^1_{p_1}}Q_1\ket{v^1_{p_2}} = \frac{(d+1)^2}{2(d+2)(d-1)}\lb \frac{d-1}{d+1}p_1p_2 + (1-p_1)p_2 + p_1(1-p_2) - (1-p_1)(1-p_2)\rb\geq  0.
\]
This finishes the proof.
\end{proof}

\subsection{Analytical bounds on decomposability}
\label{sec:AnalytBoundHW}
As in Section \ref{subsec:SymmStates} the dimensions $d_A,d_B\in\N$ will denote the same number $d=d_A=d_B$, and the labels $'A'$ and $'B'$ indicate a particular bipartition of a multipartite state.

To obtain an analytic bound in the case of general $n=m$ we can generalize the extreme point $Q_1$ from the previous section. Consider the matrix $Q\in \M_{n+1,n+1}(\R)$ with entries 
\begin{equation}
Q_{kl} = \begin{cases} (d-1)(d+1)^{2n} &, \text{ if }k=l=1 \\
(d-1)^{n}(d+1)^{n+1} &, \text{ if }(k,l)=(n+1,1) \text{ or }(k,l)=(1,n+1) \\
-(d-1)^{2n}(d+1) &, \text{ if }k=l=n+1. \end{cases}
\label{equ:QForAnBoun}
\end{equation}
Via \eqref{equ:HQ} this corresponds to the matrix $H_Q\in (\M_{d_A}\otimes \M_{d_B})^{\otimes n}\otimes (\M_{d_A}\otimes \M_{d_B})^{\otimes n}$ given by 
\begin{align}
\begin{split}
H_Q = \frac{2^{2n}(d+1)}{d^{2n}(n!)^2}&\bigg[ \frac{d-1}{d+1}P^{\otimes n}_{\text{sym}}\otimes (P^{\Gamma}_{\text{sym}})^{\otimes n} +\cdots \\
& \cdots+ P^{\otimes n}_{\text{sym}}\otimes (P^{\Gamma}_{\text{asym}})^{\otimes n} + P^{\otimes n}_{\text{asym}}\otimes (P^{\Gamma}_{\text{sym}})^{\otimes n} - P^{\otimes n}_{\text{asym}}\otimes (P^{\Gamma}_{\text{asym}})^{\otimes n}\bigg].
\end{split}
\label{equ:HQForAnBoun}
\end{align}
 
We start with the following lemma.

\begin{lem}
For $d_A=d_B=d$ the matrix $H_Q\in (\M_{d_A}\otimes \M_{d_B})^{\otimes n}\otimes (\M_{d_A}\otimes \M_{d_B})^{\otimes n}$ given by $\eqref{equ:HQForAnBoun}$ is positive and has positive partial transpose (with respect to the bipartition indicated by the labels $'A'$ and $'B'$). 
\label{lem:SpecificHQ}
\end{lem}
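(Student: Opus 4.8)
The plan is to apply the diagonalization machinery already developed in the proof of Theorem~\ref{thm:HQPos} and Corollary~\ref{cor:HQPPT} to the specific matrix $Q$ defined in \eqref{equ:QForAnBoun}, and simply verify the two entrywise-positivity conditions. By Theorem~\ref{thm:HQPos}, positivity of $H_Q$ is equivalent to $QV^n_d$ being entrywise positive, and by Corollary~\ref{cor:HQPPT}, positivity of the partial transpose is equivalent to $Q^TV^n_d$ being entrywise positive. Since the $Q$ in \eqref{equ:QForAnBoun} is symmetric ($Q=Q^T$), the two conditions coincide, so it suffices to check that $QV^n_d$ has all nonnegative entries.

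First I would write down the rows of $QV^n_d$ explicitly. Because $Q$ has nonzero entries only in positions $(1,1)$, $(1,n+1)$, $(n+1,1)$, $(n+1,n+1)$, the product $QV^n_d$ has nonzero rows only in rows $1$ and $n+1$: row $1$ equals $(d-1)(d+1)^{2n}\,e_1^T V^n_d + (d-1)^n(d+1)^{n+1}\,e_{n+1}^T V^n_d$, and row $n+1$ equals $(d-1)^n(d+1)^{n+1}\,e_1^T V^n_d - (d-1)^{2n}(d+1)\,e_{n+1}^T V^n_d$, where $e_1^T V^n_d$ and $e_{n+1}^T V^n_d$ are the first and last rows of $V^n_d$. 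From \eqref{equ:V} one reads off that the first row of $V^n_d$ is the all-ones vector $(1,1,\ldots,1)$ (the $a=1$ case forces $t=0$ and leaves $\binom{n}{b-1}$ — wait, actually $a=1$ gives $(-1)^0\binom{0}{t}\binom{n}{b-1-t}$ with $t=0$, i.e.\ $\binom{n}{b-1}$), so I should instead record the precise entries: $(V^n_d)_{1b}=\binom{n}{b-1}$ and $(V^n_d)_{(n+1)b} = (-1)^n \sum_t \binom{n}{t}\binom{0}{b-1-t}(-\tfrac{d+1}{d-1})^t = (-1)^n\binom{n}{b-1}(-\tfrac{d+1}{d-1})^{b-1}$. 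With these two closed forms, each entry of rows $1$ and $n+1$ of $QV^n_d$ becomes an explicit expression in $d$, $n$, and the column index $b$, and one checks the signs directly.

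The key computation is then the following. In row $1$, entry $b$ is
\[
\binom{n}{b-1}\Big[(d-1)(d+1)^{2n} + (d-1)^n(d+1)^{n+1}(-1)^n\big(-\tfrac{d+1}{d-1}\big)^{b-1}\Big],
\]
and after simplifying the second term one gets $\binom{n}{b-1}(d-1)(d+1)^{2n}\big[1 + (-1)^{n+b-1}(\tfrac{d+1}{d-1})^{b-n-1}\big]\cdot$(nonneg.\ factor); since $(\tfrac{d+1}{d-1})^{b-n-1}\le 1$ for $b\le n+1$, the bracket is nonnegative regardless of the sign $(-1)^{n+b-1}$. In row $n+1$, entry $b$ is $\binom{n}{b-1}(d-1)^n(d+1)^{n+1}\big[1 - (-1)^n(d-1)^{n-1}(d+1)^{-2n}(-1)^n(-\tfrac{d+1}{d-1})^{b-1}\big]$, and again one massages this into a nonneg.\ prefactor times $\big[1 \pm (\tfrac{d+1}{d-1})^{b-1-n}\big]$ with the exponent $b-1-n\le 0$, hence nonnegative. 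I expect the main obstacle to be purely bookkeeping: getting the closed form of the first and last rows of $V^n_d$ from \eqref{equ:V} exactly right (the binomial $\binom{0}{b-1-t}$ collapsing the sum to a single term, and tracking the sign $(-1)^n$ and the power of $-\tfrac{d+1}{d-1}$), and then managing the algebra so that every entry visibly factors as a manifestly nonnegative quantity times a bracket of the form $1\pm r^{k}$ with $0\le r\le 1$ and $k\le 0$. Once that factorization is in hand the nonnegativity is immediate, and by Theorem~\ref{thm:HQPos} and Corollary~\ref{cor:HQPPT} this establishes both positivity and positivity of the partial transpose of $H_Q$.
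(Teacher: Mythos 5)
Your proposal follows the paper's proof exactly: reduce to entrywise positivity of $QV^n_d$ via Theorem~\ref{thm:HQPos} and Corollary~\ref{cor:HQPPT}, use $Q=Q^T$ to collapse the two conditions into one, extract the first and last rows of $V^n_d$ in closed form from \eqref{equ:V}, and check signs row by row. Your closed forms $(V^n_d)_{1b}=\binom{n}{b-1}$ and $(V^n_d)_{(n+1)b}=(-1)^n\binom{n}{b-1}\big(-\tfrac{d+1}{d-1}\big)^{b-1}$ are correct, and row $n+1$ of $QV^n_d$ is handled correctly (there the exponent really is $b-n-1\le 0$). The one slip is in row $1$: the ratio of the second term to $(d-1)(d+1)^{2n}$ is $(-1)^{n+b-1}\big(\tfrac{d+1}{d-1}\big)^{b-n}$, not $\big(\tfrac{d+1}{d-1}\big)^{b-n-1}$, so at $b=n+1$ the magnitude is $\tfrac{d+1}{d-1}>1$ and your ``nonnegative regardless of the sign'' argument does not cover that entry. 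It is rescued by noting that the sign there is $(-1)^{2n}=+1$ (equivalently, the bracket reduces to $(d-1)+(d+1)=2d>0$, which is how it appears in the paper's closed-form expression for $(QV^n_d)_{1l}$). With that one-entry patch your verification goes through and coincides with the paper's.
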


\begin{proof}
By symmetry the matrix $H_Q$ from \eqref{equ:HQForAnBoun} is positive iff it has positive partial transpose. Using Theorem \ref{thm:HQPos} the matrix $H_Q$ is positive iff the product $QV^n_d$ is entrywise positive, with $V^n_d$ as defined in \eqref{equ:V}. Note that 
\begin{equation}
(V^n_d)_{1l} = \binom{n}{l} \hspace{0.5cm}\text{ and }\hspace{0.5cm} (V^n_d)_{nl} = (-1)^{n-l+1}\binom{n}{l-1}\lb \frac{d+1}{d-1}\rb^{l-1}
\label{equ:Vvalues}
\end{equation}
for any $l\in\lset 1, \ldots ,n+1\rset$. Now we compute
\[
(QV^n_d)_{kl} = \begin{cases} \binom{n}{l-1}(d+1)^{n+l-1}\lbr (d-1)(d+1)^{n-l+1} + (1-d)^{n-l+1}(d+1)\rbr &,\text{ if } k=1\\
\binom{n}{l-1}(d+1)^{l}(d-1)^n \lbr (d+1)^{n-l+1} + (1-d)^{n-l+1}\rbr &,\text{ if } k=n+1 \\
0 &,\text{ otherwise.}
\end{cases}
\]
These entries are positive for all $k,l\in \lset 1,\ldots ,n+1\rset$.

\end{proof}

We can now prove an analytical bound on the decomposability of $W^{\otimes n}_{p_1}\otimes (\vartheta_{d}\circ W_{p_2})^{\otimes n}$. 

\begin{thm}
For $d\in\N$ and $p_1,p_2\in \lbr 0,1\rbr$ consider the Werner maps $W_{p_1},W_{p_2}:\M_d\ra\M_d$ as defined in Section \ref{sec:WernerStatesAndMaps}. Then, given $n\in\N$ whenever  
\begin{equation}
 \frac{d-1}{d+1}p_1^{n}p_2^{n} + (1-p_1)^{n}p_2^n + p_1^n(1-p_2)^{n} - (1-p_1)^{n}(1-p_2)^{n} < 0,
\label{equ:AnalyticPBoundDiffp}
\end{equation}
the map $W^{\otimes n}_{p_1} \otimes (\vartheta_d\circ W_{p_2})^{\otimes n}$ is not decomposable.  
\label{thm:AnalyticPBoundDiffp}
\end{thm}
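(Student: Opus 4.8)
The plan is to certify non-decomposability by exhibiting a single feasible point of the linear program from Theorem \ref{thm:LinProgrForDec} (applied with $m=n$) at which the objective is strictly negative. By that theorem the map $W^{\otimes n}_{p_1}\otimes(\vartheta_d\circ W_{p_2})^{\otimes n}$ is decomposable iff $\bra{v^n_{p_1}}Q\ket{v^n_{p_2}}\geq 0$ for every $Q\in\M_{n+1,n+1}(\R)$ with $QV^n_d$ and $Q^TV^n_d$ entrywise nonnegative; the normalization $\sum_{k,l}Q_{kl}=1$ is irrelevant for the sign question, since it only rescales a feasible $Q$ by a positive factor. (Equivalently, one may argue directly via Theorem \ref{thm:DecWitness}(3): the Choi matrix of the map is, up to a tensor-factor-permuting unitary, $\rho_W(p_1)^{\otimes n}\otimes(\rho_W(p_2)^\Gamma)^{\otimes n}$, so any positive matrix with positive partial transpose whose overlap with it is negative rules out decomposability; the matrix $H_Q$ of \eqref{equ:HQForAnBoun} will play this role.)

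Concretely, I would take the matrix $Q$ of \eqref{equ:QForAnBoun}, equivalently $H_Q$ of \eqref{equ:HQForAnBoun}. By Lemma \ref{lem:SpecificHQ} this $H_Q$ is positive and has positive partial transpose, so by Theorem \ref{thm:HQPos} and Corollary \ref{cor:HQPPT} the products $QV^n_d$ and $Q^TV^n_d$ are entrywise nonnegative; hence $Q$ is feasible (note $Q$ is symmetric, so $Q^TV^n_d=QV^n_d$ and only one check is really needed). The second step is to evaluate the objective $\bra{v^n_{p_1}}Q\ket{v^n_{p_2}}$. Since $Q$ is supported only on the four corner entries $(1,1)$, $(1,n{+}1)$, $(n{+}1,1)$, $(n{+}1,n{+}1)$, only the first and last components of the vectors \eqref{equ:vecV} enter, namely $\lb\ket{v^n_p}\rb_1=p^n$ and $\lb\ket{v^n_p}\rb_{n+1}=\lb\frac{d+1}{d-1}\rb^n(1-p)^n$. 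Substituting the four values and collecting powers of $d\pm1$ (using $\lb\frac{d+1}{d-1}\rb^n(d-1)^n=(d+1)^n$ and $\lb\frac{d+1}{d-1}\rb^{2n}(d-1)^{2n}=(d+1)^{2n}$) should yield the identity
\[
\bra{v^n_{p_1}}Q\ket{v^n_{p_2}} = (d+1)^{2n+1}\lb \frac{d-1}{d+1}p_1^{n}p_2^{n} + (1-p_1)^{n}p_2^n + p_1^n(1-p_2)^{n} - (1-p_1)^{n}(1-p_2)^{n}\rb .
\]

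Since $(d+1)^{2n+1}>0$, condition \eqref{equ:AnalyticPBoundDiffp} then forces $\bra{v^n_{p_1}}Q\ket{v^n_{p_2}}<0$, so the value of the linear program is negative and the map is not decomposable, as claimed. There is essentially no conceptual obstacle: the feasibility of $Q$ is exactly the content of Lemma \ref{lem:SpecificHQ}, and the only real work is the bookkeeping of the $(d-1)$ and $(d+1)$ exponents across the four corner terms — the mild care point being to track the $\lb\frac{d+1}{d-1}\rb^n$ factors coming from the last component of $\ket{v^n_p}$ correctly. As a consistency check, for $n=1$ the matrix $Q$ of \eqref{equ:QForAnBoun} is $2(d+2)(d-1)(d+1)$ times the extreme point $Q_1$ of Lemma \ref{lem:ExtremePoints}, and the displayed identity collapses to the one appearing in the proof of Theorem \ref{thm:n1m1}, so \eqref{equ:AnalyticPBoundDiffp} with $n=1$ is precisely the negation of \eqref{equ:n1m1}.
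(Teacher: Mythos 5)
Your proposal is correct and follows essentially the same route as the paper: it uses the corner-supported matrix $Q$ of \eqref{equ:QForAnBoun}, certifies feasibility via Lemma \ref{lem:SpecificHQ}, and evaluates $\bra{v^n_{p_1}}Q\ket{v^n_{p_2}}$ to obtain exactly the identity the paper derives. The extra remarks (that the normalization only rescales by a positive factor, and the $n=1$ consistency check against Theorem \ref{thm:n1m1}) are accurate but not needed.
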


\begin{proof}
By Lemma \ref{lem:SpecificHQ} the matrix $Q\in \M_{n+1,n+1}(\R)$ from \eqref{equ:QForAnBoun} is a feasible point of the linear program stated in Theorem \ref{thm:LinProgrForDec}. With $\ket{v^{n}_{p_1}},\ket{v^{n}_{p_2}}\in\R^{n+1}$ as defined in \eqref{equ:vecV} we find that 
\begin{align*}
\bra{v^{n}_{p_1}}Q\ket{v^{n}_{p_2}} &= \sum^{n+1}_{k=1}\sum^{n+1}_{l=1}Q_{kl} (1-p_1)^{k-1}(1-p_2)^{l-1}p_1^{(n+1)-k}p_2^{(n+1)-l}\lb\frac{d+1}{d-1}\rb^{k+l-2} \\
&= (d+1)^{2n+1}\lbr \frac{d-1}{d+1}p_1^{n}p_2^{n} + (1-p_1)^{n}p_2^n + p_1^n(1-p_2)^{n} - (1-p_1)^{n}(1-p_2)^{n}\rbr.
\end{align*}
Thus, if $p_1,p_2\in \lbr 0,1\rbr$ satisfy \eqref{equ:AnalyticPBoundDiffp}, then the minimization problem from Theorem \ref{thm:LinProgrForDec} has a negative solution, and the linear map $W^{\otimes n}_{p_1} \otimes (\vartheta_d\circ W_{p_2})^{\otimes n}$ is not decomposable.

\end{proof}

In the special case $p_1 = p_2$ we can simplify \eqref{equ:AnalyticPBoundDiffp} to obtain:

\begin{thm}
For $d\in\N$ consider the Werner map $W_p:\M_d\ra\M_d$ as defined in Section \ref{sec:WernerStatesAndMaps}. Then, given $n\in\N$ for any  
\begin{equation}
p< \frac{1}{1+\sqrt[n]{1+\sqrt{\frac{2d}{d+1}}}}<\frac{1}{2}
\label{equ:AnalyticPBound}
\end{equation}
the map $W^{\otimes n}_p \otimes (\vartheta_d\circ W_p)^{\otimes n}$ is not decomposable. 
\label{thm:AnalyticPBound}
\end{thm}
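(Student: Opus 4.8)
The plan is to specialize Theorem~\ref{thm:AnalyticPBoundDiffp} to $p_1 = p_2 = p$ and then show that the condition \eqref{equ:AnalyticPBound} implies the strict inequality \eqref{equ:AnalyticPBoundDiffp}. First I would substitute $p_1 = p_2 = p$ into the left-hand side of \eqref{equ:AnalyticPBoundDiffp}, which collapses the four terms into
\[
\frac{d-1}{d+1}p^{2n} + 2p^n(1-p)^n - (1-p)^n(1-p)^n = \frac{d-1}{d+1}p^{2n} + 2p^n(1-p)^n - (1-p)^{2n}.
\]
So the map $W^{\otimes n}_p \otimes (\vartheta_d\circ W_p)^{\otimes n}$ fails to be decomposable whenever this quantity is negative. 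The task is then to identify for which $p$ this holds, and to show it contains the interval \eqref{equ:AnalyticPBound}.

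Next I would treat the displayed expression as a quadratic in the ratio $r := p^n/(1-p)^n \in [0,\infty)$ (valid for $p \in [0,1)$; the case $p = 1$ is trivially excluded since then the expression equals $\frac{d-1}{d+1} > 0$). Dividing by $(1-p)^{2n} > 0$, negativity is equivalent to
\[
\frac{d-1}{d+1}r^2 + 2r - 1 < 0.
\]
The roots of $\frac{d-1}{d+1}r^2 + 2r - 1$ are $r = \frac{-2 \pm \sqrt{4 + 4\frac{d-1}{d+1}}}{2\frac{d-1}{d+1}} = \frac{d+1}{d-1}\lb -1 \pm \sqrt{1 + \frac{d-1}{d+1}}\rb = \frac{d+1}{d-1}\lb -1 \pm \sqrt{\frac{2d}{d+1}}\rb$, and since the leading coefficient $\frac{d-1}{d+1} \geq 0$ the quadratic is negative precisely for $r$ below the positive root, i.e.\ when
\[
r < \frac{d+1}{d-1}\lb \sqrt{\frac{2d}{d+1}} - 1\rb.
\]
(For $d = 1$ the quadratic degenerates to $2r - 1$, negative for $r < 1/2$; one checks this is the $d\to 1$ limit of the bound, so I would either handle $d=1$ separately with a one-line remark or note the formula extends by continuity.)

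Finally I would translate the condition on $r$ back to a condition on $p$. Since $r = p^n/(1-p)^n$ is strictly increasing in $p$ on $[0,1)$, the inequality $r < c$ for a positive constant $c$ is equivalent to $p^n/(1-p)^n < c$, i.e.\ $p/(1-p) < c^{1/n}$, i.e.\ $p < \frac{c^{1/n}}{1 + c^{1/n}} = \frac{1}{1 + c^{-1/n}}$. Plugging in $c = \frac{d+1}{d-1}\lb \sqrt{\frac{2d}{d+1}} - 1\rb$, I would simplify $c^{-1}$ using the identity $\frac{d-1}{d+1} \cdot \frac{1}{\sqrt{\frac{2d}{d+1}} - 1} = \frac{d-1}{d+1} \cdot \frac{\sqrt{\frac{2d}{d+1}} + 1}{\frac{2d}{d+1} - 1} = \frac{d-1}{d+1}\cdot\frac{(d+1)(\sqrt{\frac{2d}{d+1}}+1)}{d-1} = 1 + \sqrt{\frac{2d}{d+1}}$, which gives exactly $p < \frac{1}{1 + \sqrt[n]{1 + \sqrt{\frac{2d}{d+1}}}}$, matching \eqref{equ:AnalyticPBound}. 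The bound $\frac{1}{1+\sqrt[n]{1+\sqrt{2d/(d+1)}}} < \frac12$ is immediate since $1 + \sqrt{2d/(d+1)} > 1$ forces its $n$-th root to exceed $1$. The only mildly delicate point — and the one I would be careful about — is the rationalization step producing the clean closed form and the handling of the degenerate case $d=1$; everything else is routine algebra applied on top of Theorem~\ref{thm:AnalyticPBoundDiffp}.
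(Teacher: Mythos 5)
Your proposal is correct and follows exactly the route the paper intends: the paper derives Theorem~\ref{thm:AnalyticPBound} by specializing \eqref{equ:AnalyticPBoundDiffp} to $p_1=p_2=p$ and leaves the algebraic simplification implicit, which is precisely the quadratic-in-$r=p^n/(1-p)^n$ computation you carry out. Your rationalization giving $c^{-1}=1+\sqrt{2d/(d+1)}$ and the remark that $d=1$ is degenerate are both accurate, so nothing is missing.
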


It should be noted that the bound from \eqref{equ:AnalyticPBound} is monotonically decreasing in the dimension $d$. Therefore, we can take the limit $d\ra\infty$ to obtain the bound
\[
p < \frac{1}{1+\sqrt[n]{1+\sqrt{2}}} 
\]
for which the map $W^{\otimes n}_p \otimes (\vartheta_d\circ W_p)^{\otimes n}$ is not decomposable for any $d\in \N$.

For $n\in\N$ small enough (to avoid numerical instabilities) we can use the linear program from Theorem \ref{thm:LinProgrForDec} to approximate the maximal $p_d^*(n)\in\lbr 0,1\rbr$ such that $W^{\otimes n}_p \otimes (\vartheta_d\circ W_p)^{\otimes n}$ (with local dimension $d$) is not decomposable for any $p < p_d^*(n)$. In Table \ref{table:Values} we compare these values with our analytic bound \eqref{equ:AnalyticPBound}. Note that for $n=1$ our bound gives the exact value $p_d^*(1)$ as shown in Section \ref{sec:SpecialCase}. 

\begin{table}
\begin{tabular}{l r!{\vline width 1.3pt}*{8}{c|}}
 &\hspace*{0.1cm}$n$ &  \hspace*{0.2cm}1\hspace*{0.2cm} & \hspace*{0.2cm}2\hspace*{0.2cm} & \hspace*{0.2cm}3\hspace*{0.2cm} & \hspace*{0.2cm}4\hspace*{0.2cm} & \hspace*{0.2cm}5\hspace*{0.2cm} & \hspace*{0.2cm}6\hspace*{0.2cm} & \hspace*{0.2cm}7\hspace*{0.2cm} & \hspace*{0.2cm}8\hspace*{0.2cm} \\
\Xhline{1.3pt}
$d=2$ & Numerical & 0.3169 & 0.4054 & 0.4367 & 0.4524 & 0.4619 & 0.4682 & 0.4728 & 0.4762  \\
 & Analytic bound  & 0.3169 & 0.4052 & 0.4363 & 0.4521 & 0.4616 & 0.4680 & 0.4726 & 0.4760  \\
\hline
$d=3$ & Numerical & 0.3101 & 0.4017 & 0.4340 & 0.4505 & 0.4603 & 0.4669 & 0.4716 & 0.4752  \\
& Analytic bound  & 0.3101 & 0.4013 & 0.4337 & 0.4501 & 0.4601 & 0.4667 & 0.4714 & 0.4750  \\
\hline
$d=5$ & Numerical & 0.3038 & 0.3981 & 0.4316 & 0.4486 & 0.4588 & 0.4656 & 0.4705 & 0.4742  \\
& Analytic bound  & 0.3038 & 0.3978 & 0.4313 & 0.4483 & 0.4586 & 0.4655 & 0.4704 & 0.4741  \\
\hline
$d=10$& Numerical & 0.2986 & 0.3950 & 0.4294 & 0.4469 & 0.4575 & 0.4645 & 0.4696 & 0.4734  \\
& Analytic bound   & 0.2986 & 0.3948 & 0.4293 & 0.4468 & 0.4574 & 0.4644 & 0.4695 & 0.4733  \\
\end{tabular}
\caption{Comparison between the analytic bound from Theorem \ref{thm:AnalyticPBound} and approximations obtained via the linear program from Theorem \ref{thm:LinProgrForDec} of the largest $p_d^*(n)\in\lbr 0,1\rbr$ such that $W^{\otimes n}_p \otimes (\vartheta_d\circ W_p)^{\otimes n}$ (with local dimension $d$) is not decomposable for any $p < p_d^*(n)$. All values were truncated (not rounded) to $4$ digits.}
\label{table:Values}
\end{table}

\section{Proofs of main results}
\label{sec:Proofs}

\begin{proof}[Proof of Theorem \ref{thm:QuantBound}]

Let $P:\M_{d_A}\ra\M_{d_B}$ be a non-zero $n$-tensor-stable decomposable map. Since $P$ is decomposable iff the adjoint map $P^*:\M_{d_B}\ra\M_{d_A}$ is decomposable, and $(P^{\otimes n})^* = (P^*)^{\otimes n}$, we can assume without loss of generality that $d = d_B\leq d_A$. By Lemma \ref{lem:BasicMu} the theorem holds trivially when $P$ is either completely positive or completely copositive. We can therefore assume that $P$ does not belong to either of these classes. 

Let $\mu = \max\lb \mu\lb P\rb,\mu\lb \vartheta_{d}\circ P\rb\rb<0$ and consider $1/2 > p > (1+\mu)/2$. By Corollary \ref{cor:ReducToWerner} we conclude that 
\[
W^{\otimes \left\lfloor \frac{n}{2}\right\rfloor}_p\otimes (\vartheta_d\circ W)^{\otimes \left\lfloor \frac{n}{2}\right\rfloor}_p :\M^{\otimes N}_d\ra \M^{\otimes N}_d, 
\]  
for $N=2\lfloor \frac{n}{2}\rfloor$ is decomposable. Finally, using Theorem \ref{thm:AnalyticPBound} finishes the proof.

\end{proof}

\begin{proof}[Proof of Theorem \ref{thm:NoTensStabDec}]
Let $P:\M_{d_A}\ra\M_{d_B}$ be a non-zero tensor-stable decomposable map. By Theorem \ref{thm:QuantBound} we have 
\[
\max\lb\mu\lb P\rb,\mu\lb\vartheta_{d_B}\circ P\rb\rb\geq \lb\frac{2}{1+\sqrt[\left\lfloor \frac{n}{2}\right\rfloor~]{1+\sqrt{\frac{2d}{d+1}}}} - 1\rb,
\]
for any $n\in \N$. Since the right hand side of the previous inequality converges to $0$, we can conclude that 
\[
\max\lb\mu\lb P\rb,\mu\lb\vartheta_{d_B}\circ P\rb\rb\geq 0.
\]
By Lemma \ref{lem:BasicMu} we have that $P$ is either completely positive or completely copositive. This finishes the proof.
\end{proof}

Applying some results from the theory of entanglement distillation an alternative proof of Theorem \ref{thm:NoTensStabDec} can be obtained. Before presenting this proof we will need some basic definitions: 

Given dimensions $d_A,d_B,d'_{A},d'_{B}\in\N$ we call a completely positive map $L:\M_{d_A}\otimes \M_{d_B}\ra \M_{d'_{A}}\otimes \M_{d'_{B}}$ a \emph{separable operation} if there exists an $l\in\N$ and rectangular matrices $\lset A_i\rset^l_{i=1}\subset\M_{d'_A,d_{A}}$ and $\lset B_i\rset^l_{i=1}\subset\M_{d'_B,d_{B}}$ such that 
\[
L(X) = \sum^l_{i=1} (A_i\otimes B_i)X(A_i\otimes B_i)^{\dagger}. 
\]
Similarly, we call a completely positive map $L:\M_{d_A}\otimes \M_{d_B}\ra \M_{d'_{A}}\otimes \M_{d'_{B}}$ \emph{PPT preserving} (with PPT abbreviating ``positive partial transpose'') if $L(X^\Gamma)^\Gamma\geq 0$ for any $X\in (\M_{d_A}\otimes \M_{d_B})^+$.

The following theorem has been shown in \cite{eggeling2001distillability,vollbrecht2002activating}:

\begin{thm}[Distillation via PPT-preserving completely positive maps~\cite{eggeling2001distillability,vollbrecht2002activating}]
For any quantum state $\rho\in(\M_{d_A}\otimes \M_{d_B})^+$ satisfying $\rho^\Gamma\ngeq 0$ there exists a sequence $L_{n}:\M_{d^n_A}\otimes \M_{d^n_B}\ra\M_2\otimes \M_2$ of PPT-preserving completely positive maps such that $\lim_{n\ra\infty}\|\omega_2/2 - L_n\lb \rho^{\otimes n}\rb\|_1 = 0$. Moreover, the maps $L_{n}$ can be chosen of the form 
\begin{equation}
L_{n}(X) = S_n(X\otimes \sigma^{\otimes n})
\label{equ:DistMapsPPT}
\end{equation}
for a state $\sigma\in (\M_{d_A}\otimes \M_{d_A})^+$ with positive partial transpose $\sigma^\Gamma\geq 0$, and 
\[
S_{n}:\M_{d^{2n}_A}\otimes \M_{d^{n}_Bd^{n}_A}\ra\M_2\otimes \M_2
\]
a sequence of separable operations.
\label{thm:DistPPTpres}
\end{thm}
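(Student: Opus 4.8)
The plan is to prove the statement by combining two classical facts from the theory of entanglement distillation. The first ingredient is the \emph{activation} phenomenon discovered in \cite{vollbrecht2002activating} (building on \cite{eggeling2001distillability}): although a single NPT (non-positive-partial-transpose) state $\rho$ need not be distillable by itself, there exists a suitable bipartite ancilla state $\sigma$ with $\sigma^\Gamma \geq 0$ such that $\rho \otimes \sigma$ becomes distillable by separable (indeed by PPT-preserving) operations. The second ingredient is that any NPT state has a \emph{1-copy} version of this: after tensoring with enough copies of the bound-entangled activator one can extract, via separable operations, a pair of qubits arbitrarily close to the maximally entangled state $\omega_2/2$ in trace norm. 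I would first recall from \cite{vollbrecht2002activating} the explicit construction of the activator $\sigma$ (a PPT state supported on the symmetric subspace of $\C^{d_A}\otimes\C^{d_A}$, tuned to the Werner-type invariants of $\rho$), then invoke the distillation protocol constructed there to produce the sequence $S_n$ of separable operations.

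The key steps, in order, are as follows. First, I would reduce to the case where $\rho$ is a Werner state (or a $UU$-invariant NPT state): by the $UU$-twirl, $\text{Tr}[\rho \flip_{d_A}] < 0$ is exactly the condition $\rho^\Gamma \ngeq 0$ forces after symmetrization, and twirling is a separable (even LOCC) operation, so distilling the twirled state suffices; this step uses \eqref{equ:Twirl} and the fact that applying a fixed separable map before the protocol keeps the composite map separable of the form \eqref{equ:DistMapsPPT}. Second, I would import the explicit activator $\sigma$ and the single-step separable ``filtering'' operation from \cite{vollbrecht2002activating} that maps $\rho\otimes\sigma$ to a two-qubit state with negative partial transpose. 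Third, I would iterate: applying the filtering step to each of the $n$ copies of $\rho\otimes\sigma$ in parallel and then applying the standard two-qubit NPT recurrence/distillation protocol (which is separable) yields a two-qubit output converging to $\omega_2/2$; bundling the parallel filters, the permutation bringing the systems into the order $\M_{d_A^{2n}}\otimes\M_{d_B^n d_A^n}$, and the recurrence protocol into a single map gives $S_n$, which is a composition of separable operations and hence separable. Finally, I would check that the whole map $L_n(X) = S_n(X\otimes \sigma^{\otimes n})$ is PPT-preserving: this is automatic because separable operations are PPT-preserving and $\sigma^{\otimes n}$ has positive partial transpose, so appending it and then applying a PPT-preserving map preserves PPT.

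The main obstacle is the second step: extracting from \cite{eggeling2001distillability,vollbrecht2002activating} a statement in precisely the packaged form \eqref{equ:DistMapsPPT}, namely that the protocol can be realized as a \emph{fixed} ancilla $\sigma$ tensored in once per copy followed by a \emph{separable} operation, uniformly in $n$, with trace-norm convergence rather than merely eventual NPT-ness or fidelity bounded away from $1/2$. The original references phrase activation as a statement about a single tensor copy $\rho\otimes\sigma$ becoming distillable, so one must verify that the distillation of that composite object (a) only ever uses separable operations, and (b) achieves fidelity $\to 1$, i.e.\ $\|\omega_2/2 - L_n(\rho^{\otimes n})\|_1 \to 0$, rather than just fidelity above the separability threshold. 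Both hold — the activated state is a two-qubit NPT state, hence distillable by the separable recurrence protocol with fidelity tending to $1$ — but making this rigorous requires carefully tracking that no non-separable (e.g.\ merely PPT-preserving but entangled) operations sneak in, and that the ancilla usage is genuinely $\sigma^{\otimes n}$ with a single fixed $\sigma$ and not an $n$-dependent family. Once this bookkeeping is done, the remaining steps are routine.
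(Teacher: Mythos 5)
You should first be aware that the paper does not prove Theorem \ref{thm:DistPPTpres} at all: it is imported verbatim from \cite{eggeling2001distillability,vollbrecht2002activating} and used as a black box in the alternative proof of Theorem \ref{thm:NoTensStabDec}. So there is no internal proof to compare against, and your proposal has to be judged as a reconstruction of the cited results. At the level of architecture it is the right reconstruction: reduce to $UU$-invariant states, activate with a fixed PPT ancilla $\sigma$ (which in \cite{vollbrecht2002activating} is precisely a state of the form $H_Q$ with $n=m=1$ from Section \ref{subsec:SymmStates}), filter down to a two-qubit NPT state, and finish with the two-qubit recurrence/hashing protocol; your closing observation that $X\mapsto S_n(X\otimes\sigma^{\otimes n})$ is automatically PPT-preserving because separable operations are PPT-preserving and $(\sigma^{\otimes n})^\Gamma\geq 0$ is correct and is exactly why the packaged form \eqref{equ:DistMapsPPT} is useful.

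There is, however, one genuine gap in your first step. You claim that ``$\text{Tr}[\rho\,\flip_{d_A}]<0$ is exactly the condition $\rho^\Gamma\ngeq 0$ forces after symmetrization,'' i.e.\ that twirling an NPT state yields an NPT Werner state. This is false: the twirl of $\rho$ is the Werner state with parameter $p=\tfrac{1}{2}\lb 1+\text{Tr}[\rho\,\flip_{d_A}]\rb$, and an NPT state can easily have $\text{Tr}[\rho\,\flip_{d_A}]\geq 0$ (e.g.\ any weakly entangled pure state $\sqrt{1-\epsilon}\ket{00}+\sqrt{\epsilon}\ket{11}$ satisfies $\text{Tr}[\rho\,\flip]=1$), in which case the twirl is separable and your protocol distills nothing. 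The missing ingredient, which is the actual content of the reduction in \cite{eggeling2001distillability}, is a local filtering step \emph{before} the twirl: pick $\ket{\phi}$ with $\bra{\phi}\rho^\Gamma\ket{\phi}<0$, write $\ket{\phi}=(\one_{d_A}\otimes M)\ket{\Omega_{d_A}}$, and apply the separable operation $\text{Ad}_{\one_{d_A}\otimes M^\dagger}$; the identity $\omega_{d_A}^\Gamma=\flip_{d_A}$ then shows the filtered state $\rho'$ satisfies $\text{Tr}[\rho'\,\flip_{d_A}]=\bra{\phi}\rho^\Gamma\ket{\phi}<0$, and only then does the twirl land on an NPT Werner state. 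This filter also quietly fixes the dimension mismatch $d_A\neq d_B$. With that step inserted, the remaining issues you flag (deterministic completion of the probabilistic filters, convergence of fidelity to $1$ rather than merely past $1/2$, and the fact that a fixed $\sigma$ suffices) are indeed bookkeeping, though nontrivial bookkeeping that the cited papers carry out.
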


We will need another lemma before presenting the alternative proof of Theorem \ref{thm:NoTensStabDec}:

\begin{lem}[PPT-preserving operation preserve decomposability]
Given a decomposable map $P:\M_{d_A}\ra\M_{d_B}$ and a PPT-preserving map $L:\M_{d_A}\otimes \M_{d_B}\ra \M_{d'_{A}}\otimes \M_{d'_{B}}$, the map $Q:\M_{d'_A}\ra\M_{d'_B}$ defined via the Choi matrix $C_{Q}=L\lb C_P\rb$ is decomposable as well.
\label{lem:PPTPresAndDecomp}
\end{lem}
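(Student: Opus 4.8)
The plan is to unwind the definitions of decomposability and PPT-preservation and combine them with the characterization in Theorem \ref{thm:DecWitness}. Recall that a map $Q:\M_{d'_A}\ra\M_{d'_B}$ is decomposable if and only if $\text{Tr}\lb X C_Q\rb\geq 0$ for every $X\in(\M_{d'_A}\otimes\M_{d'_B})^+$ with $X^\Gamma\geq 0$. So fix such a test matrix $X$; we must show $\text{Tr}\lb X\, L(C_P)\rb\geq 0$. The natural move is to pass $L$ to the other side of the trace, i.e.~to work with the adjoint $L^*$ with respect to the Hilbert-Schmidt inner product, so that $\text{Tr}\lb X\, L(C_P)\rb = \text{Tr}\lb L^*(X)\, C_P\rb$. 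Since $P$ is decomposable, $\text{Tr}\lb Y\, C_P\rb\geq 0$ for every $Y\in(\M_{d_A}\otimes\M_{d_B})^+$ with $Y^\Gamma\geq 0$, so it suffices to verify that $Y:=L^*(X)$ is such a test matrix.

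First I would record that the adjoint of a PPT-preserving completely positive map is again PPT-preserving and completely positive: complete positivity of $L^*$ is standard (the Choi matrix of $L^*$ is obtained from that of $L$ by a transposition and swap of tensor factors, preserving positivity), and the PPT-preservation property $L(Z^\Gamma)^\Gamma\geq 0$ for all $Z\geq 0$ dualizes to $L^*(W^\Gamma)^\Gamma\geq 0$ for all $W\geq 0$ using that the partial transposition $\Gamma$ is self-adjoint and squares to the identity; concretely, for $W\geq 0$ and arbitrary $Z\geq 0$, $\text{Tr}\lb Z\, L^*(W^\Gamma)^\Gamma\rb = \text{Tr}\lb Z^\Gamma\, L^*(W^\Gamma)\rb = \text{Tr}\lb L(Z^\Gamma)\, W^\Gamma\rb = \text{Tr}\lb L(Z^\Gamma)^\Gamma\, W\rb\geq 0$, and since $Z$ ranges over all positive matrices this forces $L^*(W^\Gamma)^\Gamma\geq 0$.

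Given this, the argument closes quickly. The matrix $X$ is positive, and $L^*$ is completely positive hence positive, so $Y=L^*(X)\geq 0$. For the partial transpose: $X^\Gamma\geq 0$ by assumption, so writing $X = (X^\Gamma)^\Gamma$ and applying the dual PPT-preservation property to the positive matrix $X^\Gamma$ gives $Y^\Gamma = L^*(X)^\Gamma = L^*((X^\Gamma)^\Gamma)^\Gamma\geq 0$. Thus $Y$ is a legitimate decomposability test matrix for $P$, so $\text{Tr}\lb Y\, C_P\rb\geq 0$, and therefore $\text{Tr}\lb X\, C_Q\rb = \text{Tr}\lb X\, L(C_P)\rb = \text{Tr}\lb L^*(X)\, C_P\rb = \text{Tr}\lb Y\, C_P\rb\geq 0$. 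Since $X$ was an arbitrary PPT test matrix, Theorem \ref{thm:DecWitness} gives that $Q$ is decomposable.

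I do not expect a genuine obstacle here; the proof is essentially a duality bookkeeping exercise. The one point requiring a little care is the precise bipartite structure: $L$ maps $\M_{d_A}\otimes\M_{d_B}\ra\M_{d'_A}\otimes\M_{d'_B}$ and the partial transpositions are taken on the ``$B$'' factors on each side, so one must be consistent about which tensor leg $\Gamma$ acts on in the source versus the target space when dualizing. As long as the convention that $\Gamma$ always denotes partial transposition on the second ($B$-type) factor is kept uniform — which matches the paper's notational convention — the computation above goes through verbatim.
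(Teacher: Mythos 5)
Your proof is correct, but it takes the dual route where the paper takes the primal one. The paper simply writes out the decomposition $C_P = X + Y^\Gamma$ with $X,Y\geq 0$, applies $L$ term by term, and reads off $C_Q = L(X) + \lb L(Y^\Gamma)^\Gamma\rb^\Gamma$ as a manifest decomposition of $Q$ — two lines, no duality needed. You instead invoke the witness characterization of Theorem \ref{thm:DecWitness}, pass to the adjoint $L^*$, and prove the auxiliary fact that the adjoint of a PPT-preserving completely positive map is again PPT-preserving and completely positive; your verification of that fact (testing $L^*(W^\Gamma)^\Gamma$ against arbitrary $Z\geq 0$ and using self-adjointness of $\Gamma$) is sound, and your handling of which tensor leg $\Gamma$ acts on is consistent. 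The trade-off: the paper's argument is shorter and avoids any appeal to Theorem \ref{thm:DecWitness}, while yours isolates a reusable lemma (closure of the PPT-preserving maps under adjoints) and makes the duality between decomposable maps and PPT matrices explicit. Both are complete proofs; the primal one is the more economical choice here since the decomposition of $C_P$ is handed to you by the hypothesis.
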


\begin{proof}
Since $P:\M_{d_A}\ra\M_{d_B}$ is decomposable its Choi matrix can be written as
\[
C_P = X + Y^\Gamma
\]
for positive matrices $X,Y\in (\M_{d_A}\otimes \M_{d_B})^+$. Now any PPT-preserving map $L:\M_{d_A}\otimes \M_{d_B}\ra \M_{d'_{A}}\otimes \M_{d'_{B}}$ satisfies $L(X)\in (\M_{d_A}\otimes \M_{d_B})^+$ and $L(Y^\Gamma)^\Gamma\in (\M_{d_A}\otimes \M_{d_B})^+$. Therefore, we have
\[
C_Q = L\lb C_P\rb =  L(X) + \lb L(Y^\Gamma)^\Gamma\rb^\Gamma,
\]
and the linear map $Q:\M_{d'_A}\ra\M_{d'_B}$ defined by this expression is decomposable (using the Choi-Jamiolkowski isomorphism \cite{choi1975completely}). 
\end{proof}

Finally, we can present the alternative proof of our qualitative result: 

\begin{proof}[Alternative proof of Theorem \ref{thm:NoTensStabDec}]
Assume that $P:\M_{d_A}\ra\M_{d_B}$ is a non-zero tensor-stable decomposable map, that is neither completely positive nor completely copositive. Then, by Corollary \ref{cor:ReducToWerner} for any $p_1$ and $p_2$ with 
\begin{equation}
\frac{1}{2}\lb 1 + \mu\lb \vartheta_{d_B}\circ P\rb\rb < p_1 < \frac{1}{2}\hspace*{0.5cm}\text{ and }\hspace*{0.5cm} \frac{1}{2}\lb 1 + \mu\lb P\rb\rb < p_2 < \frac{1}{2}
\end{equation}
and any $n\in\N$ the mixed tensor power of Werner maps
\[
W^{\otimes n}_{p_1}\otimes (\vartheta_{d_A}\circ W_{p_2})^{\otimes n}:\M^{\otimes 2n}_{d_A}\ra\M^{\otimes 2n}_{d_A},
\]
is decomposable. Now note that the Choi matrix of this map is given by 
\[
H_{n}:=\rho_W(p_1)^{\otimes n}\otimes (\rho_W(p_2)^\Gamma)^{\otimes n}.
\]
Since $p_1,p_2<1/2$ the states $\rho_W(p_1)$ and $\rho_W(p_2)$ satisfy $\rho_W(p_1)^\Gamma\ngeq 0$ and $\rho_W(p_2)^\Gamma\ngeq 0$ we can apply Theorem \ref{thm:DistPPTpres} to obtain PPT-preserving operations $L_n,L'_n:\M_{d^n_A}\otimes \M_{d^n_B}\ra\M_2\otimes \M_2$ of the form \eqref{equ:DistMapsPPT} such that 
\[
\lim_{n\ra \infty} L_n\lb \rho_W(p_1)^{\otimes n}\rb = \lim_{n\ra \infty} L'_n\lb \rho_W(p_2)^{\otimes n}\rb = \omega_2/2.
\]
This shows that for the sequence $\tilde{L}_n = (\ident_2\otimes \vartheta_2)\circ L'_n\circ(\ident_{d_A}\otimes \vartheta_{d_A})$ we have
\[
\lim_{n\ra\infty} (L_n\otimes \tilde{L}_n)(H_{n}) = \omega_2/2\otimes \mathbbm{F}_2/2 = C_{Q}/4.
\]
Using the form \eqref{equ:DistMapsPPT} of the maps $L_n$ and $L'_n$ it is straightforward to check that for every $n\in\N$ the map $(L_n\otimes \tilde{L}_n)$ is PPT-preserving. Applying Lemma \ref{lem:PPTPresAndDecomp} and since the set of decomposable maps is closed, we conclude that $Q:\M_{4}\ra\M_4$ must be decomposable. However, using the Choi-Jamiolkowski isomorphism it is easy to see that $Q = \ident_2\otimes \vartheta_2$, which is not even a positive map. This is a contradiction and finishes the proof. 
\end{proof}

Note that the previous proof is \emph{not} quantitative and it does \emph{not} give an explicit bound as in Theorem \ref{thm:QuantBound} on the number of tensor powers that a given decomposable map can stay decomposable for. Such a bound could probably be obtained from a more careful analysis of Theorem \ref{thm:DistPPTpres}, but we did not try this. 

\section{Applications}
\label{sec:Appl}

\subsection{Non-decomposable positive maps from tensor powers}

To construct non-decomposable positive maps using Theorem \ref{thm:LinProgrForDec} (or Theorem \ref{thm:AnalyticPBound}) we need to determine when a map of the form $W^{\otimes n}_{p_1}\otimes (\vartheta_{d}\circ W_{p_2})^{\otimes m}$ is positive. In general, this is not an easy task. In \cite{lami2016bipartite} it has been shown that $W_{p_1}\otimes (\vartheta_{d}\circ W_{p_2})$ is positive iff the completely positive map $W_{p_1}\otimes W_{p_2}$ is entanglement annihilating, i.e.~$(W_{p_1}\otimes W_{p_2})\lb X\rb$ is separable for any $X\in (\M_{d}\otimes \M_d)^+$. Moreover, the authors determined the region of parameters $p_1,p_2\in \lbr 0,1\rbr$ for which this is true. 

To state these results, it is convenient to first consider a different parametrization of the Werner maps $W_{p}$. For a parameter $t\in \lbr -1,1\rbr$ consider the map $Z_t:\M_d\ra\M_d$ given by
\[
Z_t(X) = \text{Tr}(X)\one_d - tX^T
\] 
for $X\in \M_d$. Now note that 
\begin{equation}
W_p = \frac{d+1-2p}{d(d^2-1)}Z_{t_p}
\label{equ:pVstParam}
\end{equation}
with 
\begin{equation}
t_p := \frac{d+1-2pd}{d+1-2p}.
\label{equ:tinp}
\end{equation}
Using \cite[Theorem 7]{lami2016bipartite} we find that the map $Z_{t_1}\otimes Z_{t_2}$ is entanglement annihilating (and equivalently $Z_{t_1}\otimes (\vartheta_d\circ Z_{t_2})$ is positive) iff 
\begin{equation}
0\leq -t_1t_2 - (t_1 + t_2) +2.
\label{equ:CritZZEA}
\end{equation}
Changing parametrization, this leads to a condition for positivity of the map $W_{p_1}\otimes (\vartheta_d\circ W_{p_2})$. 

Starting with the special case $t=t_1=t_2$, we find that \eqref{equ:CritZZEA} is equivalent to $t\leq \sqrt{3}-1$. Now, using \eqref{equ:tinp} shows that $W_{p}\otimes (\vartheta_d\circ W_{p})$ positive iff 
\[
p\geq \frac{(2-\sqrt{3})(d+1)}{2(d+1-\sqrt{3})}.
\]
Using Corollary \ref{cor:n1m1Equalp} we obtain:

\begin{thm}[Non-decomposable positive maps from tensor products I]
For $d>2$ consider the Werner map $W_p:\M_d\ra\M_d$ as defined in Section \ref{sec:WernerStatesAndMaps}. The map $W_{p}\otimes (\vartheta_d\circ W_{p})$ is positive and non-decomposable iff 
\begin{equation}
\frac{(2-\sqrt{3})(d+1)}{2(d+1-\sqrt{3})}\leq p <\frac{1}{2+\sqrt{\frac{2d}{d+1}}}.
\label{equ:IntervalWWND}
\end{equation}
\end{thm}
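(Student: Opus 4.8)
The plan is to read off the claimed interval as the intersection of two conditions, both of which are already available: a positivity condition for $W_p\otimes(\vartheta_d\circ W_p)$ and a decomposability condition for the same map. For the positivity part I would invoke the characterization established in the paragraphs immediately preceding the theorem: by \cite[Theorem 7]{lami2016bipartite}, applied after the change of parametrization \eqref{equ:pVstParam}--\eqref{equ:tinp} from the Werner maps $W_p$ to the maps $Z_t$ and using the entanglement-annihilation criterion \eqref{equ:CritZZEA}, the map $W_p\otimes(\vartheta_d\circ W_p)$ is positive if and only if $p\geq \frac{(2-\sqrt3)(d+1)}{2(d+1-\sqrt3)}$. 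For the decomposability part I would use Corollary \ref{cor:n1m1Equalp}, which says $W_p\otimes(\vartheta_d\circ W_p)$ is decomposable if and only if $p\geq \frac{1}{2+\sqrt{2d/(d+1)}}$; equivalently, it is \emph{non}-decomposable if and only if $p< \frac{1}{2+\sqrt{2d/(d+1)}}$.

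Intersecting these two ranges of $p$ immediately gives that $W_p\otimes(\vartheta_d\circ W_p)$ is simultaneously positive and non-decomposable if and only if $p$ lies in the half-open interval \eqref{equ:IntervalWWND}; the left endpoint is included (positivity is a closed condition there) and the right endpoint is excluded (decomposability sets in there). Strictly speaking this already proves the stated ``if and only if''.

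The one substantive thing left to check — and the reason for the hypothesis $d>2$ — is that the interval \eqref{equ:IntervalWWND} is non-empty, equivalently that the lower bound is strictly smaller than the upper bound. I would verify this with the substitution $s:=\sqrt{2d/(d+1)}$, which ranges over $(1,\sqrt2)$ and satisfies $d+1=2/(2-s^2)$. A short rewriting turns the lower bound into $\frac{2-\sqrt3}{2-2\sqrt3+\sqrt3 s^2}$ (the denominator being positive for $s\in(1,\sqrt2)$), and the inequality ``lower bound $<$ upper bound'' becomes, after clearing the positive denominators $2-2\sqrt3+\sqrt3 s^2$ and $2+s$ and simplifying, the factored inequality $(s+1)(\sqrt3\,s-2)>0$. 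Since $s+1>0$, this holds exactly when $s^2>4/3$, i.e.\ when $\tfrac{2d}{d+1}>\tfrac43$, i.e.\ when $d>2$; at $d=2$ the two bounds coincide and the interval degenerates to a point.

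I do not anticipate a real obstacle: both main ingredients are already proved, so the work is essentially bookkeeping — carrying the parametrization $t_p=(d+1-2pd)/(d+1-2p)$ correctly through \eqref{equ:CritZZEA}, and keeping the open/closed endpoints straight when intersecting. The only mildly delicate step is the endpoint comparison in the last paragraph, which is precisely what pins down the threshold $d>2$.
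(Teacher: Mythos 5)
Your proposal is correct and follows essentially the same route as the paper: the paper likewise obtains the positivity threshold $p\geq \frac{(2-\sqrt{3})(d+1)}{2(d+1-\sqrt{3})}$ from \cite[Theorem 7]{lami2016bipartite} via the reparametrization \eqref{equ:pVstParam}--\eqref{equ:tinp} and criterion \eqref{equ:CritZZEA}, intersects it with the decomposability threshold of Corollary \ref{cor:n1m1Equalp}, and states the result without further argument. Your explicit verification that the interval is non-empty precisely when $d>2$ (via the substitution $s=\sqrt{2d/(d+1)}$ and the factorization $(s+1)(\sqrt{3}\,s-2)>0$) is a correct and welcome addition; the paper only remarks afterwards that the interval degenerates at $d=2$.
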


Note that the interval in \eqref{equ:IntervalWWND} is empty when $d=2$. This can be seen as further evidence for the conjecture of~\cite{filippov2017positive} that for any positive map $P:\M_2\ra\M_2$ the tensor power $P\otimes P$ is either non-positive or decomposable. 

For general $p_1,p_2\in\lbr 0,1\rbr$ we can simply insert \eqref{equ:tinp} into \eqref{equ:CritZZEA}. Together with Theorem \ref{thm:n1m1} we obtain:

\begin{thm}[Non-decomposable positive maps from tensor products II]
For $d>2$ and $p_1,p_2\in\lbr 0,1\rbr$ consider the Werner maps $W_{p_1},W_{p_2}:\M_d\ra\M_d$ as defined in Section \ref{sec:WernerStatesAndMaps}. The map $W_{p_1}\otimes (\vartheta_d\circ W_{p_2})$ is positive and non-decomposable iff
\[
0\leq -\frac{(d+1-2p_1d)(d+1-2p_2d)}{(d+1-2p_1)(d+1-2p_2)} - \lb\frac{d+1-2p_1d}{d+1-2p_1} + \frac{d+1-2p_2d}{d+1-2p_2}\rb +2,
\]
and 
\[
\frac{d-1}{d+1}p_1p_2 + p_1(1-p_2) + (1-p_1)p_2 -(1-p_1)(1-p_2)<0.
\]
\label{thm:NonDecFromTPow}
\end{thm}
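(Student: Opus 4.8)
The plan is to combine two results already established in this section: the positivity criterion for $W_{p_1}\otimes(\vartheta_d\circ W_{p_2})$ derived from \cite{lami2016bipartite} through the $Z_t$-parametrization, and the decomposability criterion of Theorem \ref{thm:n1m1}. Since both are in hand, the argument is essentially bookkeeping.

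First I would check that positivity transfers cleanly under the rescaling \eqref{equ:pVstParam}. For $d\geq 2$ and $p\in\lbr 0,1\rbr$ one has $d+1-2p\geq d-1>0$, so the factor $\frac{d+1-2p}{d(d^2-1)}$ in \eqref{equ:pVstParam} is strictly positive and the denominators appearing in \eqref{equ:tinp} never vanish. Thus $W_{p_i}$ is a strictly positive multiple of $Z_{t_i}$ with $t_i$ given by \eqref{equ:tinp}, and therefore $W_{p_1}\otimes(\vartheta_d\circ W_{p_2})$ is a strictly positive multiple of $Z_{t_1}\otimes(\vartheta_d\circ Z_{t_2})$; in particular one is positive iff the other is. By \cite[Theorem 7]{lami2016bipartite}, as recalled in the discussion preceding the theorem, $Z_{t_1}\otimes(\vartheta_d\circ Z_{t_2})$ is positive iff $Z_{t_1}\otimes Z_{t_2}$ is entanglement annihilating iff \eqref{equ:CritZZEA} holds. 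Substituting \eqref{equ:tinp} into \eqref{equ:CritZZEA} produces exactly the first displayed inequality of the theorem.

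Next I would apply Theorem \ref{thm:n1m1}: the map $W_{p_1}\otimes(\vartheta_d\circ W_{p_2})$ is decomposable iff \eqref{equ:n1m1} holds, hence non-decomposable iff the strict reverse inequality holds, which is precisely the second displayed inequality. Conjoining ``positive'' and ``non-decomposable'' then gives the asserted equivalence. The restriction $d>2$ is inherited from the setting of the previous theorem; recall in particular that for $d=2$ the corresponding interval \eqref{equ:IntervalWWND} is already empty, consistent with the conjecture of \cite{filippov2017positive}. I do not anticipate a real obstacle here: the only delicate point is confirming the sign of the rescaling factor and the non-vanishing of the denominators in \eqref{equ:tinp} on all of $\lbr 0,1\rbr^2$, so that positivity of $W_{p_1}\otimes(\vartheta_d\circ W_{p_2})$ really is equivalent to positivity of $Z_{t_1}\otimes(\vartheta_d\circ Z_{t_2})$; the rest is direct substitution and keeping track of which inequality is strict.
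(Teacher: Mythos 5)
Your proposal is correct and follows exactly the paper's own (implicit) argument: the paper obtains this theorem precisely by substituting \eqref{equ:tinp} into the positivity criterion \eqref{equ:CritZZEA} from \cite{lami2016bipartite} and conjoining it with the decomposability criterion of Theorem \ref{thm:n1m1}. Your additional verification that the rescaling factor in \eqref{equ:pVstParam} is strictly positive and the denominators in \eqref{equ:tinp} never vanish on $\lbr 0,1\rbr$ is a sound (if routine) point that the paper leaves tacit.
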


\begin{figure*}[t!]
    \centering
    \begin{subfigure}[l]{0.5\textwidth}
        \centering
        \includegraphics[scale=0.6]{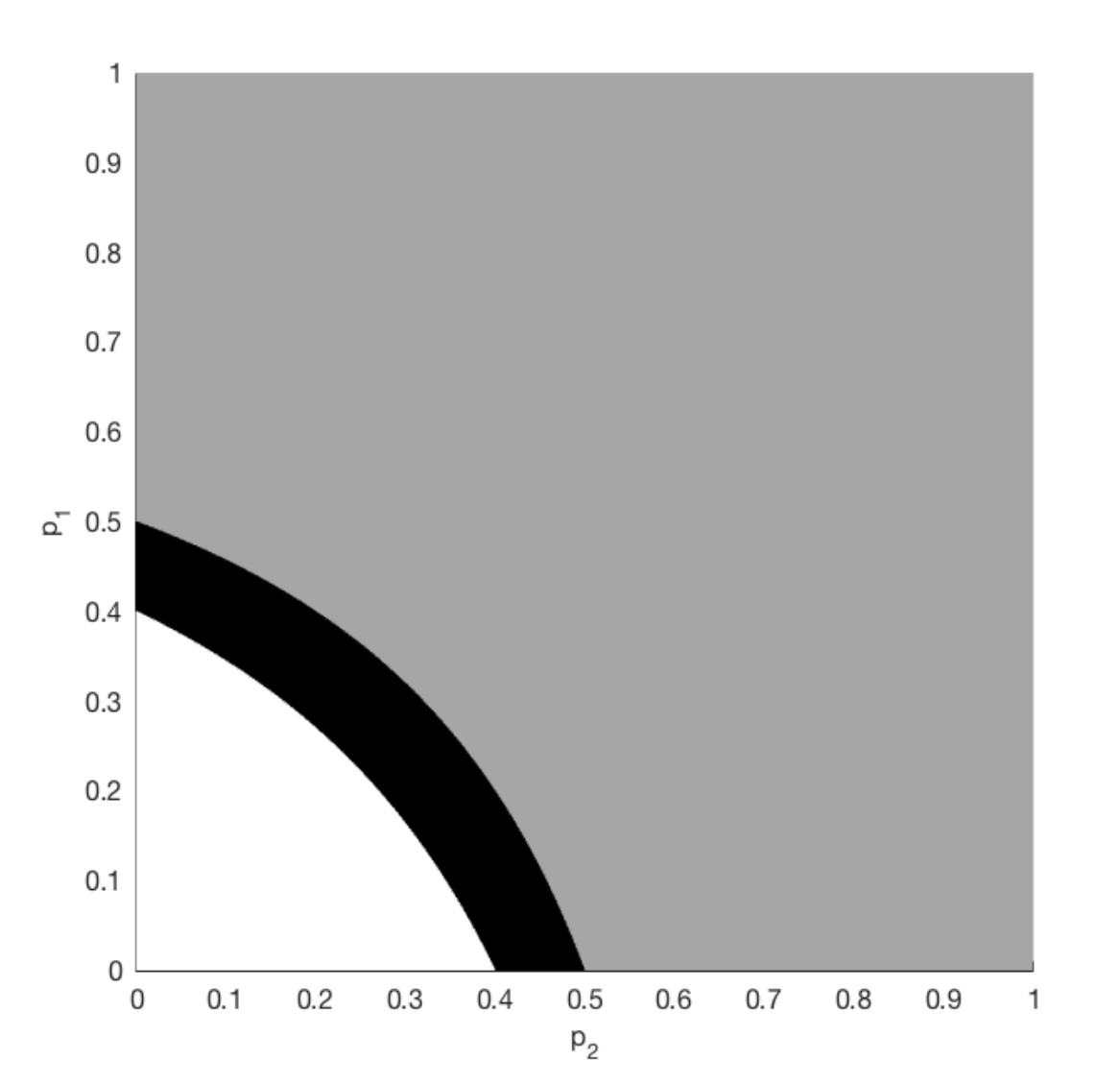}
        \caption{$d=3$}
    \end{subfigure}%
    ~ 
    \begin{subfigure}[r]{0.5\textwidth}
        \centering
        \includegraphics[scale=0.6]{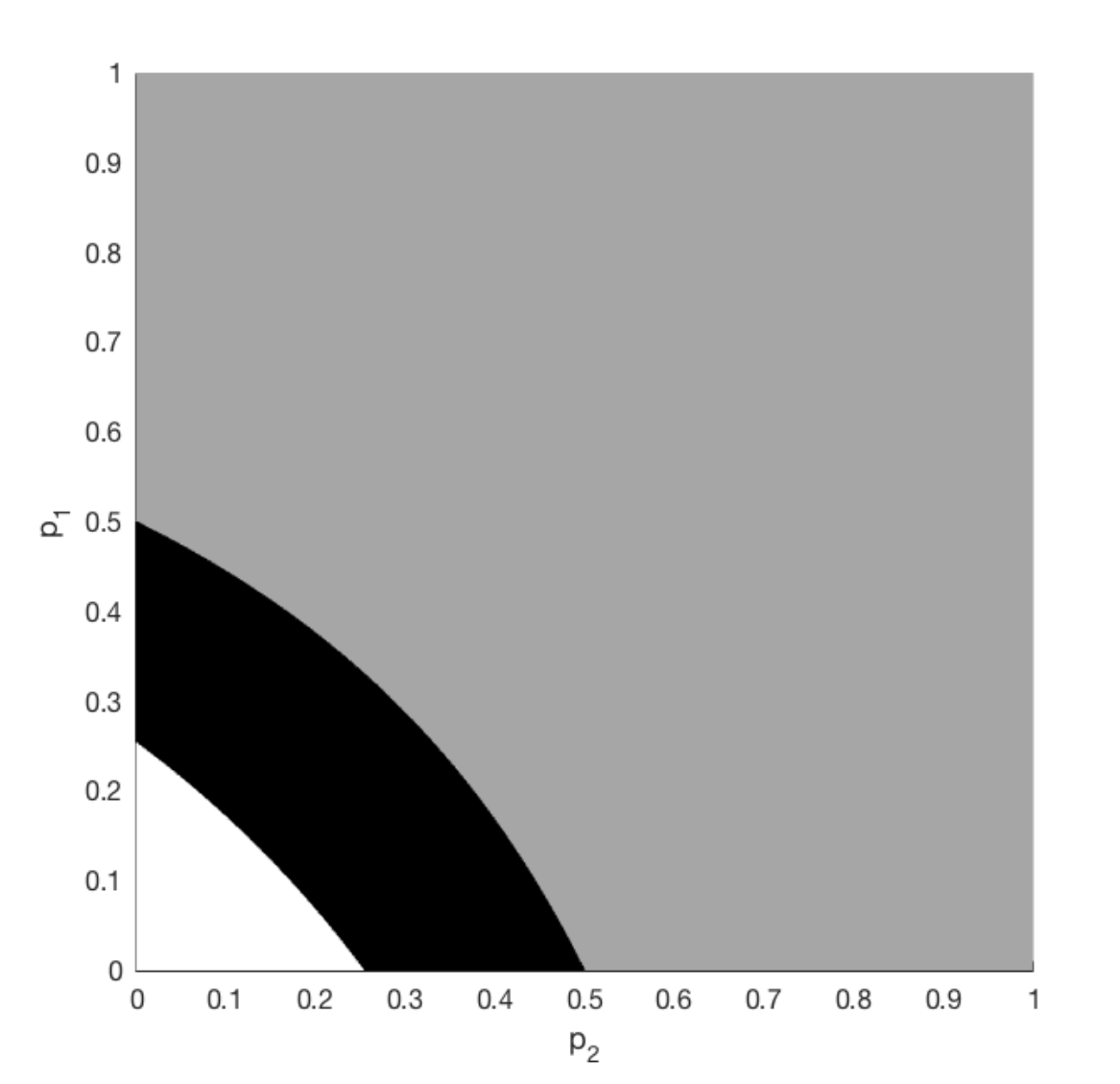}
        \caption{$d=100$}
    \end{subfigure}
    \caption{The parameters $(p_1,p_2)$ where $W_{p_1}\otimes (\vartheta_d\circ W_{p_2})$ is positive and non-decomposable are in the interior of the black region. The gray region contains the remaining parameters where the map is positive.}
    \label{fig:nonDec}
\end{figure*}

In Fig.~\ref{fig:nonDec} we plot the parameter region where the linear map $W_{p_1}\otimes (\vartheta_d\circ W_{p_2})$ is positive and non-decomposable in the cases $d=3$ and $d=100$.

\subsection{Implications for the PPT squared conjecture}

The PPT squared conjecture \cite{christandl2012PPT} asks for two completely positive and completely copositive maps $T_1, T_2 :\M_{d}\ra \M_{d}$ whether their composition $T_2\circ T_1$ is entanglement breaking. Here, PPT is short for ``positive partial transposition'' and linear maps that are both completely positive and completely copositive are often called PPT binding maps in the context of quantum information theory. It has been shown in \cite{muller2018PPT} that this conjecture is equivalent to 

\begin{conj}[Equivalent formulation of PPT squared]
For any completely positive and completely copositive map $T:\M_{d_1}\ra\M_{d_2}$ and any positive map $P:\M_{d_2}\ra\M_{d_3}$ the composition $P\circ T$ is decomposable.
\label{conj:PPTSquare}
\end{conj}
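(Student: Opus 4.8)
The plan is to recast Conjecture~\ref{conj:PPTSquare} in Choi-matrix language and then try to feed it into the symmetrization machinery built in Sections~\ref{sec:Symmetrizing}--\ref{sec:MixedTensroHW}. First I would dispose of the easy cases for the positive map $P:\M_{d_2}\ra\M_{d_3}$: if $P$ is completely positive then $P\circ T$ is a composition of completely positive maps and hence decomposable; if $P$ is completely copositive, write $P=\vartheta_{d_3}\circ P'$ with $P'$ completely positive, so that $P\circ T=\vartheta_{d_3}\circ(P'\circ T)$ is completely copositive; and if $P=T_1+\vartheta_{d_3}\circ T_2$ is already decomposable then $P\circ T=(T_1\circ T)+\vartheta_{d_3}\circ(T_2\circ T)$ is again decomposable. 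The only remaining content is when $P$ is a \emph{non-decomposable} positive map, which by \cite{woronowicz1976positive} forces $d_2d_3>6$; one may also assume $T$ is not entanglement breaking, since otherwise $C_T$ is separable and $C_{P\circ T}=(\ident_{d_1}\otimes P)(C_T)$ is again separable, hence positive.

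The next step is to combine $C_{P\circ T}=(\ident_{d_1}\otimes P)(C_T)$ with Theorem~\ref{thm:DecWitness}: $P\circ T$ is decomposable iff $\text{Tr}\big[X\,(\ident_{d_1}\otimes P)(C_T)\big]\geq 0$ for every $X\in(\M_{d_1}\otimes\M_{d_3})^+$ with $X^\Gamma\geq 0$; taking adjoints, this reads $\text{Tr}\big[(\ident_{d_1}\otimes P^*)(X)\,C_T\big]\geq 0$, and since $C_T$ is both positive and positive-partial-transpose it suffices that $(\ident_{d_1}\otimes P^*)(X)$ lies in the cone $\overline{\{R+S^\Gamma:R,S\in(\M_{d_1}\otimes\M_{d_2})^+\}}$ for every such $X$. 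At this point I would symmetrize $P$: as $P$ is not completely positive, Theorem~\ref{thm:SymmP} produces a completely positive $S:\M_{d_3}\ra\M_{d_2}$ and a suitable $p<1/2$ with $\int_{\mathcal{U}_{d_2}}\text{Ad}_U\circ\vartheta_{d_2}\circ S\circ P\circ\text{Ad}_{U^T}\,\text{d}U=W_p$, so up to pre- and post-composition with completely positive maps the positive input can be replaced by a Werner map. If one could \emph{also} twirl the PPT matrix $C_T$ on its output leg (when $d_1=d_2$), it would become a PPT Werner state, hence separable, reducing $P\circ T$ to the entanglement-breaking case; this already shows that the fully Werner-symmetric instances of the conjecture hold.

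The hard part — and the reason the conjecture is still open — is that these two symmetrizations cannot be carried out simultaneously while keeping the composition $P\circ T$ intact: averaging the output of $T$ by $\text{Ad}_{U^T}$ forces the matching average on the input of $P$, and there is no independent group action left to symmetrize $T$ on its own input side; moreover $T$ need not act between equal dimensions, so the $UU$-twirl is unavailable there. Consequently the reduction to mixed tensor powers of Werner maps that powers Theorems~\ref{thm:NoTensStabDec} and~\ref{thm:QuantBound} does not transfer to Conjecture~\ref{conj:PPTSquare} in full generality. What the present tools \emph{do} provide is a concrete route to test the conjecture, or to search for counterexamples, inside symmetric families: when both $P$ and $T$ are assembled from Werner states, the matrices $H_Q$ of~\eqref{equ:HQ}, the positivity criterion of Theorem~\ref{thm:HQPos}, and the linear program of Theorem~\ref{thm:LinProgrForDec} decide decomposability of $P\circ T$ exactly, and working through those cases — together with a more careful analysis of the distillation argument behind the alternative proof of Theorem~\ref{thm:NoTensStabDec} — is the direction I would pursue.
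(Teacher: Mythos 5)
The statement you were asked to prove is Conjecture~\ref{conj:PPTSquare}, and the paper does not prove it: it is stated as an open conjecture, shown in \cite{muller2018PPT} to be equivalent to the PPT squared conjecture, and the paper's only result about it runs in the \emph{opposite} direction (existence of a non-trivial tensor-stable positive map would \emph{falsify} it, via Theorem~\ref{thm:NoTensStabDec}). So there is no proof in the paper to compare against, and your write-up is correct not to claim one. Your preliminary reductions are all sound: the cases where $P$ is completely positive, completely copositive, or more generally decomposable are handled correctly by $P\circ T=(T_1\circ T)+\vartheta_{d_3}\circ(T_2\circ T)$; the reduction to non-entanglement-breaking $T$ via $C_{P\circ T}=(\ident_{d_1}\otimes P)(C_T)$ is correct (separability of $C_T$ plus positivity of $P$ gives a separable, hence positive, Choi matrix); and the adjoint reformulation through Theorem~\ref{thm:DecWitness} is a legitimate sufficient condition. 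Your diagnosis of why the symmetrization machinery of Theorem~\ref{thm:SymmP} does not close the argument — the $UU$-twirl on the interface between $T$ and $P$ consumes the only available group action, leaving no independent way to force $C_T$ into the Werner family, and the dimensions need not even match — is exactly the right obstruction to identify.

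The one thing worth flagging is a point of emphasis rather than an error: had you pushed the symmetrization harder and believed it succeeded, you would have ``proved'' a statement that the paper shows is in tension with the (also open) existence of non-trivial tensor-stable positive maps, which is a useful sanity check that no such proof can be purely formal. Within the paper's framework the honest conclusion is the one you reached: the conjecture is decidable only on the fully symmetric (Werner) instances, where PPT Werner states are separable and the claim is trivially true, and the general case remains open.
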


Here we will prove: 

\begin{thm}
If there exists a non-trivial tensor-stable positive map, then Conjecture \ref{conj:PPTSquare} is false.
\end{thm}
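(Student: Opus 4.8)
The plan is to feed a hypothetical non-trivial tensor-stable positive map into the symmetrisation of Section~\ref{sec:Symmetrizing} to manufacture a positive, non-decomposable map of Werner-tensor-power type, and then to realise that map as a positive map composed with a map that is both completely positive and completely copositive, which contradicts Conjecture~\ref{conj:PPTSquare}.

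\textbf{Set-up and positivity.} Let $P_0:\M_{d_A}\ra\M_{d_B}$ be tensor-stable positive but neither completely positive nor completely copositive; then $d_A,d_B\geq 2$ and, by Theorem~\ref{thm:NoTensStabDec}, $P_0$ is in particular non-decomposable. Put $d:=d_A$. The family of cones of positive maps is a mapping cone in the sense of Definition~\ref{defn:MappingCone}, and $P_0^{\otimes N}$ is positive for every $N$, so Theorem~\ref{thm:ReducMappCon} applies: writing $\mu_0:=\max\!\bigl(\mu(P_0),\mu(\vartheta_{d_B}\circ P_0)\bigr)$, which is strictly negative by Lemma~\ref{lem:BasicMu}, we conclude that for every $n\in\N$ and every $p$ with $\tfrac12(1+\mu_0)<p<\tfrac12$ the map
\[
M_{n,p}:=W_p^{\otimes n}\otimes(\vartheta_d\circ W_p)^{\otimes n}:\M_d^{\otimes 2n}\ra\M_d^{\otimes 2n}
\]
is positive. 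Since $N$ is arbitrary, $M_{n,p}^{\otimes k}$ is again of this form and hence positive, so $M_{n,p}$ is itself tensor-stable positive, and it is neither completely positive nor completely copositive because $p<\tfrac12$.

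\textbf{Non-decomposability and a bound entangled witness.} Fix such a $p$; by Theorem~\ref{thm:AnalyticPBound} (the $p_1=p_2=p<\tfrac12$ case of Theorem~\ref{thm:AnalyticPBoundDiffp}) we have, for all $n$ large enough, that $M:=M_{n,p}$ is not decomposable, and the proof of that theorem produces an explicit witness: the matrix $H_Q$ from \eqref{equ:HQForAnBoun}, which by Lemma~\ref{lem:SpecificHQ} is positive and has positive partial transpose, satisfies $\operatorname{Tr}\!\bigl[C_M H_Q\bigr]<0$. Since a separable positive matrix cannot have negative overlap with the (block positive) Choi matrix of the positive map $M$, the suitably normalised matrix $H_Q$ is a bound entangled state.

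\textbf{The crux and the main obstacle.} It remains to convert ``$M$ positive and non-decomposable'' into a refutation of Conjecture~\ref{conj:PPTSquare}, i.e.\ to exhibit a completely positive and completely copositive map $T$ and a positive map $P$ with $P\circ T$ non-decomposable. The natural candidate is to take $T$ to be the map with Choi matrix $C_T:=H_Q$ — which is completely positive \emph{and} completely copositive precisely because $H_Q$ is positive and PPT — and to take $P:=M^{*}$, the Hilbert--Schmidt adjoint of $M$, which is positive. Then $C_{P\circ T}=(\ident\otimes M^{*})(H_Q)$, and using $C_M=(\ident\otimes M)(\omega_{d^{2n}})$ one gets $\langle\Omega|(\ident\otimes M^{*})(H_Q)|\Omega\rangle=\operatorname{Tr}[C_M H_Q]<0$. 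The genuinely hard point is to upgrade this from ``$C_{P\circ T}$ is not positive'' to ``$P\circ T$ is not decomposable'': by Theorem~\ref{thm:DecWitness} one must produce a \emph{PPT} matrix $Y$ with $\operatorname{Tr}\!\bigl[(\ident\otimes M^{*})(H_Q)\,Y\bigr]=\operatorname{Tr}\!\bigl[H_Q\,(\ident\otimes M)(Y)\bigr]<0$, whereas the first thing one would try, $Y=\omega$, is not PPT. I expect the required $Y$ to come from the same symmetric PPT family (a suitable $H_{Q'}$), exploiting the $UU$-symmetry of the Werner maps and the linear-programming description of Theorem~\ref{thm:LinProgrForDec}; the whole force of the hypothesis is carried by the positivity of $M$, which fails unconditionally and is exactly what allows a non-decomposable map to also be written as ``positive $\circ$ (completely positive and completely copositive)''. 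Modulo this step — and some routine bookkeeping with the bipartition underlying $H_Q$ and with adjoints — $P\circ T$ contradicts Conjecture~\ref{conj:PPTSquare}, so the conjecture is false.
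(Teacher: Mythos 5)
Your argument stalls exactly at the step you yourself flag as ``the genuinely hard point'', and that gap is real and unfilled: the inequality $\text{Tr}\lbr C_M H_Q\rbr<0$ only certifies that your composition $P\circ T$ fails to be completely positive, whereas Theorem \ref{thm:DecWitness} demands a \emph{PPT} witness $Y$ with $\text{Tr}\lbr H_Q\,(\ident\otimes M)(Y)\rbr<0$, which you do not produce and which need not exist for your particular choice of $T$ (Choi matrix $H_Q$) and $P=M^{*}$ --- a map can perfectly well be non-CP yet decomposable, so ``Choi matrix not positive'' does not upgrade to ``not decomposable'' without a genuinely new ingredient. The long detour through the symmetrized Werner maps $M_{n,p}$ is also unnecessary for this theorem: it manufactures a positive non-decomposable map, but that was never the bottleneck.

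The idea you are missing is that one should not try to witness non-decomposability of the composition directly; one should feed the composition back into Theorem \ref{thm:NoTensStabDec}. The paper's route is: first replace the tensor-stable positive map $P$ by a non-decomposable tensor power of itself (possible, by Theorem \ref{thm:NoTensStabDec}, if $P$ happened to be decomposable); then use the duality behind Theorem \ref{thm:DecWitness} to pick a completely positive and completely copositive $T$ with $P\circ T$ not completely positive; then pass to $\tilde P=P\otimes D$ and $\tilde T=T\otimes\proj{0}{0}+T\circ\vartheta\otimes\proj{1}{1}$, with $D$ the diagonal pinching, so that $\tilde P$ is still tensor-stable positive, $\tilde T$ is still completely positive and completely copositive, and $\tilde P\circ\tilde T$ is neither completely positive nor completely copositive. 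Theorem \ref{thm:NoTensStabDec} applied to $\tilde P\circ\tilde T$ then forces $(\tilde P\circ\tilde T)^{\otimes n}=\tilde P^{\otimes n}\circ\tilde T^{\otimes n}$ to be non-decomposable for some $n$, and since the tensor power of the composition is the composition of the tensor powers --- each factor retaining its class --- this is the counterexample to Conjecture \ref{conj:PPTSquare}. That re-application of the main theorem is precisely the mechanism that converts ``not completely positive'' into ``not decomposable'', and it is absent from your proposal.
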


\begin{proof}
Let $P:\M_{d_1}\ra\M_{d_2}$ be a non-trivial tensor-stable positive map. We can assume without loss of generality that $P$ is not decomposable. Otherwise, Theorem \ref{thm:NoTensStabDec} implies that for some $n\in \N$ the map $P^{\otimes n}$ is not decomposable, but tensor-stable positive, and we could consider it instead of $P$. 

Since $P$ is not decomposable, there exists a completely positive and completely copositive map $T:\M_{d_2}\ra\M_{d_1}$ such that $P\circ T$ is not completely positive. Let $D:\M_2\ra \M_2$ be given by
\[
D(X)=\bra{0}X\ket{0}\proj{0}{0} +\bra{1}X\ket{1}\proj{1}{1},
\] 
for any $X\in\M_2$, and note that this map is entanglement breaking~\cite{horodecki2003entanglement}. Now consider the modified maps $\tilde{P}:\M_{d_1}\otimes \M_2\ra\M_{d_2}\otimes \M_2$ given by $\tilde{P} = P\otimes D$ and $\tilde{T}:\M_{d_2}\ra\M_{d_1}\otimes \M_2 $ given by 
\begin{align*}
\tilde{T} = T\otimes\proj{0}{0} + T\circ \vartheta_{d_1}\otimes \proj{1}{1}.  
\end{align*}
Note that $\tilde{T}$ is still completely positive and completely copositive. The map $\tilde{P}$ is still tensor-stable positive, since by permuting subsystems we have $\tilde{P}^{\otimes n} \simeq P^{\otimes n}\otimes D^{\otimes n}$ which is positive since $P^{\otimes n}$ is positive and $D^{\otimes n}$ is entanglement breaking. Furthermore, observe that 
\[
\tilde{P}\circ \tilde{T} = P\circ T\otimes\proj{0}{0} + P\circ T\circ \vartheta_{d_1}\otimes \proj{1}{1}
\]
is neither completely positive nor completely co-positive, since $P\circ T$ is not completely positive. Therefore, we can use Theorem \ref{thm:NoTensStabDec} to find an $n\in\N$ such that $(\tilde{P}\circ \tilde{T})^{\otimes n}= \tilde{P}^{\otimes n}\circ \tilde{T}^{\otimes n}$ is not decomposable. But then $\tilde{T}^{\otimes n}$ and $\tilde{P}^{\otimes n}$ provide a counterexample for Conjecture \ref{conj:PPTSquare}.
\end{proof}

\section{Conclusion and open problems}

We have shown that a tensor-stable decomposable map has to be either completely positive or completely copositive. Moreover, we have derived bounds on the number of tensor powers a given linear map can stay decomposable for, if it does not belong to these two classes. Finally, we applied our results to construct non-decomposable positive maps, and to find a link between the existence problem of tensor-stable positive maps and the PPT squared conjecture in quantum information theory.    

It would be interesting for future research to improve the quantitative bound from Theorem \ref{thm:QuantBound} and possibly to find optimal bounds on the decomposability of tensor powers of positive maps. For the mixed tensor powers of Werner maps, studied throughout our article, this could be possible by studying the set of quantum states with positive partial transpose introduced in Section \ref{subsec:SymmStates}.

Finally, it should also be noted that the results from Section \ref{sec:Symmetrizing} can be applied to similar questions for other mapping cones (as in Definition \ref{defn:MappingCone}): Given a mapping cone $\mathcal{C}$ which elements $P\in\mathcal{C}$ satisfy $P^{\otimes n}\in\mathcal{C}$ for all $n\in\N$? A mapping cone might contain the trivial examples of completely positive or completely copositive maps, but finding any example not belonging to these classes would also be an example of a non-trivial tensor-stable positive map (since mapping cones are always subcones of positive maps). Following the lines of the second proof of our Theorem \ref{thm:NoTensStabDec} from Section \ref{sec:Proofs} one could try to show that there are no such maps (except possibly the trivial ones) in a given mapping cone. To do this one would need to understand the properties of the linear maps leaving invariant the Choi matrices of the mapping cone in question. We will study this further in future works.

\section*{Acknowledgments}
We would like to thank Chris Perry for proofreading this manuscript. We acknowledge financial support from the European Research Council (ERC Grant Agreement no 337603), the Danish Council for Independent Research (Sapere Aude) and VILLUM FONDEN via the QMATH Centre of Excellence (Grant No. 10059).

\appendix

\section{Proof of Lemma \ref{lem:ExtremePoints}}
\label{sec:Appendix}

Lemma \ref{lem:ExtremePoints} has been shown originally in \cite{vollbrecht2002activating}. We include here a proof of this result to make the presentation selfcontained. Let $\mathcal{C}_d$ denote the set from \eqref{equ:Cset}. Using the condition $\sum^2_{i,j=1}Q_{ij} = 1$ we can write each $Q\in \mathcal{C}_d$ as 
\[
Q = \begin{pmatrix} x & y \\ 1-x-y-z & z \end{pmatrix}.
\]
with parameters $x,y,z\in\R$. With these parameters, the conditions that $QV^{1}_d$ and $Q^TV^{1}_d$ are entrywise positive, are equivalent to the set of inequalities
\begin{align}
x-y&\geq 0, \label{equ:inequ1}\\
x + \alpha_d y&\geq 0,\label{equ:inequ2}\\
y - z&\geq 0,\label{equ:inequ3}\\
y + \alpha_d z&\geq 0,\label{equ:inequ4}\\
1-x-y - 2z&\geq 0,\label{equ:inequ5}\\
1-x-y+(\alpha_d-1)z&\geq 0,\label{equ:inequ6}\\
2x + y + z &\geq 1, \label{equ:inequ7}\\
-(\alpha_d - 1)x - \alpha_d( y + z) &\geq -\alpha_d. \label{equ:inequ8}
\end{align}
Here, $\alpha_d = \frac{d+1}{d-1}$ and in particular we have $\alpha_d>1$ for any $d\in\N$. Let $\mathcal{C}'_d$ denote the set of all $(x,y,z)\in \R^3$ satisfying the above set of inequalities. Clearly, to prove Lemma \ref{lem:ExtremePoints} it is enough to show that the extreme points of $\mathcal{C}'_d$ are given by 
\begin{align*}
Q'_1 &= \frac{1}{2(d+2)}(d+1,d+1,-(d-1)),\\
Q'_2 &= (1,0,0),\\
Q'_3 &= \frac{1}{4}(1,1,1),\\
Q'_4 &= \frac{1}{2}(1,0,0),\\
Q'_5 &= \frac{1}{2}(1,1,0).
\end{align*}
We will show this in the following two lemmas:

\begin{lem}
The points $Q'_1, Q'_2, Q'_3, Q'_4$, and $Q'_5$ defined above are extreme points of $\mathcal{C}'_d$.
\end{lem}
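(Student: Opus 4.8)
The plan is to apply the standard vertex criterion for polyhedra. The set $\mathcal{C}'_d\subset\R^3$ is cut out by the eight one-sided linear inequalities \eqref{equ:inequ1}--\eqref{equ:inequ8}, so a point $v\in\R^3$ is an extreme point of $\mathcal{C}'_d$ if and only if (a) $v$ satisfies all eight inequalities, and (b) among the inequalities that are \emph{active} at $v$ (those holding with equality) there exist three whose gradient vectors are linearly independent. Thus for each of the five candidate points I would first substitute its coordinates into all eight inequalities to confirm feasibility and to record the active set, and then exhibit three active constraints whose $3\times3$ coefficient matrix has nonzero determinant.

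Writing each inequality as $a^{(i)}\cdot(x,y,z)\ge c_i$, the gradients are $a^{(1)}=(1,-1,0)$, $a^{(2)}=(1,\alpha_d,0)$, $a^{(3)}=(0,1,-1)$, $a^{(4)}=(0,1,\alpha_d)$, $a^{(5)}=(-1,-1,-2)$, $a^{(6)}=(-1,-1,\alpha_d-1)$, $a^{(7)}=(2,1,1)$, $a^{(8)}=(-(\alpha_d-1),-\alpha_d,-\alpha_d)$, and recall $\alpha_d>1$. The bookkeeping then runs as follows. At $Q'_2=(1,0,0)$ the tight constraints are \eqref{equ:inequ3}--\eqref{equ:inequ6}, and $\{a^{(3)},a^{(4)},a^{(5)}\}$ has determinant $\alpha_d+1\neq0$. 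At $Q'_4=\tfrac12(1,0,0)$ the tight ones are \eqref{equ:inequ3},\eqref{equ:inequ4},\eqref{equ:inequ7}, with $\det\{a^{(3)},a^{(4)},a^{(7)}\}=2(\alpha_d+1)\neq0$. At $Q'_5=(\tfrac12,\tfrac12,0)$ they are \eqref{equ:inequ1},\eqref{equ:inequ5},\eqref{equ:inequ6}, with $\det\{a^{(1)},a^{(5)},a^{(6)}\}=-2(\alpha_d+1)\neq0$. At $Q'_3=\tfrac14(1,1,1)$ they are \eqref{equ:inequ1},\eqref{equ:inequ3},\eqref{equ:inequ5},\eqref{equ:inequ7}, and $\{a^{(1)},a^{(3)},a^{(5)}\}$ has determinant $-4\neq0$. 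Finally at $Q'_1=\tfrac{1}{2(d+2)}(d+1,d+1,-(d-1))$, using $\alpha_d=\tfrac{d+1}{d-1}$ and $\alpha_d-1=\tfrac{2}{d-1}$, the cancellations make \eqref{equ:inequ1},\eqref{equ:inequ4},\eqref{equ:inequ6},\eqref{equ:inequ7} tight, and $\det\{a^{(1)},a^{(4)},a^{(6)}\}=3\alpha_d-1>0$. In every case one then checks that the non-active inequalities hold strictly; for $Q'_1$ these reduce to elementary facts such as $2\le d+2$.

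There is no genuine obstacle: the argument is a finite, mechanical verification. The only mild cares are to confirm that each $Q'_i$ is feasible for \emph{all} eight constraints (not merely that three are tight), since the defining inequalities are one-sided, and to carry out the substitutions for $Q'_1$ with $\alpha_d$ written in terms of $d$ so that the vanishing of the left-hand sides of \eqref{equ:inequ4} and \eqref{equ:inequ6} is transparent. One could instead argue directly that no $Q'_i$ is a proper convex combination of two other points of $\mathcal{C}'_d$, but the rank-of-active-gradients criterion is cleaner and keeps the whole check routine.
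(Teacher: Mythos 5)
Your proof is correct, and it takes a genuinely different route from the paper. You invoke the standard polyhedral vertex criterion: a feasible point of $\{v\in\R^3 : a^{(i)}\cdot v\ge c_i\}$ is extreme iff three of its active constraints have linearly independent gradients. I checked your bookkeeping and it is accurate throughout: the active sets you list at each $Q'_i$ are exactly right (e.g.\ at $Q'_1$ the tight constraints are indeed \eqref{equ:inequ1}, \eqref{equ:inequ4}, \eqref{equ:inequ6}, \eqref{equ:inequ7}, with \eqref{equ:inequ8} holding strictly because $2<d+2$), and the determinants $\pm(\alpha_d+1)$, $2(\alpha_d+1)$, $-2(\alpha_d+1)$, $-4$, and $3\alpha_d-1$ are all correct and nonzero since $\alpha_d>1$. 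The paper instead argues from first principles: for each $Q'_i$ it assumes a proper convex decomposition $Q'_i=p\,v_1+(1-p)\,v_2$ and forces $v_1=v_2=Q'_i$ by deriving bespoke auxiliary bounds (for points with $x=y$ it shows $z\ge -\tfrac{d-1}{2(d+2)}$, $\tfrac14\le x\le\tfrac12$, and separately $x\le 1$ for all feasible points), exploiting that equality in a positive combination of one-sided constraints forces equality in each summand. Your approach buys uniformity and auditability --- the same mechanical check works at all five points and there is nothing to invent --- at the cost of relying on the rank-of-active-gradients theorem, which you should at least cite (it is textbook, e.g.\ the basic-feasible-solution characterization of vertices). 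The paper's argument is longer and requires a separate idea for each point, but is entirely self-contained. Both are complete proofs of the lemma.
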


\begin{proof}
Consider first $(x,y,z)\in \mathcal{C}'_d$ satisfying $x=y$. Using \eqref{equ:inequ6} and \eqref{equ:inequ4} we find that
\[
1+(3\alpha_d -1)z = (1-2x+(\alpha_d-1)z) + 2(x + \alpha_d z) \geq 0
\]  
showing that 
\begin{equation}
z \geq -\frac{1}{3\alpha_d -1} = -\frac{d-1}{2(d+2)}.
\label{equ:Adv1}
\end{equation}
From \eqref{equ:inequ3} and \eqref{equ:inequ7} we have that 
\begin{equation}
x\geq \frac{1}{4}.  
\label{equ:Adv2}
\end{equation}
Finally, assuming $z\geq 0$ and using \eqref{equ:inequ5} or assuming $z<0$ and using \eqref{equ:inequ6} implies that 
\begin{equation}
x\leq \frac{1}{2}.
\label{equ:Adv3}
\end{equation}
Now we can show that the points $Q_1,\ldots , Q_5$ are indeed extreme points of the set $\mathcal{C}'_d$.
\begin{itemize}
\item Assume that there are $(x_1,y_1,z_1),(x_2,y_2,z_2)\in \mathcal{C}'_d$ such that 
\[
Q'_1 = p(x_1,y_1,z_1) + (1-p)(x_2,y_2,z_2) 
\]
for some $p\in \lb 0,1\rb$. Since the first two entries of $Q'_1$ are equal we find
\[
p(x_1-y_1) + (1-p)(x_2-y_2) = 0,
\]
which using \eqref{equ:inequ1} implies that $x_1=y_1$ and $x_2=y_2$. Clearly, \eqref{equ:Adv1} implies that $z_1=z_2 = -\frac{d-1}{2(d+2)}$. Inserting these in \eqref{equ:inequ7} and using that $x_1=y_1$ and $x_2=y_2$ shows that $x_1=y_1=x_2=y_2=\frac{d+1}{2(d+2)}$, and thus $Q'_1$ is an extreme point of $\mathcal{C}'_d$.

\item If for $(x_1,y_1,z_1),(x_2,y_2,z_2)\in \mathcal{C}'_d$ we have
\[
Q'_3 = p(x_1,y_1,z_1) + (1-p)(x_2,y_2,z_2) 
\]
for some $p\in \lb 0,1\rb$, then we can argue as for $Q'_1$ to conclude that $x_1=y_1$ and $x_2=y_2$. Now, \eqref{equ:Adv2} implies that $x_1=x_2 =y_1=y_2= \frac{1}{4}$. Then, by \eqref{equ:inequ5} we find that $x_3=y_3=\frac{1}{4}$ as well, and $Q'_3$ is indeed an extreme point of $\mathcal{C}'_d$.

\item If for $(x_1,y_1,z_1),(x_2,y_2,z_2)\in \mathcal{C}'_d$ we have
\[
Q'_5 = p(x_1,y_1,z_1) + (1-p)(x_2,y_2,z_2) 
\]
for some $p\in \lb 0,1\rb$, then we can argue as for $Q'_1$ to conclude that $x_1=y_1$ and $x_2=y_2$. Now, \eqref{equ:Adv3} implies that $x_1=x_2=y_1=y_2=\frac{1}{2}$, and using \eqref{equ:inequ5} and \eqref{equ:inequ6} we find that $z_1=z_2=0$. Therefore, $Q'_5$ is also an extreme point of $\mathcal{C}'_d$.

\item Let $(x_1,y_1,z_1),(x_2,y_2,z_2)\in \mathcal{C}'_d$ be such that 
\[
Q'_4 = p(x_1,y_1,z_1) + (1-p)(x_2,y_2,z_2) 
\]
for some $p\in \lb 0,1\rb$. Since the last two entries of $Q'_4$ are equal we find
\[
p(y_1-z_1) + (1-p)(y_2-z_2) = 0,
\]
which using \eqref{equ:inequ3} implies that $y_1=z_1$ and $y_2=z_2$. By \eqref{equ:inequ4} and since the last two entries of $Q'_4$ vanish we need to have $y_1=z_1=y_2=z_2=0$. Using \eqref{equ:inequ7} we find that $x_1=x_2=\frac{1}{2}$, which shows that $Q'_4$ is an extreme point of $\mathcal{C}'_d$. 

\item To prove extremality of $Q'_2$ one more observations. First, note that if for $(x,y,z)\in \mathcal{C}'_d$ we would have $x>1$, then by \eqref{equ:inequ5} and \eqref{equ:inequ3} we have $z<0$. Then, \eqref{equ:inequ4} implies that $y>0$, which leads to a contradiction in \eqref{equ:inequ6}. Therefore, we have 
\begin{equation}
x\leq 1.
\label{equ:Adv4}
\end{equation}

Finally, let $(x_1,y_1,z_1),(x_2,y_2,z_2)\in \mathcal{C}'_d$ such that
\[
Q'_2 = p(x_1,y_1,z_1) + (1-p)(x_2,y_2,z_2) 
\]
for some $p\in \lb 0,1\rb$. Then, by \eqref{equ:Adv4} we have $x_1=x_2=1$. Arguing as for $Q'_4$ (using that the final two entries of $Q'_2$ coincide) we also have $y_1=z_1$ and $y_2=z_2$. Now, \eqref{equ:inequ4} and \eqref{equ:inequ5} show that $y_1=z_1=y_2=z_2=0$ proving that $Q'_2$ is an extreme point of $\mathcal{C}'_d$.

\end{itemize}

\end{proof}

We still need to show the following:

\begin{lem}
The points $Q'_1,Q'_2,Q'_3,Q'_4,Q'_5$ are all the extreme points of $\mathcal{C}'_d$.
\end{lem}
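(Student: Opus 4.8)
The plan is to show that the polytope $\mathcal{C}'_d \subset \R^3$ defined by the eight inequalities \eqref{equ:inequ1}--\eqref{equ:inequ8} has no extreme points other than $Q'_1,\ldots,Q'_5$. Since $\mathcal{C}'_d$ is cut out by eight affine halfspaces in $\R^3$, every extreme point is the unique solution of a subsystem of (at least) three of these inequalities made into equalities, and conversely every such vertex that actually lies in $\mathcal{C}'_d$ is an extreme point. So the brute-force approach is: enumerate the $\binom{8}{3}=56$ triples of constraints, discard the degenerate triples (those whose three normal vectors are linearly dependent, which give either no solution or a whole line), solve the remaining $3\times 3$ linear systems, and for each candidate point check whether it satisfies all eight inequalities. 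I expect this to terminate with exactly the five listed points (some of them arising from several triples, since $Q'_2$ through $Q'_5$ lie on several facets simultaneously).

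A cleaner route, and the one I would actually write up, exploits the structure already extracted in the previous lemma. The key dichotomy is whether a point of $\mathcal{C}'_d$ satisfies $x=y$ or $x>y$ (by \eqref{equ:inequ1}, $x\geq y$ always). First I would argue that \emph{every} extreme point lies on the plane $\{x=y\}$ \emph{or} is one of $Q'_2,Q'_4$. Indeed, suppose $(x,y,z)\in\mathcal{C}'_d$ with $x>y$; I want to write it as a nontrivial convex combination unless it is $Q'_2$ or $Q'_4$. The idea is that when $x>y$, inequality \eqref{equ:inequ1} is slack, so one has room to perturb: consider moving along a direction in the plane spanned by the remaining active constraints. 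Concretely, one checks that among \eqref{equ:inequ2}--\eqref{equ:inequ8}, at most two can be tight at a point with $x>y$ unless that point is $Q'_2$ (where \eqref{equ:inequ2},\eqref{equ:inequ3},\eqref{equ:inequ4},\eqref{equ:inequ5} all degenerate appropriately) or $Q'_4$ (where \eqref{equ:inequ3},\eqref{equ:inequ4},\eqref{equ:inequ7},\eqref{equ:inequ8} interact), so a generic point with $x>y$ has a free perturbation direction and is not extreme. This reduces the problem to classifying extreme points \emph{within} the two-dimensional face $\{x=y\}\cap\mathcal{C}'_d$.

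On the slice $x=y$, the inequalities collapse to a small system in the two variables $(x,z)$: \eqref{equ:inequ1} is automatic; \eqref{equ:inequ3} becomes $x\geq z$; \eqref{equ:inequ4} becomes $x+\alpha_d z\geq 0$; \eqref{equ:inequ7} becomes $3x+z\geq 1$; \eqref{equ:inequ5} becomes $1-2x-2z\geq 0$; \eqref{equ:inequ6} becomes $1-2x+(\alpha_d-1)z\geq 0$; \eqref{equ:inequ2} and \eqref{equ:inequ8} become $x(1+\alpha_d)\geq 0$ and $-(\alpha_d-1)x-2\alpha_d z\geq -\alpha_d$, the first being automatic. So I have a planar polygon described by at most six halfplanes, whose vertices are the pairwise intersections of the boundary lines. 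Here I can reuse the bounds \eqref{equ:Adv1}, \eqref{equ:Adv2}, \eqref{equ:Adv3} already derived ($-\tfrac{d-1}{2(d+2)}\leq z$, $\tfrac14\leq x\leq\tfrac12$) to localize the polygon, then identify its vertices directly: $x=\tfrac14$ forces (via \eqref{equ:inequ7}, i.e. $3x+z\geq1$, with equality) $z=\tfrac14$, giving $Q'_3$; $x=\tfrac12$ together with \eqref{equ:inequ5} forces $z=0$, giving $Q'_5$; and $z=-\tfrac{d-1}{2(d+2)}$ together with $3x+z=1$ forces $x=\tfrac{d+1}{2(d+2)}$, giving $Q'_1$. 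One then verifies no other boundary-line intersection lies inside the polygon. Combined with the reduction step, this shows $Q'_1,Q'_3,Q'_5$ exhaust the extreme points on $\{x=y\}$ and $Q'_2,Q'_4$ are the only ones off it, completing the proof.

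The main obstacle is the reduction step in the second paragraph: ruling out extreme points with $x>y$ other than $Q'_2,Q'_4$ requires a careful case analysis of which subsets of \eqref{equ:inequ2}--\eqref{equ:inequ8} can be simultaneously tight, and showing that three independent tight constraints with $x>y$ pin down only $Q'_2$ or $Q'_4$. If this case analysis turns out to be unpleasant, the safe fallback is the direct $56$-triple vertex enumeration from the first paragraph, which is purely mechanical and certainly finishes, at the cost of being less illuminating.
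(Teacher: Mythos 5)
Your outline is viable, but as written it is a plan rather than a proof: the step you yourself flag as ``the main obstacle'' is exactly the step that is missing. The assertion that ``among \eqref{equ:inequ2}--\eqref{equ:inequ8}, at most two can be tight at a point with $x>y$ unless that point is $Q'_2$ or $Q'_4$'' is the entire content of the lemma for the off-facet region, and you give no argument for it (nor is ``at most two tight'' quite the right criterion --- what you need is that no three tight constraints have linearly independent normals). Likewise, on the slice $x=y$ you correctly locate $Q'_1,Q'_3,Q'_5$ as points where the bounds \eqref{equ:Adv1}--\eqref{equ:Adv3} are attained, but ``one then verifies no other boundary-line intersection lies inside the polygon'' is again deferred: the slice is cut out by five or six nontrivial halfplanes, so showing it is a triangle requires checking the remaining pairwise intersections (or exhibiting each of the six lines as passing through one of the three claimed vertices or lying outside). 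There is also a small algebra slip: on $x=y$, \eqref{equ:inequ8} collapses to $-(2\alpha_d-1)x-\alpha_d z\geq-\alpha_d$, not $-(\alpha_d-1)x-2\alpha_d z\geq-\alpha_d$; harmless here since that constraint is redundant, but it signals that the ``mechanical'' verifications have not actually been done. The $\binom{8}{3}=56$-triple enumeration fallback would indeed terminate and is legitimate (the set is a bounded polyhedron, so every extreme point is a basic feasible solution), but it too is only announced, not executed.

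For comparison, the paper sidesteps your hard case entirely by slicing with the plane $\mathcal{S}=\{x+2y+z=1\}$, which is \emph{not} a face but an interior cross-section on which the system collapses to four inequalities in $(x,z)$ with extreme points $Q'_1,Q'_2,Q'_3$. Points off $\mathcal{S}$ are then handled by an explicit convex decomposition: from \eqref{equ:inequ5}, \eqref{equ:inequ6}, \eqref{equ:inequ1} one gets $x+2y+z\leq\tfrac32$ with equality only at $Q'_5$, and from \eqref{equ:inequ7}, \eqref{equ:inequ2}, \eqref{equ:inequ3} one gets $x+2y+z\geq\tfrac12$ with equality only at $Q'_4$, so any other point is a nontrivial convex combination of $Q'_5$ (resp.\ $Q'_4$) and a point of $\mathcal{S}\cap\mathcal{C}'_d$. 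If you want to keep your facet-based route, you need a substitute for this: either carry out the tight-constraint case analysis for $x>y$ in full, or adopt the paper's device of bounding the linear functional $x+2y+z$ and identifying its equality cases, which is what makes the off-slice reduction a two-line computation rather than a combinatorial case check.
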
 

\begin{proof}
For this consider the plane 
\begin{equation}
\mathcal{S}:=\lset (x,y,z)\in \R^3~:~y = \frac{1}{2}(1-x-z)\rset.
\label{equ:Splane}
\end{equation} 
Inserting $y = \frac{1}{2}(1-x-z)$ into the inequalities \eqref{equ:inequ1},\eqref{equ:inequ2},\ldots ,\eqref{equ:inequ8} leads to the set of inequalities 
\begin{align}
3x+z&\geq 1, \label{equ:inequ9}\\
x + 3z &\leq 1,\label{equ:inequ10}\\
(2-\alpha_d)x - \alpha_d z &\geq -\alpha_d. \label{equ:inequ11}\\
-x + (2\alpha_d -1)z &\geq -1. \label{equ:inequ12}
\end{align} 
characterizing the $x,z\in \R$ such that $(x,\frac{1}{2}(1-x-z),z)\in \mathcal{S}\cap \mathcal{C}_d'$. By computing the intersection points of the lines given by the equality cases of the inequalities \eqref{equ:inequ9},\eqref{equ:inequ10} and \eqref{equ:inequ12} it is easy to see that the extreme points $\mathcal{S}\cap \mathcal{C}_d'$ are $Q'_1,Q'_2$ and $Q'_3$. Note, that \eqref{equ:inequ11} turns out to be redundant for the definition of this convex set. See Fig.~\ref{fig:PlaneCutSimp} for a picture of how $\mathcal{S}$ intersects the set $\mathcal{C}_d'$.

\begin{figure*}[t!]
        \centering
        \includegraphics[scale=0.7]{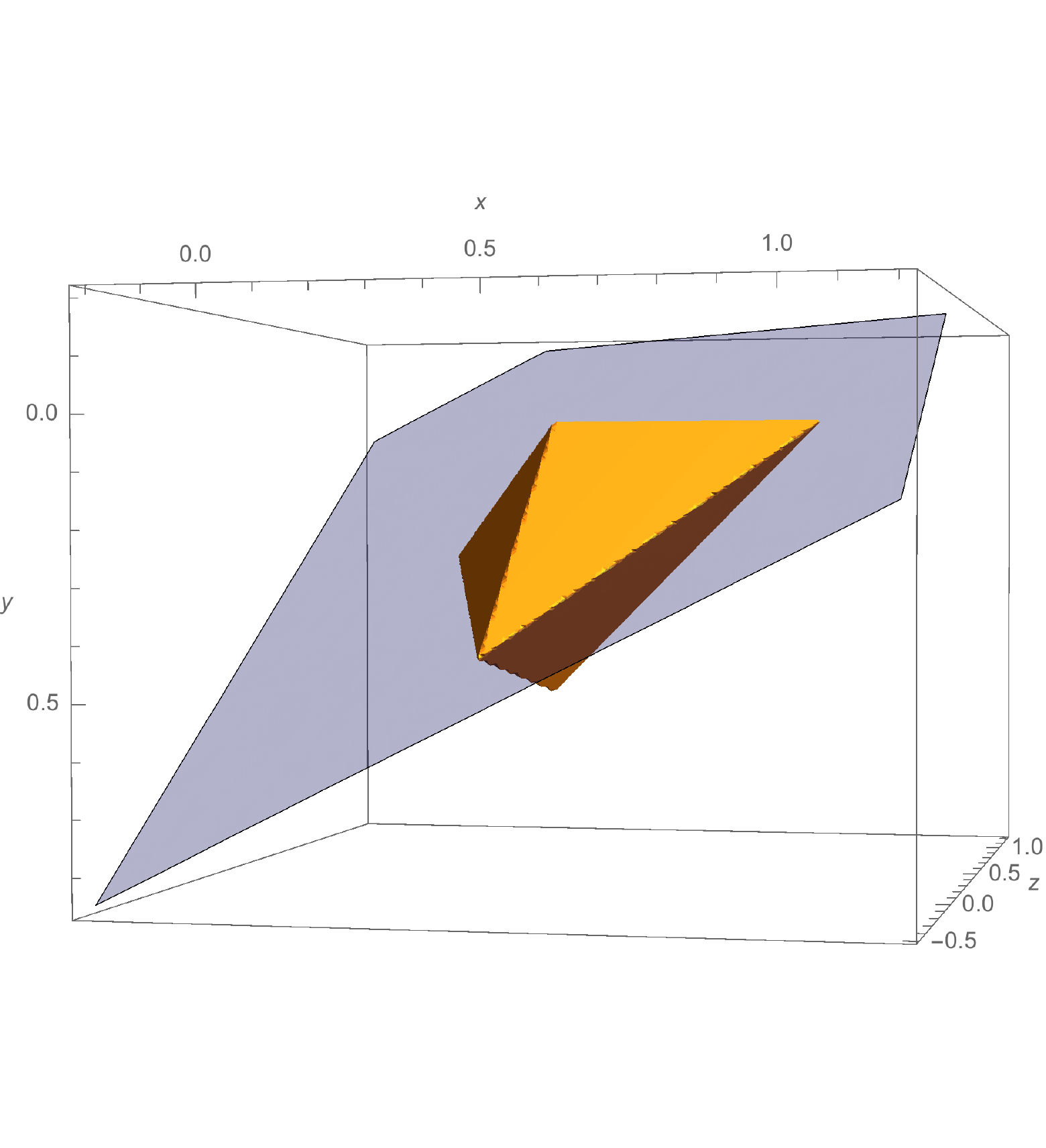}
        \caption{The set $\mathcal{C}'_5$ together with the intersecting plane $\mathcal{S}$.}
        \label{fig:PlaneCutSimp}
\end{figure*}

Let $(x,y,z)\in \mathcal{C}'_d\setminus (\lset Q'_5, Q'_4\rset\cup \mathcal{S})$ and assume first that $y>\frac{1}{2}(1-x-z)$. Now, consider
\[
\lambda := 4\lb y - \frac{1}{2}\lb 1- x-z\rb\rb >0.
\]
We will show that $\lambda < 1$ and that 
\begin{equation}
(x',y',z') = \frac{1}{1-\lambda} \lb (x,y,z) - \lambda Q'_5\rb\in \mathcal{S}\cap\mathcal{C}'_d,
\label{equ:pbar1}
\end{equation}
showing that $(x,y,z)$ is not an extreme point of $\mathcal{C}'_d$. Note that by \eqref{equ:inequ5},\eqref{equ:inequ6}, and \eqref{equ:inequ1} we have that 
\begin{equation}
x+2y + z = \frac{d+2}{2d}\lb x + y + 2z\rb + \frac{d-1}{d}\lb x + y - (\alpha_d-1)z\rb + \frac{1}{2}\lb y-x\rb\leq \frac{3}{2}.
\label{equ:inequSpec1}
\end{equation}
We have equality in \eqref{equ:inequSpec1} iff we have equality in \eqref{equ:inequ5},\eqref{equ:inequ6}, and \eqref{equ:inequ1}. This is equivalent to $(x,y,z)=Q'_5$, which we excluded previously. Therefore, our point $(x,y,z)$ satisfies \eqref{equ:inequSpec1} strictly, which is easily seen to be equivalent to $\lambda <1$. Finally, it is easy to verify that $(x',y',z')$ defined in \eqref{equ:pbar1} satisfies the inequalities \eqref{equ:inequ9},\eqref{equ:inequ10},\eqref{equ:inequ11}, and \eqref{equ:inequ12} showing that  $(x',y',z')\in \mathcal{S}\cap\mathcal{C}'_d$. 

Now, let $(x,y,z)\in \mathcal{C}'_d\setminus (\lset Q'_5, Q'_4\rset\cup \mathcal{S})$ with $y<\frac{1}{2}(1-x-z)$. Similarly to the previous argument we consider
\[
\lambda := -4\lb y - \frac{1}{2}\lb 1- x-z\rb\rb >0.
\]
Again, we will show that $\lambda < 1$ and that 
\begin{equation}
(x',y',z') = \frac{1}{1-\lambda} \lb (x,y,z) - \lambda Q'_2\rb\in \mathcal{S}\cap\mathcal{C}'_d,
\label{equ:pbar12}
\end{equation}
showing that $(x,y,z)$ is not an extreme point of $\mathcal{C}'_d$. By \eqref{equ:inequ7},\eqref{equ:inequ2}, and \eqref{equ:inequ3} we have that 
\begin{equation}
x+2y + z = \frac{1}{2}\lb 2x + y + z\rb + \frac{2(d-1)}{d}\lb y + \alpha_d z\rb + \frac{d+2}{d}\lb y-z\rb\geq \frac{1}{2}.
\label{equ:inequSpec12}
\end{equation}
We have equality in \eqref{equ:inequSpec12} iff we have equality in \eqref{equ:inequ7},\eqref{equ:inequ2}, and \eqref{equ:inequ3}. This is equivalent to $(x,y,z)=Q'_4$, which we excluded previously. Therefore, our point $(x,y,z)$ satisfies \eqref{equ:inequSpec12} strictly, which is easily seen to be equivalent to $\lambda <1$. Finally, it is easy to verify that $(x',y',z')$ defined in \eqref{equ:pbar12} satisfies the inequalities \eqref{equ:inequ9},\eqref{equ:inequ10},\eqref{equ:inequ11}, and \eqref{equ:inequ12} showing that  $(x',y',z')\in \mathcal{S}\cap\mathcal{C}'_d$. 

The above argument shows that there are no further extreme points $(x,y,z)\in \mathcal{C}'_d\setminus (\lset Q'_5, Q'_4\rset\cup \mathcal{S})$ of the set $\mathcal{C}'_d$. Indeed, $Q'_1,\ldots ,Q'_5$ are all the extreme points of $\mathcal{C}'_d$ finishing the proof.
\end{proof}

\bibliographystyle{IEEEtran}
\bibliography{mybibliography}

\end{document}